\documentclass{article}

 \newif\iflong
\newif\ifshort

 \longtrue

\iflong
\else
\shorttrue
\fi

 \usepackage[preprint]{neurips_2021}

\usepackage{amsmath,todonotes,amsthm,amssymb}

\usepackage{enumitem}
\setitemize{itemsep=0pt}

\usepackage[utf8]{inputenc}  \usepackage[T1]{fontenc}     \usepackage{hyperref}        \usepackage{cleveref}

\usepackage{url}             \usepackage{booktabs}        \usepackage{amsfonts}        \usepackage{nicefrac}        \usepackage{microtype}       \usepackage{xcolor}

 \usepackage{boxedminipage,xspace}

\newcommand{\cc}[1]{{\mbox{\textnormal{\textsf{#1}}}}\xspace}      
\newcommand{\SB}{\{\,}  \newcommand{\SM}{\;{|}\;}  \newcommand{\SE}{\,\}}  
\newcommand{\Nat}{\mathbb{N}}

\newcommand{\NP}{\cc{NP}}
\newcommand{\coNP}{\cc{co-NP}}
\newcommand{\FPT}{\cc{FPT}}
\newcommand{\XP}{\cc{XP}}
\newcommand{\W}{{\cc{W}}}

\newcommand{\YES}{\cc{Yes}}
\newcommand{\NO}{\cc{No}}

\newcommand{\tw}{\operatorname{tw}}
\newcommand{\tcw}{\operatorname{tcw}}
\newcommand{\cut}{\operatorname{cut}}
\newcommand{\adh}{\operatorname{adh}}
\newcommand{\tor}{\operatorname{tor}}
\newcommand{\loc}{\operatorname{loc}}
\newcommand{\con}{\operatorname{con}}
\newcommand{\inn}{\operatorname{inn}}
\newcommand{\fes}{\ensuremath{\operatorname{fen}}}
\newcommand{\lfes}{\ensuremath{\operatorname{lfen}}}

\newcommand{\bigoh}{\mathcal{O}}

\newcommand{\PPP}{\mathcal{P}}
\newcommand{\III}{\mathcal{I}}
\newcommand{\QQQ}{\mathcal{Q}}

\newcommand{\FFF}{\mathcal{F}}

\newcommand{\XXX}{\mathcal{X}}
\newcommand{\RRR}{\mathcal{R}}
\newcommand{\DDD}{\Upsilon}
\newcommand{\PT}{\Psi}

\newcommand{\MFAS}{\textsc{MFAS}}
\newcommand{\RMC}{\textsc{RMC}}
\newcommand{\BNSL}{\textsc{BNSL}}
\newcommand{\UDMSS}{\textsc{UDMSS}}
\newcommand{\DMSS}{\textsc{DMSS}}
\newcommand{\MSS}{\textsc{MSS}}
\newcommand{\BNSLneq}{\textsc{BNSL}$^{\neq0}$}
\newcommand{\BNSLadd}{\textsc{BNSL}$^+$}
\newcommand{\BNSLaddq}{\textsc{BNSL}$^+_{\leq}$}
\newcommand{\PL}{\textsc{PL}}
\newcommand{\PLneq}{\textsc{PL}$^{\neq0}$}
\newcommand{\PLadd}{\textsc{PL}$^+$}
\newcommand{\PLaddq}{\textsc{PL}$^+_{\leq}$}
\newcommand{\score}{\texttt{score}}
  \newcommand{\parentsets}{\Gamma_f}
\newcommand{\inneighb}{P_{\rightarrow}}
\newcommand{\Con}{\texttt{\textup{Con}}}
\newcommand{\trcl}{\texttt{\textup{trcl}}}
\newcommand{\opt}{\texttt{\textup{opt}}}

\newtheorem{theorem}{Theorem}
\newtheorem{claim}{Claim}
\newtheorem{observation}[theorem]{Observation}
\newtheorem{proposition}[theorem]{Proposition}

\newtheorem{lemma}[theorem]{Lemma}
\newtheorem{redrule}{Reduction Rule}

\title{The Complexity of Bayesian Network Learning: Revisiting the Superstructure}

\author{Robert Ganian and Viktoriia Korchemna\\
Algorithms and Complexity Group, TU Wien\\
\texttt{\{rganian,vkorchemna\}@ac.tuwien.ac.at}
                                                                  }

\begin{document}

\maketitle

\begin{abstract}
We investigate the parameterized complexity of Bayesian Network Structure Learning (BNSL), a classical problem that has received significant attention in empirical but also purely theoretical studies. We follow up on previous works that have analyzed the complexity of BNSL w.r.t.\ the so-called \emph{superstructure} of the input. While known results imply that BNSL is unlikely to be fixed-parameter tractable even when parameterized by the size of a vertex cover in the superstructure, here we show that a different kind of parameterization---notably by the size of a feedback edge set---yields fixed-parameter tractability. We proceed by showing that this result can be strengthened to a localized version of the feedback edge set, and provide corresponding lower bounds that complement previous results to provide a complexity classification of BNSL w.r.t.\ virtually all well-studied graph parameters.

We then analyze how the complexity of BNSL depends on the representation of the input. In particular, while the bulk of past theoretical work on the topic assumed the use of the so-called \emph{non-zero representation}, here we prove that if an \emph{additive representation} can be used instead then BNSL becomes fixed-parameter tractable even under significantly milder restrictions to the superstructure, notably when parameterized by the treewidth alone. Last but not least, we show how our results can be extended to the closely related problem of Polytree Learning.
\end{abstract}

\section{Introduction}

Bayesian networks are among the most prominent graphical models for probability distributions. The key feature of Bayesian networks is that they represent conditional dependencies between random variables via a directed acyclic graph; the vertices of this graph are the variables, and an arc $ab$ means that the distribution of variable $b$ depends on the value of $a$. One beneficial property of Bayesian networks is that they can be used to infer the distribution of random variables in the network based on the values of the remaining variables.

The problem of constructing a Bayesian network with an optimal network structure is \NP-hard, and remains \NP-hard even on highly restricted instances~\cite{Chickering95}. This initial negative result has prompted an extensive investigation of the problem's complexity, with the aim of identifying new tractable fragments as well as the boundaries of its intractability~\cite{KorhonenP13,OrdyniakS13,KorhonenP15,GruttemeierK20,ElidanG08,Dasgupta99,GaspersKLOS15}. The problem---which we simply call \textsc{Bayesian Network Structure Learning} (\BNSL)---can be stated as follows: given a set of $V$ of variables  (represented as vertices), a family $\FFF$ of \emph{score functions} which assign each variable $v\in V$ a score based on its \emph{parents}, and a target value $\ell$, determine if there exists a directed acyclic graph over $V$ that achieves a total score of at least $\ell$\footnote{Formal definitions are provided in Section~\ref{sec:prelims}. We consider the decision version of \BNSL\ for complexity-theoretic reasons only; all of the provided algorithms are constructive and can output a network as a witness.}. 

To obtain a more refined understanding of the complexity of \BNSL, past works have analyzed the problem not only in terms of classical complexity but also from the perspective of \emph{parameterized complexity}~\cite{DowneyFellows13,CyganFKLMPPS15}. In parameterized complexity analysis, the tractability of problems is measured with respect to the input size $n$ and additionally with respect to a specified numerical \emph{parameter} $k$. In particular, a problem that is \NP-hard in the classical sense may---depending on the parameterization used---be \emph{fixed-parameter tractable} (\FPT), which is the parameterized analogue of polynomial-time tractability and means that a solution can be found in time $f(k)\cdot n^{\bigoh(1)}$ for some computable function $f$, or \W$[1]$-\emph{hard}, which rules out fixed-parameter tractability under standard complexity assumptions. The use of parameterized complexity as a refinement of classical complexity is becoming increasingly common and has been employed not only for \BNSL~\cite{KorhonenP13,OrdyniakS13,KorhonenP15}, but also for numerous other problems arising in the context of neural networks and artificial intelligence~\cite{GanianKOS18,SimonovFGP19,EibenGKS19,GanianO18}.

Unfortunately, past complexity-theoretic works have shown that \BNSL\ is a surprisingly difficult problem. In particular, not only is the problem \NP-hard, but it remains \NP-hard even when asking for the existence of extremely simple networks such as directed paths~\cite{Meek01} and is \W$[1]$-hard when parameterized by the \emph{vertex cover number} of the network~\cite{KorhonenP15}. In an effort to circumvent these lower bounds, several works have proposed to
 instead consider restrictions to the so-called \emph{superstructure}, which is a graph that, informally speaking, captures all potential dependencies between variables~\cite{TsamardinosBA06,PerrierIM08}. Ordyniak and Szeider~\cite{OrdyniakS13} studied the complexity of \BNSL\ when parameterized by the structural properties of the superstructure, and showed that parameterizing by the \emph{treewidth}~\cite{RobertsonSeymour86} of the superstructure is sufficient to achieve a weaker notion of tractability called \XP-\emph{tractability}. However, they also proved that \BNSL\ remains \W$[1]$-hard when parameterized by the treewidth of the superstructure~\cite[Theorem 3]{OrdyniakS13}. 

 \noindent
\textbf{Contribution.}\quad
Up to now, no ``implicit'' restrictions of the superstructure were known to lead to a fixed-parameter algorithm for \BNSL\ alone. More precisely, the only known fixed-parameter algorithms for the problem require that we place explicit restrictions on either the sought-after network or the parent sets on the input: \BNSL\ is known to be fixed-parameter tractable when parameterized by the number of arcs in the target network~\cite{GruttemeierK20}, the treewidth of an ``\emph{extended superstructure graph}'' which also bounds the maximum number of parents a variable can have~\cite{KorhonenP13}, or
 the number of parent set candidates plus the treewidth of the superstructure~\cite{OrdyniakS13}.
     Moreover,
a closer analysis of the reduction given by Ordyniak and Szeider~\cite[Theorem 3]{OrdyniakS13} reveals that \BNSL\ is also \W$[1]$-hard when parameterized by the \emph{treedepth}, \emph{pathwidth}, and even the \emph{vertex cover number} of the superstructure alone. 
The vertex cover number is equal to the vertex deletion distance to an edgeless graph, and hence their result essentially rules out the use of the vast majority of graph parameters; among others, any structural parameter based on vertex deletion distance.

As our first conceptual contribution, we show that a different kind of graph parameters---notably, parameters that are based on edge deletion distance---give rise to fixed-parameter algorithms for \BNSL\ in its full generality, without requiring any further explicit restrictions on the target network or parent sets. Our first result in this direction concerns the \emph{feedback edge number} (\fes), which is the minimum number of edges that need to be deleted to achieve acyclicity. In Theorem~\ref{thm:kernel} we show not only that \BNSL\ is fixed-parameter tractable when parameterized by the \fes\ of the superstructure, but also provide a polynomial-time preprocessing algorithm that reduces any instance of \BNSL\ to an equivalent one whose number of variables is linear in the \fes\ (i.e., a \emph{kernelization}~\cite{DowneyFellows13,CyganFKLMPPS15}).

    Since \fes\ is a highly ``restrictive'' parameter---its value can be large even on simple superstructures such as collections of disjoint cycles---we proceed by asking whether it is possible to lift fixed-parameter tractability to a more relaxed way of measuring distance to acyclicity. For our second result, we introduce the \emph{local feedback edge number} (\lfes), which 
intuitively measures the maximum edge deletion distance to acyclicity for cycles intersecting any particular vertex in the superstructure.
 In Theorem~\ref{thm:lfes}, we show that \BNSL\ is also fixed-parameter tractable when prameterized by \lfes; we also show that this comes at the cost of \BNSL\ not admitting any polynomial-time preprocessing procedure akin to Theorem~\ref{thm:kernel} when parameterized by \lfes. We conclude our investigation in the direction of parameters based on edge deletion distance by showing that \BNSL\ parameterized by \emph{treecut width}~\cite{MarxW14,Wollan15,Ganian0S15}, a recently discovered edge-cut based counterpart to treewidth, remains \W$[1]$-hard (Theorem~\ref{thm:tcwhard}). An overview of these complexity-theoretic results is provided in Figure~\ref{fig:overview}.

\begin{figure}[htb]
\begin{minipage}[c]{0.5\textwidth}
\vspace{-0.1cm}
\includegraphics[width=\textwidth]{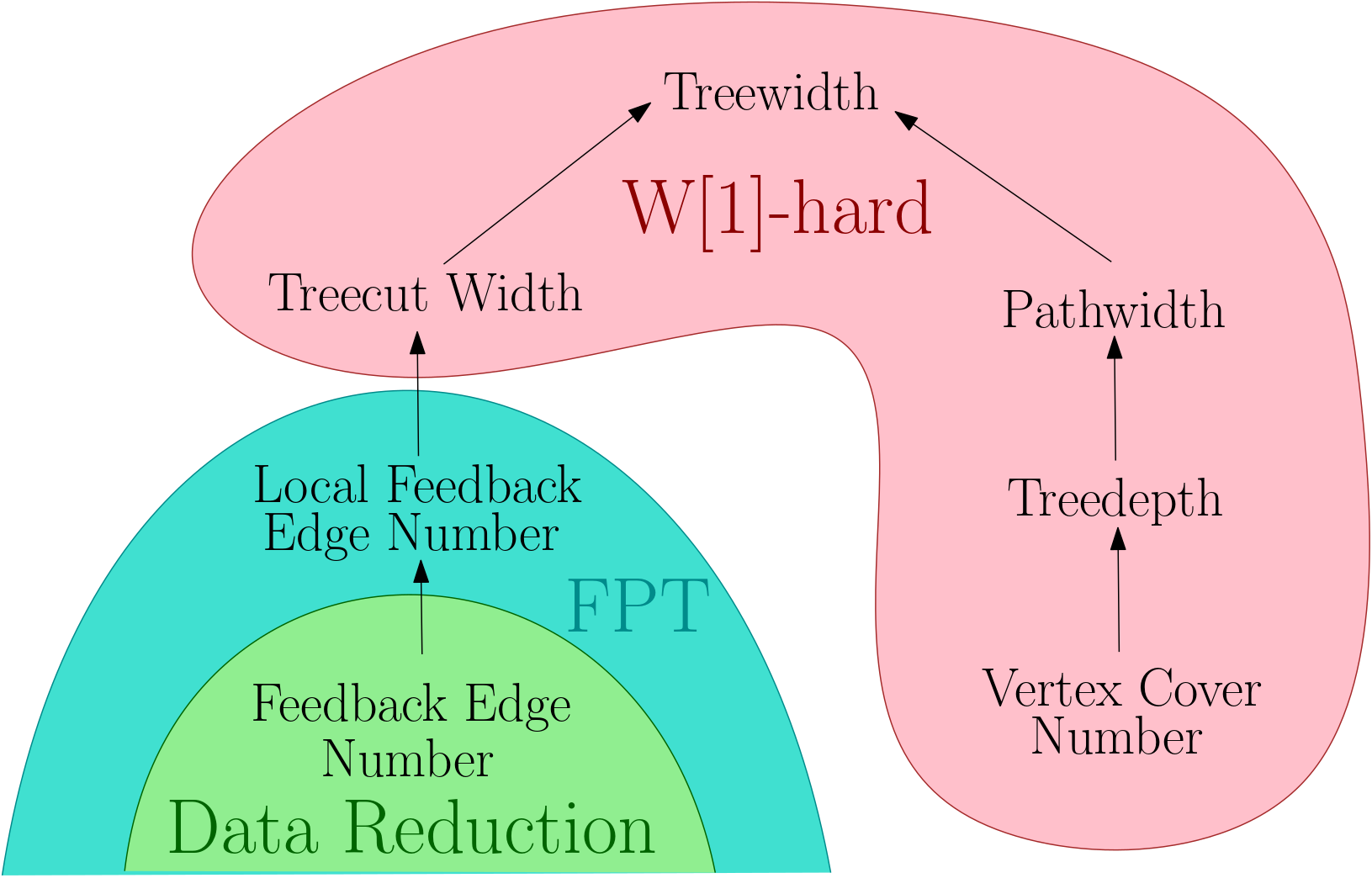}\vspace{-0.6cm}
\end{minipage}
\hfill
\begin{minipage}[c]{0.45\textwidth}
\caption{The complexity landscape of \BNSL\ with respect to parameterizations of the superstructure. Arrows point from more restrictive parameters to less restrictive ones. Results for the three graph parameters on the left side follow from this paper, while all other \W$[1]$-hardness results follow from the reduction by Ordyniak and Szeider~\cite[Theorem 3]{OrdyniakS13}.}
\label{fig:overview}
\end{minipage}
       \end{figure}

As our second conceptual contribution, we show that \BNSL\ becomes significantly easier when one can use an \emph{additive representation} of the scores rather than the \emph{non-zero representation} that was considered in the vast majority of complexity-theoretic works on \BNSL\ to date~\cite{KorhonenP13,OrdyniakS13,KorhonenP15,GruttemeierK20,ElidanG08,GaspersKLOS15}. The additive representation is inspired by known heuristics for \BNSL~\cite{ScanagattaCCZ15,ScanagattaCCZ16} and utilizes a succinct encoding of the score function which assumes that the scores for parent sets can be decomposed into a sum of the scores of individual variables in the parent set; a discussion and formal definitions are provided in Section~\ref{sec:prelims}. In Theorem~\ref{thm:additive}, we show that if the additive representation can be used, \BNSL\ becomes fixed-parameter tractable when parameterized by the treewidth of the superstructure (and hence under every parameterization depicted in Figure~\ref{fig:overview}). Motivated by the empirical usage of the additive representation, we also consider the case where we additionally impose a bound $q$ on the number of parents a vertex can accept; we show that the result of Theorem~\ref{thm:additive} also covers this case if $q$ is taken as an additional parameter, and otherwise rule out fixed-parameter tractability using an intricate reduction (Theorem~\ref{thm:addwhard}).
 
For our third and final conceptual contribution, we show how our results can be adapted for the emergent problem of \textsc{Polytree Learning} (\PL), a variant of \BNSL\ where we require that the network forms a polytree. The crucial advantage of such networks is that they allow for a more efficient solution of the inference task~\cite{PearlBook,GuoHsu02}, and the complexity of \PL\ has been studied in several works~\cite{GKM21,GaspersKLOS15,SafaeiMS13}. We show that all our results for \BNSL\ can be adapted to \PL, albeit in some cases it is necessary to perform non-trivial modifications.
 Furthermore, we observe that unlike \BNSL, \PL\ becomes polynomial-time tractable when the additive representation is used (Observation~\ref{obs:PLeasy}); this matches the ``naive'' expectation that learning simple networks would be easier than \BNSL\ in its full generality. As our concluding result, we show that this expectation is in fact not always validated: while \PL\ was recently shown to be \W$[1]$-hard when parameterized by the number of so-called \emph{dependent vertices}~\cite{GKM21}, in Theorem~\ref{thm:depfpt} we prove that \BNSL\ is fixed-parameter tractable under that same parameterization.

\section{Preliminaries}
\label{sec:prelims}

For an integer $i$, we let $[i]=\{1,2,\dots,i\}$ and $[i]_0=[i]\cup\{0\}$.
We denote by $\Nat$ the set of natural numbers, by $\Nat_0$ the set
$\Nat \cup \{0\}$. 
 \ifshort
We refer to the handbook by Diestel~\cite{Diestel12} for
standard terminology on directed as well as undirected graphs.
  The \emph{skeleton} (sometimes called the \emph{underlying undirected graph}) of a directed graph (a \emph{digraph}) $D=(V,A)$ is the undirected graph $G'=(V,E)$  such that $vw\in E$ if $vw\in A$ or $wv\in A$. A digraph is a \emph{polytree} if its skeleton is a forest.

When comparing two numerical parameters $\alpha, \beta$ of graphs, we say that $\alpha$ is more \emph{restrictive} than $\beta$ if there exists a function $f$ such that $\beta(G)\leq f(\alpha(G))$ holds for every graph $G$. 
We refer to the standard sources for the fundamentals of parameterized complexity,
 including the definitions of \emph{fixed-parameter tractability}, \emph{parameterized reductions}, \W$[1]$-\emph{hardness} and \emph{treewidth}~\cite{CyganFKLMPPS15,DowneyFellows13,Niedermeier06}. 
 \fi

\iflong
We refer to the handbook by Diestel~\cite{Diestel12} for
standard graph terminology. In this paper, we will consider directed as well as undirected simple graphs. If $G=(V,E)$ is an undirected graph and $\{v,w\}\in E$, we will often use $vw$ as shorthand for $\{v,w\}$; we will also sometimes use $V(G)$ to denote its vertex set. Moreover, we let $N_G(v)$ denote the set of \emph{neighbors} of $v$, i.e., $\SB u\in V\SM vu \in E\SE$. We extend this notation to sets as follows: $N_G(X)=\SB u\in V\setminus X \SM \exists x\in X: ux\in E(G)$. For a set $X$ of vertices, let $A_X$ denote the set of all possible arcs over $X$.

If $D=(V,A)$ is a directed graph (i.e., a \emph{digraph}) and $(v,w)\in A$, we will similarly use $vw$ as shorthand for $(v,w)$. We also let $P_D(v)$ denote the set of \emph{parents} of $v$, i.e., $\SB u\in V\SM uv \in A\SE$ (there are sometimes called \emph{in-neighbors} in the literature, while the notion of \emph{out-neighbors} is defined analogously). In both cases, we may drop $G$ or $D$ from the subscript if the (di)graph is clear from the context. The \emph{degree} of $v$ is $|N(v)|$, and for digraphs we use the notions of \emph{in-degree} (which is equal to $|P(v)|$) and \emph{out-degree} (the number of arcs originating from the given vertex).

 The \emph{skeleton} (sometimes called the \emph{underlying undirected graph}) of a digraph $G=(V,A)$ is the undirected graph $G'=(V,E)$  such that $vw\in E$ if $vw\in A$ or $wv\in A$. A digraph is a \emph{polytree} if its skeleton is a forest.
      
When comparing two numerical parameters $\alpha, \beta$ of graphs, we say that $\alpha$ is more \emph{restrictive} than $\beta$ if there exists a function $f$ such that $\beta(G)\leq f(\alpha(G))$ holds for every graph $G$. In other words, $\alpha$ is more restrictive than $\beta$ if and only if the following holds: whenever all graphs in some graph class $\mathcal{H}$ have $\alpha$ upper-bounded by a constant, all graphs in $\mathcal{H}$ also have $\beta$ upper-bounded by a constant. Observe that in this case a fixed-parameter algorithm parameterized by $\beta$ immediately implies a fixed-parameter algorithm parameterized by $\alpha$, while \W$[1]$-hardness behaves in the opposite way.
\fi

 \noindent \textbf{Problem Definitions.}\quad
Let $V$ be a set of vertices and $\FFF=\SB f_v: 2^{V\setminus \{v\}}\rightarrow \Nat_0 \SM v\in V\SE$ be a family of \emph{local score functions}. 
 For a digraph $D=(V,A)$, we define its score as follows: $\score(D)=\sum_{v\in V}f_v(P_D(v))$, where $P_D(v)$ is the set of vertices of $D$ with an outgoing arc to $v$ (i.e., the \emph{parent set} of $v$ in $D$). We can now formalize our problem of interest~\cite{OrdyniakS13,GruttemeierK20}.
  
\noindent
\begin{center}
\begin{boxedminipage}{0.98 \columnwidth}
\textsc{Bayesian Network Structure Learning} (\BNSL)\\[5pt]
\begin{tabular}{l p{0.83 \columnwidth}}
Input: & A set $V$ of vertices, a family $\FFF$ of local score functions, and an integer $\ell$.\\
Question: & Does there exist an acyclic digraph $D=(V,A)$ such that $\score(D)\geq \ell$?
\end{tabular}
\end{boxedminipage}
\end{center}

\textsc{Polytree Learning} (\PL) is defined analogously, with the only difference that there $D$ is additionally required to be a polytree~\cite{GKM21}. We call $D$ a \emph{solution} for the given instance.

Since both $V$ and $\FFF$ are assumed to be given on the input of our problems, an issue that arises here is that an explicit representation of $\FFF$ would be exponentially larger than $|V|$. A common way to potentially circumvent this is to use a \emph{non-zero representation} of the family $\FFF$, i.e., where we only store values for $f_v(P)$ that are different than zero. 
\iflong 
This model has been used in a large number of works studying the complexity of \BNSL\ and \PL~\cite{KorhonenP13,OrdyniakS13,KorhonenP15,GruttemeierK20,GaspersKLOS15,GKM21} and is known to be strictly more general than, e.g., the bounded-arity representation where one only considers parent sets of arity bounded by a constant~\cite[Section 3]{OrdyniakS13}. 
\fi
\ifshort
This model has been used in the vast majority of works studying the complexity of \BNSL\ and \PL~\cite{KorhonenP13,OrdyniakS13,KorhonenP15,GruttemeierK20,GaspersKLOS15,GKM21}.
\fi
Let $\parentsets(v)$ be the collection of candidate parents of $v$ which yield a non-zero score; formally, $\parentsets(v)=\SB Z \SM f_v(Z)\neq 0\SE$, and the input size $|\III|$ of an instance $\III=(V,\FFF,\ell)$ is simply defined as $|V|+\ell+\sum_{v\in V, P\in \parentsets(v)}|P|$\footnote{
We remark that the non-zero representation could be strengthened even further by omitting each parent set $Z$ of $v$ which contains a proper subset $Z'$ such that $f_v(Z)\leq f_v(Z')$. This preprocessing step, however, does not have an impact on any of the results presented in this paper.}.

\ifshort

A natural way to think about and exploit the structure of inter-variable dependencies of an instance $\III$ is to consider its \emph{superstructure graph} $G_\III=(V,E)$, where $ab\in E$ if $a$ occurs in at least one candidate set in $\Gamma_f(b)$ (or vice-versa).
\fi
\iflong
Let $\inneighb(v)$ be the set of all parents which appear in $\parentsets(v)$, i.e., $a\in \inneighb(v)$ if and only if $\exists Z\in \parentsets(v): a\in Z$.
 A natural way to think about and exploit the structure of inter-variable dependencies laid bare by the non-zero representation is to consider the \emph{superstructure graph} $G_\III=(V,E)$ of a \BNSL\ (or \PL) instance $\III=(V,\FFF,\ell)$, where $ab\in E$ if and only if either $a\in \inneighb(b)$, or $b\in \inneighb(a)$, or both.
\fi
An example is provided in Figure~\ref{fig:superstructure}.

\begin{figure}
\begin{minipage}[c]{0.4\textwidth}
\includegraphics[width=\textwidth]{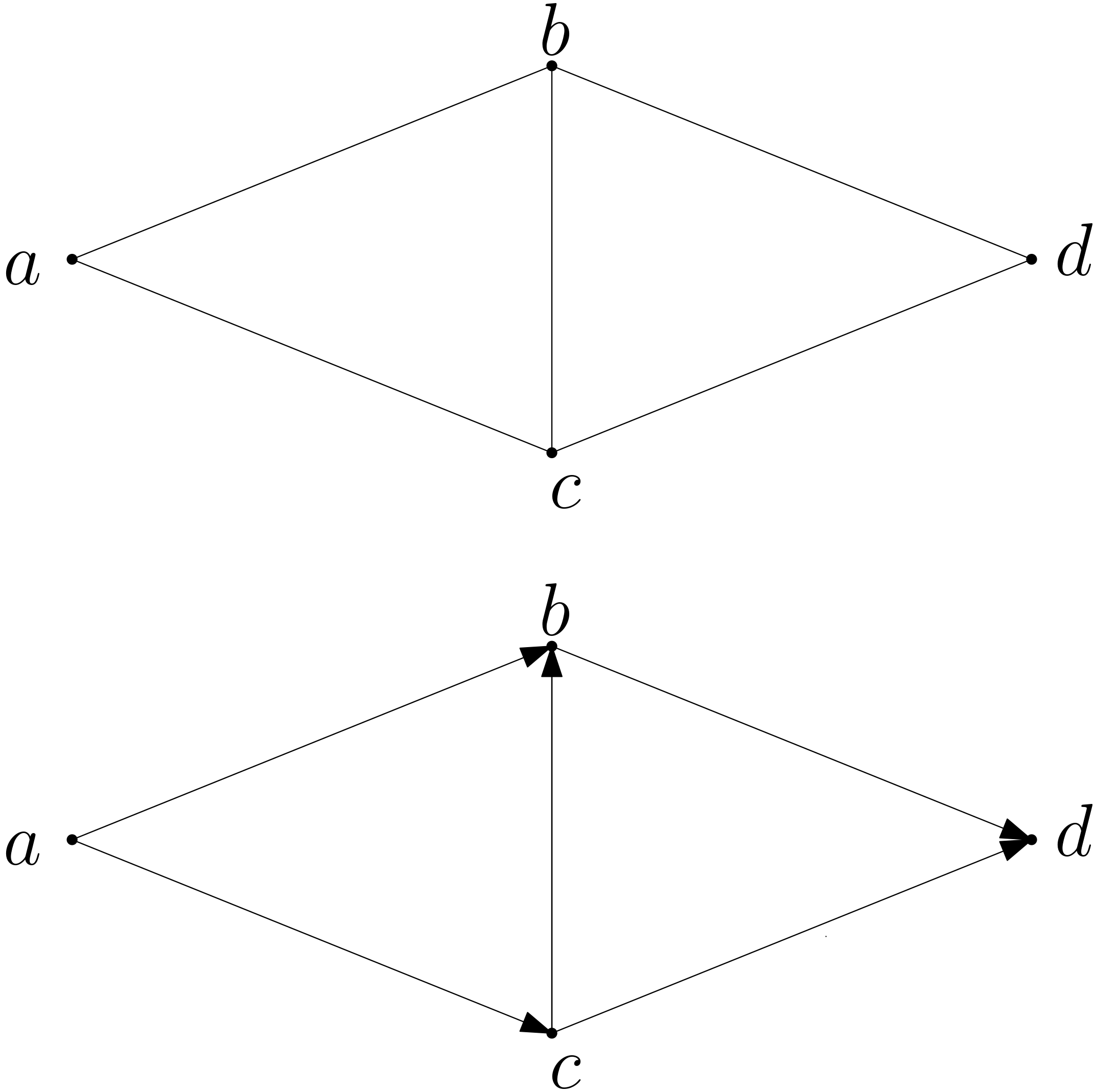}
\end{minipage}
\hfill
\begin{minipage}[c]{0.5\textwidth}
\caption{Example of a superstructure graph (on the top) and a suitable solution DAG (on the bottom) when:\\
$f_a (\{b\})=f_a (\{c\})=1$, $f_a(\{b,c\})=2$;\\
$f_b (\{a\})=f_b (\{c\})=1$, $f_b(\{a,c\})=3$;\\
$f_c (\{a\})=3$, $f_c(\{b\})=2$;\\
$f_d(\{b,c\})=1$;\\
$l=6.$\\
Note that in the depicted DAG the scores of $b$ and $c$ are equal to 3, the score of $d$ is equal to 1. Parent set of $a$ is empty; as we assume the non-zero representation and $f_a(\emptyset)$ is not specified in the input, we conclude that $f_a(\emptyset)=0$. Therefore the total score is $0+3+3+1=7\geq6=l$.}

\label{fig:superstructure}
\end{minipage}
\end{figure}

Naturally, families of local score functions may be exponentially larger than $|V|$ even when stored using the non-zero representation. In this paper, we also consider a second representation of $\FFF$ which is guaranteed to be polynomial in $|V|$: in the \emph{additive representation}, we require that for every vertex $v\in V$ and set $Q=\{q_1,\dots,q_m\}\subseteq V\setminus \{v\}$, $f_v(Q)=f_v(\{q_1\})+\dots+f_v(\{q_m\})$. Hence, each cost function $f_v$ can be fully characterized by storing at most $|V|$-many entries of the form $f_v(x):=f_v(\{x\})$ for each $x\in V\setminus \{v\}$. To avoid overfitting, one may optionally impose an additional constraint: an upper bound $q$ on the size of any parent set in the solution\iflong (or, equivalently, $q$ is a maximum upper-bound on the in-degree of the sought-after acyclic digraph $D$)\fi.

While not every family of local score functions admits an additive representation, the additive model is similar in spirit to the models used by some practical algorithms for \BNSL. For instance, the algorithms of Scanagatta, de Campos, Corani and Zaffalon~\cite{ScanagattaCCZ15,ScanagattaCCZ16}, which can process \BNSL\ instances with up to thousands of variables, approximate the real score functions by adding up the known score functions for two parts of the parent set and applying a small, logarithmic correction. 
Both of these algorithms also use the aforementioned bound $q$ for the parent set size. In spite of this connection to practice and the representation's streamlined nature, we are not aware of any prior works that considered the additive representation in complexity-theoretic studies of \BNSL\ and \PL.
\ifshort
The superstructure graph for the additive representation can be defined in an analogous way as for the non-zero representation: an edge $uv$ simply captures a ``suspected dependencey' between variables $u$ and $v$ (i.e., one receives a positive score for depending on the other). 
\fi

\iflong
As before, in the additive representation we will also only store scores for parents of $v$ which yield a non-zero score, and can thus define $\inneighb(v)=\SB z \SM f_v(z)\neq 0 \SE$, as for the non-zero representation. This in turn allows us to define the superstructure graphs in an analogous way as before: $G_\III=(V,E)$ where $ab\in E$ if and only if $a\in \inneighb(b)$, $b\in \inneighb(a)$, or both.
\fi

To distinguish between these models, we use \BNSLneq, \BNSLadd, and \BNSLaddq  to denote \textsc{Bayesian Network Structure Learning} with the non-zero representation, the additive representation, and the additive representation and the parent set size bound $q$, respectively\ifshort~(and analogously for \PL).\fi
\iflong. The same notation will also be used for \textsc{Polytree Learning}---for example, an instance of $\PL^+_{\leq}$ will consist of $V$, a family $\FFF$ of local score functions in the additive representation, and integers $\ell$, $q$, and the question is whether there exists a polytree $D=(V,A)$ with in-degree at most $q$ and $\score(D)\geq \ell$.
\fi

In our algorithmic results, we will often use $G=(V,E)$ to denote the superstructure graph of the input instance $\III$. Without any loss of generality, we will also assume that $G$ is connected.
\iflong
Indeed, given an algorithm $\mathbb{A}$ that solves $\BNSL$ on connected instances, we may solve disconnected instances of $\BNSL$ by using $\mathbb{A}$ to find the maximum score $\ell_C$ for each connected component $C$ of $G$ independently, and we may then simply compare $\sum_{C\text{ is a connected component of }G}\ell_C$ with $\ell$.
\fi

\iflong
\smallskip
\noindent \textbf{Parameterized Complexity.}\quad
In parameterized
algorithmics~\cite{CyganFKLMPPS15,DowneyFellows13,Niedermeier06} the
running-time of an algorithm is studied with respect to a parameter
$k\in\Nat_0$ and input size~$n$. The basic idea is to find a parameter
that describes the structure of the instance such that the
combinatorial explosion can be confined to this parameter. In this
respect, the most favorable complexity class is \FPT
(\textit{fixed-parameter tractable}) which contains all problems that
can be decided by an algorithm running in time $f(k)\cdot
n^{\bigoh(1)}$, where $f$ is a computable function. Algorithms with
this running-time are called \emph{fixed-parameter algorithms}. A less
favorable outcome is an \XP{} \emph{algorithm}, which is an algorithm
running in time $\bigoh(n^{f(k)})$; problems admitting such
algorithms belong to the class \XP. 

Showing that a problem is $\W[1]$-hard rules out the existence of a fixed-parameter algorithm under the well-established assumption that $\W[1]\neq \FPT$. This is usually done via a \emph{parameterized reduction}~\cite{CyganFKLMPPS15,DowneyFellows13} to some known $\W[1]$-hard problem. A parameterized reduction from a parameterized problem $\PPP$ to a parameterized problem $\QQQ$ is a function:
\begin{itemize}
\item which maps \YES-instances to \YES-instances and \NO-instances to \NO-instances,
\item which can be computed in time $f(k)\cdot
n^{\bigoh(1)}$, where $f$ is a computable function, and
\item where the parameter of the output instance can be upper-bounded by some function of the parameter of the input instance.
\end{itemize}

\smallskip
\noindent \textbf{Treewidth.}\quad
A \emph{nice tree-decomposition}~$\mathcal{T}$ of a graph $G=(V,E)$ is a pair 
$(T,\chi)$, where $T$ is a tree (whose vertices we call \emph{nodes}) rooted at a node $r$ and $\chi$ is a function that assigns each node $t$ a set $\chi(t) \subseteq V$ such that the following holds: 
\begin{itemize}
	\item For every $uv \in E$ there is a node
	$t$ such that $u,v\in \chi(t)$.
	\item For every vertex $v \in V$,
	the set of nodes $t$ satisfying $v\in \chi(t)$ forms a subtree of~$T$.
	\item $|\chi(\ell)|=1$ for every leaf $\ell$ of $T$ and $|\chi(r)|=0$.
	\item There are only three kinds of non-leaf nodes in $T$:
	\begin{itemize}
	         \item \textbf{Introduce node:} a node $t$ with exactly
          one child $t'$ such that $\chi(t)=\chi(t')\cup
          \{v\}$ for some vertex $v\not\in \chi(t')$.
        \item \textbf{Forget node:} a node $t$ with exactly
          one child $t'$ such that $\chi(t)=\chi(t')\setminus
          \{v\}$ for some vertex $v\in \chi(t')$.
        \item \textbf{Join node:} a node $t$ with two children $t_1$,
          $t_2$ such that $\chi(t)=\chi(t_1)=\chi(t_2)$.
	\end{itemize}
\end{itemize}

The \emph{width} of a nice tree-decomposition $(T,\chi)$ is the size of a largest set $\chi(t)$ minus~$1$, and the \emph{treewidth} of the graph $G$,
denoted $\tw(G)$, is the minimum width of a nice tree-decomposition of~$G$.
 Fixed-parameter algorithms are known for computing a nice tree-decomposition of optimal width~\cite{Bodlaender96,Kloks94}. 
      For $t \in V(T)$ we denote by $T_t$ the subtree of $T$ rooted at $t$. 
\fi

 \noindent \textbf{Graph Parameters Based on Edge Cuts.}\quad
Traditionally, the bulk of graph-theoretic research on structural parameters has focused on parameters that guarantee the existence of small vertex separators in the graph; these are  inherently tied to the theory of \emph{graph minors}~\cite{RobertsonS83,RobertsonSeymour86} and the vertex deletion distance. This approach gives rise not only to the classical notion of treewidth, but also to its well-known restrictions and refinements such as \emph{pathwidth}~\cite{RobertsonS83}, \emph{treedepth}~\cite{NesetrilOssonademendez12} and the \emph{vertex cover number}~\cite{FialaGK11,KobayashiT16}. The vertex cover number is the most restrictive parameter in this hierarchy.
 
However, there are numerous problems of interest that remain intractable even when parameterized by the vertex cover number. A recent approach developed for attacking such problems has been to consider parameters that guarantee the existence of small edge cuts in the graph; these are typically based on the edge deletion distance or, more broadly, tied to the theory of \emph{graph immersions}~\cite{Wollan15,MarxW14}. The parameter of choice for the latter is \emph{treecut width} ($\tcw$)~\cite{Wollan15,MarxW14,Ganian0S15,GanianKO21}, a counterpart to treewidth which has been successfully used to tackle some problems that remained intractable when parameterized by the vertex cover number~\cite{GanianO21}. \iflong For the purposes of this manuscript, it will be useful to note that graphs containing a vertex cover $X$ such that every vertex outside of $X$ has degree at most $2$ have treecut width at most $|X|$~\cite[Section 3]{GanianO21}.\fi

On the other hand, the by far most prominent parameter based on edge deletion distance is the \emph{feedback edge number} of a connected graph $G=(V,E)$, which is the minimum cardinality of a set $F\subseteq E$ of edges (called the \emph{feedback edge set}) such that $G-F$ is acyclic. The feedback edge number can be computed in quadratic time and has primarily been used to obtain fixed-parameter algorithms and polynomial kernels for problems where other parameterizations failed~\cite{GanianO21,BetzlerBNU12,BertertHHKN20,BevernFT20}.

Up to now, these were the only two edge-cut based graph parameters that have been considered in the broader context of algorithm design. This situation could be seen as rather unstisfactory in view of the large gap between the complexity of the richer class of graphs of bounded treecut width, and the significantly simpler class of graphs of bounded feedback edge number---for instance, the latter class is not even closed under disjoint union. Here, we propose a new parameter that lies ``between'' the feedback edge number and treecut width, and which can be seen as a localized relaxation of the feedback edge number: instead of measuring the total size of the feedback edge set, it only measures how many feedback edges can ``locally interfere with'' any particular part of the graph. 

Formally, for a connected graph $G=(V,E)$ and a spanning tree $T$ of $G$, let the \emph{local feedback edge set} at $v\in V$ be 
\ifshort
$E_{\loc}^T(v)=\{uw\in E\setminus E(T)~|~\text{ the unique path between }u\text{ and }w\text{ in }T\text{ contains }v\}.$
\fi
\iflong
\[E_{\loc}^T(v)=\{uw\in E\setminus E(T)~|~\text{ the unique path between }u\text{ and }w\text{ in }T\text{ contains }v\}.\]
\fi
The \emph{local feedback edge number of} $(G,T)$ (denoted $\lfes(G,T)$) is then equal to $\max_{v\in V} |E_{\loc}^T(v)|$, and the \emph{local feedback edge number of} $G$ is simply the smallest local feedback edge number among all possible spanning trees of $G$, i.e., $\lfes(G)=\min_{T\text{ is a spanning tree of }G} \lfes(G,T)$.

It is not difficult to show that the local feedback edge number is ``sandwiched'' between the feedback edge number and treecut width. We also show that computing it is \FPT.

\iflong
\begin{proposition}
\fi
\ifshort
\begin{proposition}
\fi
\label{pro:lfescompare}
For every graph $G$, $\tcw(G)\leq \lfes(G)+1$ and $\lfes(G)\leq \fes(G)$.
\end{proposition}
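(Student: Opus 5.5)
The second inequality is the easy one, so I would dispose of it first. Let $F$ be a minimum feedback edge set of the connected graph $G$, so $|F|=\fes(G)$ and $G-F$ is acyclic. Since $G$ is connected and $F$ is minimum, $G-F$ is connected: if it were not, re-adding some edge of $F$ joining two components would still leave a forest, contradicting minimality. Hence $T:=G-F$ is a spanning tree with $E\setminus E(T)=F$. For every $v$, $E_{\loc}^T(v)\subseteq F$, so $\lfes(G)\le\lfes(G,T)\le|F|=\fes(G)$.

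For the first inequality I would fix a spanning tree $T$ with $\lfes(G,T)=k:=\lfes(G)$ and build a tree-cut decomposition of $G$ whose decomposition tree is $T$ itself, with each bag equal to the single vertex $\{v\}$. I then bound the two quantities in the definition of treecut width (Wollan/Marx--Wollan): the adhesion of each tree edge, and the torso size of each node.

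\textbf{Adhesion.} Consider a tree edge $e=xy$ of $T$, splitting $T$ into components $T_x$ and $T_y$. The edges of $G$ crossing this cut are $e$ itself together with the non-tree edges $uw$ with $u\in T_x$ and $w\in T_y$. The $T$-path of every such non-tree edge uses $e$, and hence contains $x$, so it lies in $E_{\loc}^T(x)$. The adhesion is therefore at most $k+1$.

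\textbf{Torso size.} For a node $v$ with tree-neighbours $c_1,\dots,c_d$, the torso is obtained by contracting each branch $T_{c_i}$ to a single vertex. The branches with adhesion at most $2$ are then suppressed: their vertex is deleted, and if it has degree $2$ it is replaced by an edge. So the torso size is $1$ plus the number of branches of adhesion at least $3$. Such a branch is joined to $v$ by its tree edge and carries at least two non-tree edges leaving it. Each of these non-tree edges has a $T$-path passing through $v$, so it belongs to $E_{\loc}^T(v)$, and any one non-tree edge leaves at most two branches. Counting, the number of branches of adhesion at least $3$ is at most $2k/2=k$, so the torso size is at most $k+1$.

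Combining the two bounds, the width is $\max(\text{adhesion},\text{torso})\le k+1$, which gives $\tcw(G)\le\lfes(G)+1$. The main obstacle is getting the torso count right under the exact conventions of the three-center/torso definition. If the constant does not work out, I would refine the adhesion bound by noting that non-tree edges with both endpoints outside $\{x\}$ cross the cut as counted, and try to absorb the tree edge into the local count.
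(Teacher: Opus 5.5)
Your proof is correct and follows the paper's argument. For the second inequality, the complement of a spanning tree contains every $E^T_{\loc}(v)$. For the first, you use the same tree-cut decomposition (the spanning tree $T$ itself with singleton bags) and bound the torso by double-counting endpoints of edges of $E^T_{\loc}(v)$ over branches of adhesion at least $3$. Your adhesion bound of $\lfes(G)+1$ counts the tree edge itself, which makes it slightly more careful than the paper's stated bound of $\lfes(G)$; it still suffices, since the width is the maximum of adhesion and torso size.
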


\iflong
\begin{proof}
Let us begin with the second inequality. Consider an arbitrary spanning tree $T$ of $G$. 
  Then for every $v \in V(G)$, $E_{\loc}^T(v)$ is a subset of a feedback edge set corresponding to the spanning tree $T$, so $|E_{\loc}^T(v)| \leq \fes(G)$ and the claim follows.

To establish the first inequality, we will use the notation and definition of treecut width from previous work~\cite[Subsection 2.4]{GanianKO21}.
 Let $T$ be the spanning tree of $G$ with $\lfes(G,T)=\lfes(G)$. We construct a treecut decomposition $(T, \XXX)$ where each bag contains precisely one vertex, notably by setting $X_t=\{t\}$ for each $t\in V(T)$.  Fix any node $t$ in $T$ other than root, let $u$ be the parent of $t$ in $T$. All the edges in $G\setminus ut$ with one endpoint in the rooted subtree $T_t$ and another outside of $T_t$ belong to $E^T_{loc}(t)$, so $\adh_T(t)=|\cut(t)|\leq |E^T_{loc}(t)| \leq \lfes(G)$.\\\\ Let $H_t$ be the torso of $(T, \XXX)$ in $t$, then $V(H_t)=\{t,z_1...z_l\}$ where $z_i$ correspond to connected components of $T \setminus t$, $i\in [l]$. In $\tilde H(t)$, only $z_i$ with degree at least $3$ are preserved. 
But all such $z_i$ are the endpoints of at least 2 edges in $|E^T_{loc}(t)|$, so $\tor(t)=|V(\tilde H_t)|\le 1+ |E^T_{loc}(t)| \le 1+ \lfes(G)$. Thus  $\tcw(G)\leq \lfes(G)+1$.
\end{proof}
\fi

\iflong
\begin{theorem}
\fi
\ifshort
\begin{theorem}
\fi
\label{thm:complfes}
The problem of determining whether $\lfes(G)\leq k$ for an input graph $G$ parameterized by an integer $k$ is fixed-parameter tractable. Moreover, if the answer is positive, we may also output a spanning tree $T$ such that $\lfes(G,T)\leq k$ as a witness.
 \end{theorem}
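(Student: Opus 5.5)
The plan is to combine a treewidth bound with Courcelle's theorem. The first ingredient is Proposition~\ref{pro:lfescompare}, which gives $\tcw(G)\leq \lfes(G)+1$. The second is the known fact that treewidth is bounded by a function of treecut width, namely $\tw(G)\leq 2\tcw(G)^2+3\tcw(G)$ (Wollan; Ganian et al.). Together these show that every graph with $\lfes(G)\leq k$ has $\tw(G)\leq g(k)$ for the computable function $g(k)=2(k+1)^2+3(k+1)$.

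The algorithm therefore starts by running Bodlaender's fixed-parameter algorithm to decide whether $\tw(G)\leq g(k)$. If the treewidth exceeds $g(k)$, we answer \NO. Otherwise we obtain a tree-decomposition of width at most $g(k)$ and proceed to the second phase.

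The second phase expresses the property ``there is a spanning tree $T$ with $\lfes(G,T)\leq k$'' as an $\mathrm{MSO}_2$ formula $\varphi_k$ whose length depends only on $k$. We existentially quantify an edge set $F$ (the edge set of $T$) and build $\varphi_k$ from the following pieces.
\begin{itemize}
\item \emph{$F$ is a spanning tree.} We require that $(V,F)$ is connected, which is the standard $\mathrm{MSO}_2$ statement that no vertex bipartition has no $F$-edge crossing it. We also require that $F$ is acyclic, which can be stated as: for every $e\in F$ with endpoints $x,y$, the endpoints $x$ and $y$ are disconnected in $(V,F\setminus\{e\})$.
\item \emph{The path condition.} For a non-tree edge $e=uw$, the tree path between $u$ and $w$ contains $v$ if and only if $v\in\{u,w\}$, or $u$ and $w$ lie in different connected components of $(V\setminus\{v\},F)$. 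Separation in a subgraph is again expressible via a set of vertices closed under $F$-edges avoiding $v$, containing $u$ but not $w$.
\item \emph{The counting condition.} We require that for every vertex $v$ there do not exist $k+1$ pairwise distinct edges $e_1,\dots,e_{k+1}\in E\setminus F$ each satisfying the path condition with respect to $v$. This uses $k+1$ first-order edge quantifiers, so $|\varphi_k|$ is bounded by a function of $k$.
\end{itemize}
Courcelle's theorem then decides $G\models\varphi_k$ in time $f(k)\cdot|V|$, given the bounded-width decomposition.

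To output the witness $T$, I would use the constructive variants of Courcelle's theorem (e.g.\ Arnborg--Lagergren--Seese), which, for an MSO formula with a free set variable $F$, return a satisfying assignment in linear time on bounded treewidth. Alternatively, a self-reduction works: process the edges one at a time, tentatively add to the formula the constraint ``$e\in F$'' (or ``$e\notin F$'') by labeling $e$, and keep whichever choice remains satisfiable. This requires only polynomially many calls to the decision procedure.

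I expect the main obstacle to be a careful justification of the treewidth bound, that is, making sure that the cited relation between $\tw$ and $\tcw$ indeed applies here, so that the first phase is sound. Writing the path-containment predicate correctly in $\mathrm{MSO}_2$ is the other point requiring care, particularly the endpoint cases $v\in\{u,w\}$. If a cleaner route is desired, the Courcelle step could be replaced by an explicit dynamic program over the tree-decomposition that tracks, per bag, the forest structure restricted to the bag and, for each bag vertex, the number of non-tree edges whose paths are already known to pass through it (capped at $k+1$). However, I would present the MSO argument, since it is shorter.
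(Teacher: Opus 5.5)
Your proposal is correct and matches the paper's proof. It combines $\tcw(G)\le \lfes(G)+1$ with $\tw(G)\le 2\tcw(G)^2+3\tcw(G)$, rejects via Bodlaender's algorithm when the treewidth is too large, and applies Courcelle's theorem to an MSO formula whose free edge-set variable yields the witness tree. The differences are cosmetic: the paper quantifies the feedback edge set $X=E\setminus F$, requires it to be a minimal feedback edge set, and states the local condition as ``there exist $e_1,\dots,e_k\in X$ such that every other $ab\in X$ has an $a$--$b$ path in $G-X$ avoiding $v$.'' This is equivalent to your formulation via the non-existence of $k+1$ such edges, including the endpoint case $v\in\{a,b\}$.
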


\iflong
\begin{proof}
Observe that since $\tcw(G)\leq \lfes(G)+1$ by Proposition~\ref{pro:lfescompare} and $\tw(G)\leq 2\tcw(G)^2+3\tcw(G)$~\cite{Ganian0S15}, we immediately see that no graph of treewidth greater than $k'=2k^2+5k+3$ can have a local feedback edge set of at most $k$. Hence, let us begin by checking that $\tw(G)\leq k'$ using the classical fixed-parameter algorithm for computing treewidth~\cite{Bodlaender96}; if not, we can safely reject the instance. 

Next, we use the fact that $\tw(G)\leq k'$ to invoke Courcelle's Theorem~\cite{Courcelle90,DowneyFellows13}, which provides a fixed-parameter algorithm for model-checking any \emph{Monadic Second-Order Logic} formula on $G$ when parameterized by the size of the formula and the treewidth of $G$. We refer interested readers to the appropriate books~\cite{CourcelleEngelfriet12,DowneyFellows13} for a definition of Monadic Second Order Logic; intuitively, the logic allows one to make statements about graphs using variables for vertices and edges as well as their sets, standard logical connectives, set inclusions, and atoms that check whether an edge is incident to a vertex. If the formula contains a free set variable $X$ and admits a model on $G$, Courcelle's Theorem allows us to also output an interpretation of $X$ on $G$ that satisfies the formula.

The formula $\phi$ we will use to check whether $\lfes(G)\leq k$ will be constructed as follows. $\phi$ contains a single free edge set variable $X$ (which will correspond to the sought-after feedback edge set). $\phi$ then consists of a conjunction of two parts, where the first part simply ensures that $X$ is a minimal feedback edge set using a well-known folklore construction~\cite{LangerRRS14,Barthesis}; this also ensures that $G-X$ is a spanning tree. In the second part, $\phi$ quantifies over all vertices in $G$, and for each such vertex $v$ it says there exist edges $e_1,\dots,e_k$ in $X$ such that for every edge $ab\in X$ distinct from all of $e_1,\dots, e_k$, there exists a path $P$ between $a$ and $b$ in $G-X$ which is disjoint from $v$. (Note that since the path $P$ is unique in $G-X$, one could also quantify $P$ universally and achieve the same result.) 

It is easy to verify that $\phi(X)$ is satisfied in $G$ if and only if $\lfes(G,G-X)\leq k$, and so the proof follows. Finally, we remark that---as with every algorithmic result arising from Courcelle's Theorem---one could also use the formula as a template to build an explicit dynamic programming algorithm that proceeds along a tree-decomposition of $G$.
\end{proof}
\fi

\section{Solving \BNSLneq\ with Parameters Based on Edge Cuts.}
\label{sec:edgeparams}
In this section we provide tractability and lower-bound results for \BNSLneq\ from the viewpoint of superstructure parameters based on edge cuts. Together with the previous lower bound that rules out fixed-parameter algorithms based on all vertex-separator parameters~\cite[Theorem 3]{OrdyniakS13}, the results presented here provide a comprehensive picture of the complexity of \BNSLneq\ with respect to superstructure parameterizations.

\iflong
\subsection{Using the Feedback Edge Number for \BNSLneq}
\label{sub:fes}
\fi
\ifshort
\noindent \textbf{Using the Feedback Edge Number for \BNSLneq.}\quad
\fi
We say that two instances $\III$, $\III'$ of \BNSL\ are \emph{equivalent} if (1) they are either both \YES-instances or both \NO-instances, and furthermore (2) a solution to one instance can be transformed into a solution to the other instance in polynomial time. Our aim here is to prove the following theorem:

\begin{theorem}
\label{thm:kernel}
There is an algorithm which takes as input an instance $\III$ of \BNSLneq\ whose superstructure has \fes\ $k$, runs in time $\bigoh(|\III|^2)$, and outputs an equivalent instance $\III'=(V',\FFF',\ell')$ of \BNSLneq\ such that $|V'|\leq 16k$.
\end{theorem}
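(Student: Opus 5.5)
We plan a standard kernelization for feedback edge number, via reduction rules that remove degree-$1$ vertices and shorten long paths of degree-$2$ vertices in the superstructure $G$. Once neither rule applies, the usual counting argument finishes the proof. A connected graph with minimum degree at least $2$ and feedback edge number $k$ has at most $2k-2$ vertices of degree at least $3$. It also has at most $3k-3$ maximal paths whose internal vertices all have degree $2$. If each such path keeps only a constant number of internal vertices (say at most $4$), we obtain $|V'|\leq 2k+4(3k)\leq 16k$. The case where $G$ is a single cycle or a tree is handled directly.

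\textbf{Rule 1 (leaves).} Let $v$ be a vertex of degree $1$ with unique neighbour $u$. In any solution, $v$'s parent set is either $\emptyset$ or $\{u\}$. Since $v$ has no other neighbours, orienting the arc $uv$ or $vu$ can never close a cycle. So the only interaction is whether $u$ takes $v$ as a parent, or $v$ takes $u$ as a parent, or neither. Hence we delete $v$ and fold its contribution into $f_u$. For every $Z\in\parentsets(u)$ with $v\notin Z$, set
\[f'_u(Z)=f_u(Z)+\max\{f_v(\emptyset),f_v(\{u\})\}.\]
For $Z\ni v$, set $f'_u(Z\setminus\{v\})=\max\{f'_u(Z\setminus\{v\}),\,f_u(Z)+f_v(\emptyset)\}$. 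Finally shift $\ell$ by $f_v(\emptyset)$ for the parent sets of $u$ that do not appear in the non-zero list. To avoid enlarging the representation, it is cleaner to subtract the baseline $f_v(\emptyset)$ from $\ell$ and store only the increments. This preserves the superstructure restricted to $V\setminus\{v\}$ and does not increase $|\III|$. A solution of the new instance lifts back by choosing $v$'s orientation according to which term attained the maximum.

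\textbf{Rule 2 (long degree-$2$ paths).} Let $P=u,x_1,\dots,x_m,w$ be a path where every $x_i$ has degree $2$ and $m$ exceeds a small constant. In an acyclic orientation, each $x_i$ has parent set in $\{\emptyset,\{x_{i-1}\},\{x_{i+1}\},\{x_{i-1},x_{i+1}\}\}$. Acyclicity only matters through one global fact: whether the path is oriented entirely from $u$ to $w$ (or entirely from $w$ to $u$), which then acts as a directed connection between $u$ and $w$. We compute by a simple dynamic program along the path, in time $\bigoh(m)$, the best internal score $s_{\phi}$ for each of a constant number of ``boundary types''. A type $\phi$ records whether $u$ takes $x_1$ as a parent, whether $w$ takes $x_m$ as a parent, and whether the path is directed $u\to w$, directed $w\to u$, or neither. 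We then replace $x_1,\dots,x_m$ by a constant-length gadget path $u,y_1,y_2,y_3,w$. Its local score functions are chosen so that for every type $\phi$ the gadget's optimum equals $s_\phi$, up to a common additive constant that is absorbed into $\ell$. The contributions of $u$'s and $w$'s parent sets are rewritten by renaming $x_1\mapsto y_1$ and $x_m\mapsto y_3$.

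The main obstacle is Rule 2. One must design the gadget scores so that all types are realised exactly with non-negative integer scores. One must also argue that no type other than ``directed through'' can create a cycle, and that the translation between solutions is polynomial. The gadget needs enough internal vertices to separate the ``directed $u\to w$'' type from the ``directed $w\to u$'' type independently of the endpoint choices; this is why we allow a few extra vertices per path, which fits within the constant $16$.

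Each rule application takes linear time and there are at most $|V|$ applications, giving $\bigoh(|\III|^2)$ overall. The feedback edge number never increases under either rule.
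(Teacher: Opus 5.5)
\textbf{Gap: the Rule~2 gadget is never constructed.} Your Rule~1 and your final counting are essentially the paper's. Your description of what a degree-$2$ path exposes to the rest of the graph is also correct: whether $x_1\in P(u)$, whether $x_m\in P(w)$, and whether the path is directed through $u\to w$, through $w\to u$, or neither. The problem is Rule~2 itself. You postulate nonnegative integer local scores on a path $u,y_1,y_2,y_3,w$ whose optimum equals $s_\phi$ plus a common constant for every type $\phi$, and then explicitly leave their construction open ("one must design the gadget scores\ldots"). That construction is the whole content of the theorem, and it is not routine. Each $y_i$ is scored only by its own parent set, which is a subset of its two path neighbours, so the directed-through orientation cannot be rewarded on its own. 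Any bonus placed on $y_i$ for having its predecessor as parent is also collected by non-through orientations. You would therefore have to solve an inverse max-plus problem in which, simultaneously for all types:
\begin{itemize}
\item for ``$x_1\notin P(u)$, $x_m\in P(w)$'', the gadget's best value with the through orientation excluded is exactly $s_{\mathrm{no\text{-}through}}$;
\item the best value with it allowed is exactly $\max(s_{\mathrm{no\text{-}through}},s_{\to})$;
\item the symmetric type and the two remaining $(x_1\in P(u),\,x_m\in P(w))$ combinations are matched in the same way.
\end{itemize}
Overshooting any of these makes the rule unsound, and undershooting makes it incomplete. Such scores may well exist, but until they are exhibited and checked, safeness of Rule~2 is unproved. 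Your $16k$ bound then has no basis either, since it relies on the gadget being a path so that the feedback edge number does not grow.

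\textbf{How the paper avoids this.} The paper sidesteps the inverse problem by using the non-zero representation, where a single parent set can carry an arbitrary precomputed score. It keeps $b_1,b_m$, deletes $b_2,\dots,b_{m-1}$, and adds a new vertex $b$ adjacent to $a,c,b_1,b_m$.
\begin{itemize}
\item For each $B\subseteq\{a,c\}$, $b$ gets parent set $B\cup\{b_1,b_m\}$ with score $\ell_{\mathrm{max}}(B)$.
\item $b_1$ gets the single parent set $\{a,b,b_m\}$ with score $\ell_{\mathrm{noPath}}(a)$, and $b_m$ gets $\{c,b,b_1\}$ with score $\ell_{\mathrm{noPath}}(c)$. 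These encode ``entered from one side but not directed through''.
\item $b$ is added to every parent set of $a$ containing $b_1$, and to every parent set of $c$ containing $b_m$.
\end{itemize}
Each scenario's optimum is thus read off a single table entry rather than synthesised from local scores along a path. Because this gadget is not a path, the paper counts on the original spanning tree: at most $4k$ marked vertices and at most $4k$ tree paths between them, each replaced by $3$ vertices, for a total of $16k$.

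\textbf{Smaller point in Rule~1.} Parent sets of $u$ outside $\Gamma_f(u)$ would need value $\max\{f_v(\emptyset),f_v(\{u\})\}$. Shifting $\ell$ does not achieve this. Adding $\emptyset$ to $\Gamma_{f'}(u)$ with value $f_u(\emptyset)+\max\{f_v(\emptyset),f_v(\{u\})\}$ does, since dropping all of $u$'s parents cannot create a cycle; this is what the paper does.
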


In parameterized complexity theory, such data reduction algorithms with performance guarantees are called \emph{kernelization algorithms}~\cite{DowneyFellows13,CyganFKLMPPS15}. These may be applied as a polynomial-time preprocessing step before, e.g., more computationally expensive methods are used. The fixed-parameter tractability of \BNSLneq\ when parameterized by the \fes\ of the superstructure follows as an immediate corollary of Theorem~\ref{thm:kernel} (one may solve $\III$ by, e.g., exhaustively looping over all possible DAGs on $V'$ via a brute-force procedure). We also note that even though the number of variables of the output instance is polynomial in the parameter $k$, the instance $\III'$ need not have size polynomial in $k$.

We begin our path towards a proof of Theorem~\ref{thm:kernel} by computing a feedback edge set $E_F$ of $G$ of size $k$ in time $\bigoh(|\III|^2)$ by, e.g., Prim's algorithm. Let $T$ be the spanning tree of $G$, $E_F=E(G) \setminus E(T)$. The algorithm will proceed by the recursive application of certain reduction rules, which are polynomial-time operations that alter (``simplify'') the input instance in a certain way. A reduction rule is \emph{safe} if it outputs an instance which is equivalent to the input instance. We start by describing a rule that will be used to prune $T$ until all leaves are incident to at least one edge in $E_F$.

\begin{redrule}
\label{redruleone}
Let $v\in V$ be a vertex and let $Q$ be the set of neighbors of $v$ with degree $1$ in $G$.
 We construct a new instance $\III'=(V',\FFF',\ell)$ 
 by setting: 
 \textbf{\textup{1.}} $V':=V\setminus Q$; 
 \textbf{\textup{2.}} $\Gamma_{f'}(v):=\{\emptyset\}\cup \SB (P\setminus Q) \SM P\in \Gamma_f(v)\SE$;
  \textbf{\textup{3.}} for all $w\in V'\setminus \{v\}$, $f'_w=f_w$;
 \textbf{\textup{4.}} for every $P'\in \Gamma_{f'}(v)$: \\
$$f'_v(P'):=\max_{ P: P\setminus Q = P' } \big(f_v(P)+ \sum_{v_{\emph{in}}\in P \cap Q}f_{v_{\emph{in}}}(\emptyset) + \sum_{v_{\emph{out}}\in Q\setminus P}\max(f_{v_{\emph{out}}}(\emptyset),f_{v_{\emph{out}}}(v))\big).$$
          \end{redrule}

\iflong
\begin{lemma}
\fi
\ifshort
\begin{lemma}
\fi
\label{lem:redruleone}
Reduction Rule~\ref{redruleone} is safe.
\end{lemma}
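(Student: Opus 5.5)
We show that the optimal score of $\III'$ equals that of $\III$, and that solutions can be converted in both directions in polynomial time. The key structural fact is that every $u\in Q$ has degree $1$ in $G$, so its only neighbor is $v$. Hence $\inneighb(u)\subseteq\{v\}$, and in any DAG $D$ the only possible nonzero contributions of $u$ are $f_u(\emptyset)$ and $f_u(\{v\})$. Moreover, any arc incident to $u$ joins $u$ and $v$, and no cycle of $D$ can pass through $u$, since $u$ would need two distinct neighbours on it. So the orientation of the arc between $u$ and $v$ never affects acyclicity.

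\textbf{From $\III$ to $\III'$.} Take any acyclic solution $D$ of $\III$. First we may assume $D$ has no arcs outside $G$: removing a parent $x$ of $w$ with $x\notin\inneighb(w)$ keeps the score, because the new parent set either still has score $0$ or the old one did. Define $D'=D-Q$, giving $v$ the parent set $P'=P_D(v)\setminus Q$. If $P_D(v)\notin\parentsets(v)$ but $f_v$ is still counted, we fall back to $\emptyset\in\Gamma_{f'}(v)$; this is why $\emptyset$ was added, and its value $f'_v(\emptyset)$ dominates the corresponding term. Take $P=P_D(v)$ in the maximum defining $f'_v(P')$. Then
\begin{itemize}
\item the vertices of $P\cap Q$ are parents of $v$, so they have no parents in $D$ and score $f_u(\emptyset)$;
\item each $u\in Q\setminus P$ scores $f_u(\emptyset)$ or $f_u(\{v\})$, which is at most $\max(f_u(\emptyset),f_u(\{v\}))$.
\end{itemize}
So $f'_v(P')\geq f_v(P_D(v))+\sum_{u\in Q}(\text{score of }u\text{ in }D)$. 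All other vertices keep their parent sets and score functions, so $\score(D')\geq\score(D)$, and $D'$ is acyclic as a subgraph of $D$.

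\textbf{From $\III'$ to $\III$.} Take a solution $D'$ of $\III'$ with $P_{D'}(v)=P'$. If $P'$ gives $f'_v(P')=0$, as can happen when $P'\notin\Gamma_{f'}(v)$, we just add $Q$ back with empty parent sets and scores do not decrease. Otherwise let $P$ be a maximizer in the definition of $f'_v(P')$. Set $P_D(v)=P$. For $u\in P\cap Q$ add the arc $uv$ and give $u$ no parents. For $u\in Q\setminus P$ add the arc $vu$ exactly when $f_u(\{v\})>f_u(\emptyset)$. Since $P\setminus Q=P'$, the parent set of $v$ restricted to $V'$ is unchanged. The new vertices $Q$ are leaves attached to $v$ only, so no cycle is created, and the score of $D$ equals $\score(D')$ exactly. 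Both transformations are clearly polynomial, as is computing $\Gamma_{f'}(v)$ and $f'_v$ by iterating over $\parentsets(v)$.

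\textbf{Main obstacle.} The real care is in matching the zero-score, non-listed parent sets with the representation. We have to check that every $P'$ achieved by some $P\in\parentsets(v)$ or by $\emptyset$ is covered by the maximum. Sets $P$ not in $\parentsets(v)$ contribute $f_v(P)=0$; they are dominated by choosing $P=P\cap Q$, which has the same $Q$-part and a strictly smaller rest. This is handled by the $\emptyset$ entry together with $\max$-monotonicity. Once this bookkeeping is done, both score inequalities follow, and $\III$ and $\III'$ are equivalent.
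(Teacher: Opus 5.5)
Your proof is correct and follows the same route as the paper: in one direction you delete $Q$ and bound the lost leaf scores by $f'_v$, and in the other you take a maximizing $P$ in the definition of $f'_v$ and re-attach $Q$ according to it. Your explicit treatment of unlisted zero-score parent sets of $v$ (via the $\emptyset$ entry) and of arcs outside the superstructure fills in bookkeeping that the paper's short argument leaves implicit.
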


\iflong
\begin{proof}
 For the forward direction, assume that $\III'$ admits a solution $D'$, and let $\lambda$ be the score $D'$ achieves on $v$. By the construction of $\III'$, there must be a parent set $Z\in \Gamma_f(v)$ such that $Z\cap V'=P_{D'}(v)$ (i.e., $Z$ agrees with $v$'s parents in $D'$) and $\lambda$ is the sum of the following scores: (1) $f_v(Z)$, (2) the maximum achievable score for each vertex in $Q\setminus Z$, and (3) the score of $\{\emptyset\}$ for each vertex in $Z\cap Q$. Let $D$ be obtained from $D'$ by adding the following arcs: $zv$ for each $z\in Z$, and $vq$ for each $q\in Q\setminus Z$ such that $q$ achieves its maximum score with $v$ as its parent. By construction, $\lambda=\sum_{w'\in \{v\}\cup Q}f_w(P_D(w))$. Since the scores of $D$ and $D'$ coincide on all vertices outside of $\{v\}\cup Q$ and $D$, we conclude that $\score(D)=\score(D')$, and hence $\III$ is a \YES-instance.
 
 For the converse direction, assume that $\III$ admits a solution $D$. Let $D'=D-Q$. By the construction of $f'_{v}$, it follows that $f'_v(P_{D'}(v))$ is greater or equal to the score $D$ achieves on $\{v\}\cup Q$. Thus, $D'$ is a solution to $\III'$, and we conclude that Reduction Rule~\ref{redruleone} is safe.
\end{proof}
\fi

Observe that the superstructure graph $G'$ obtained after applying one step of Reduction Rule~\ref{redruleone} is simply $G-Q$; after its exhaustive application we obtain an instance $\III$ such that all the leaves of the tree $T$ are endpoints of $E_F$.
Our next step is to get rid of long paths in $G$ whose internal vertices have degree $2$. 
We note that this step is more complicated than in typical kernelization results using feedback edge set as the parameter, since a directed path $Q$ in $G$ can serve multiple ``roles'' in a hypothetical solution $D$ and our reduction gadget needs to account for all of these. Intuitively, $Q$ may or may not appear as a directed path in $D$ (which impacts what other arcs can be used in $D$ due to acyclicity), and in addition the total score achieved by $D$ on the internal vertices of $Q$ needs to be preserved while taking into account whether the endpoints of $Q$ have a neighbor in the path or not. Because of this (and unlike in many other kernelization results of this kind~\cite{GanianO21,UhlmannW13,GanianKO21}), we will not be replacing $Q$ merely by a shorter path, but by a more involved gadget. 
\ifshort An illustration is provided in Figure~\ref{fig:redruletwo}.

\begin{redrule}
\label{redruletwo}
 Let $a,b_1,\dots,b_m,c$ be a path in $G$ such that for each $i\in [m]$, $b_i$ has degree precisely $2$, and let $P=\{b_1,\dots,b_m\}$. We construct a new instance $\III'=(V',\FFF',\ell)$ as follows:
 \begin{enumerate}
  \item[\textbf{\textup{1.}}] $V':=V\cup\{b\}\setminus \{b_2...b_{m-1}\}$;
\item[\textbf{\textup{2.}}] $\Gamma_{f'}(b)=\{B\cup \{b_1,b_m\}|B\subseteq \{a,c\}\}$ where $f'_b( B\cup\{b_1,b_m\})$ is equal to the maximum score that can be achieved by $P$ if $B$ are used as parents;
\item[\textbf{\textup{3.}}] The scores for $a$ and $c$ are obtained from $\FFF$ by simply adding $b$ to any parent set containing either $b_1$ or $b_m$;
\item[\textbf{\textup{4.}}] $\Gamma_{f'}(b_1)$ contains only $\{a,b,b_m\}$ with score equal to the maximum score that can be achieved by $P$ if $a$ is used as a parent but there is no path from $a$ to $b_m$ via $P$;
\item[\textbf{\textup{5.}}] $\Gamma_{f'}(b_m)$ contains only $\{c,b,b_1\}$ with score equal to the maximum score that can be achieved by $P$ if $c$ is used as a parent but there is no path from $c$ to $b_1$ via $P$;
\item[\textbf{\textup{6.}}] for all $w\in V' \setminus \{a,b_1, b,b_m,c\}$, $f'_w=f_w$.
\end{enumerate}
\end{redrule}
\fi

\iflong
\begin{redrule}
\label{redruletwo}
 Let $a,b_1,\dots,b_m,c$ be a path in $G$ such that for each $i\in [m]$, $b_i$ has degree precisely $2$. For each $B\subseteq \{a,c\}$, let $\ell_{\emph{max}}(B)$ be the maximum sum of scores that can be achieved by $b_1,\dots,b_m$ under the condition that $b_1$ (and analogously $b_m$) takes $a$ ($c$) into its parent set if and only if $a \in B$ ($c\in B$). In other words, $\ell_{\emph{max}}(B)=\max_{D_B}\sum_{b_{i}|i\in [m]}f_{b_i}(P_{D_B}(b_i))$ where $D_B$ is a DAG on $\{b_1,\dots,b_m\}\cup B$ such that $B$ does not contain any vertices of out-degree $0$ in $D_B$. Moreover, let $\ell_{\emph{noPath}}(a)$ (and analogously $\ell_{\emph{noPath}}(c)$) be the maximum score that can be achieved on the vertices $b_1,\dots,b_m$ by a DAG on $a,b_1,\dots,b_m,c$ with the following properties: $a$ ($c$) has out-degree $1$, $c$ ($a$) has out-degree $0$, and there is no directed path from $a$ to $b_m$ (from $c$ to $b_1$).

We construct a new instance $\III'=(V',\FFF',\ell)$ as follows:
\begin{itemize}
\item $V':=V\cup\{b\}\setminus \{b_2...b_{m-1}\}$;
\item $\Gamma_{f'}(b)=\{B\cup \{b_1,b_m\}|B\subseteq \{a,c\}\}$ with scores $f'_b( B\cup\{b_1,b_m\}):=\ell_\emph{max}(B)$;
\item The scores for $a$ and $c$ are obtained from $\FFF$ by simply adding $b$ to any parent set containing either $b_1$ or $b_m$; formally:
\begin{itemize}
\item $\Gamma_{f'}(a)$ is a union of $\{P \in \Gamma_{f}(a)|b_1\not \in P\}$, where $f'_a(P):=f_a(P)$ and  $\{P\cup \{b\}| b_1 \in P, P \in \Gamma_{f}(a)\}$, where $f'_a(P\cup \{b\}):=f_a(P)$;
\item $\Gamma_{f'}(c)$ is a union of $\{P \in \Gamma_{f}(c)|b_m\not \in P\}$, where $f'_c(P):=f_c(P)$, and  $\{P\cup \{b\}| b_m \in P, P \in \Gamma_{f}(c)\}$, where $f'_c(P\cup \{b\}):=f_c(P)$.
\end{itemize}
\item $\Gamma_{f'}(b_1)$ contains only $\{a,b,b_m\}$ with score $\ell_{\emph{noPath}}(a)$;
\item $\Gamma_{f'}(b_m)$ contains only $\{c,b,b_1\}$ with score $\ell_{\emph{noPath}}(c)$;
\item for all $w\in V' \setminus \{a,b_1, b,b_m,c\}$, $f'_w=f_w$.
\end{itemize}
\end{redrule}

An Illustration of Reduction Rule~\ref{redruletwo} is provided in Figure~\ref{fig:redruletwo}. The rule can be applied in linear time, since  the $6$ values of $\ell_{\emph{noPath}}$ and $\ell_{\emph{max}}$ can be computed in linear time by a simple dynamic programming subroutine that proceeds along the path $a,b_1,\dots,b_m,c$ (alternatively, one may instead invoke the fact that paths have treewidth $1$~\cite{OrdyniakS13}).
\fi

\begin{figure}[thb]
\begin{minipage}[c]{0.65\textwidth}
\vspace{-0.1cm}
\includegraphics[width=\textwidth]{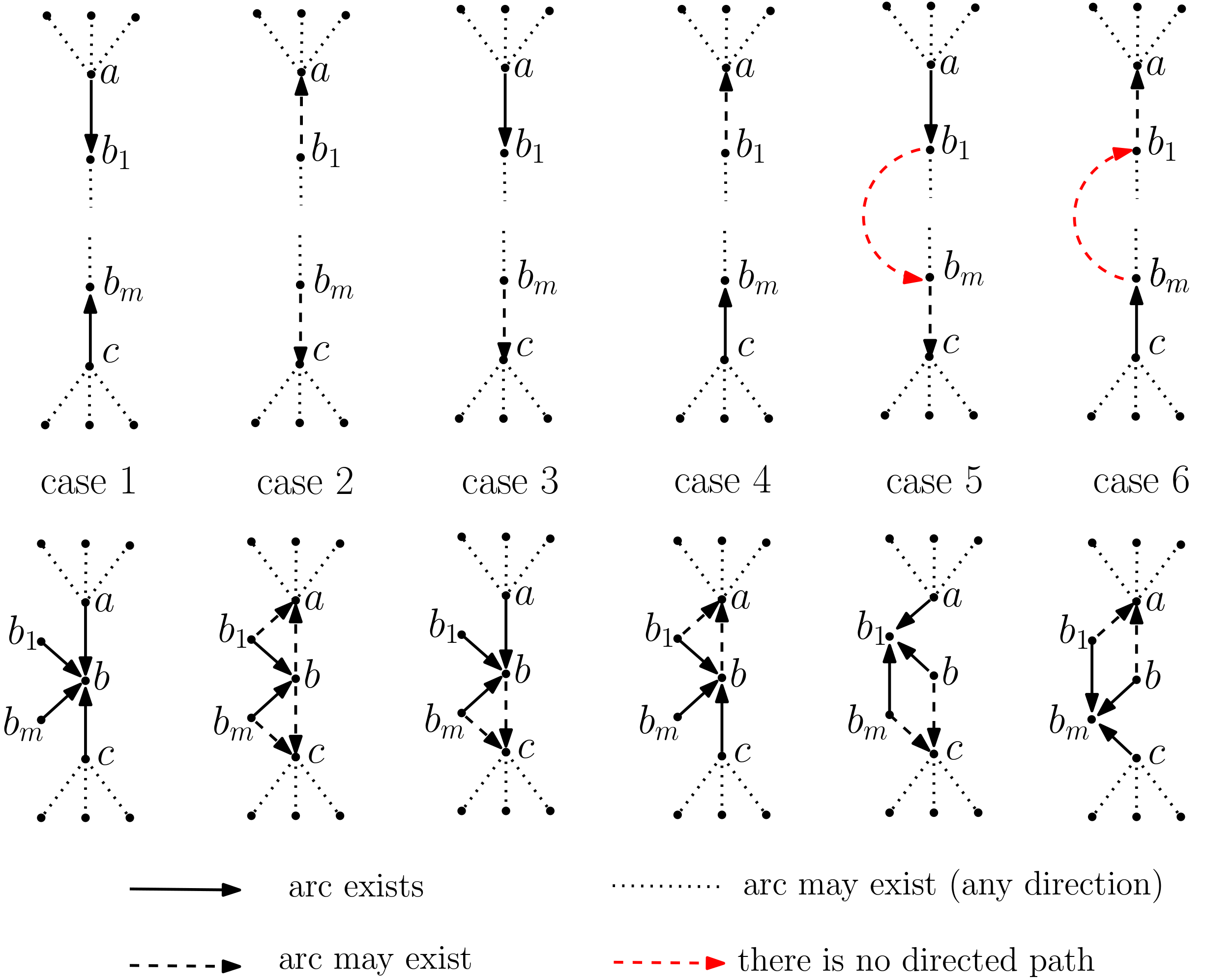}\vspace{-0.1cm}
\end{minipage}
\hfill
\begin{minipage}[c]{0.3\textwidth}
\caption{
\iflong ~\\ Top: The six possible scenarios that give rise to the values of $\ell_{\emph{max}}$ (Cases 1-4) and $\ell_{\emph{noPath}}$ (Cases 5-6). 
\fi
\ifshort ~\\
Top: The six possible solution scenarios that may arise when dealing with long paths. 
\fi \\
Bottom: The corresponding arcs in the gadget after the application of Reduction Rule~\ref{redruletwo}.
}
\label{fig:redruletwo}
\end{minipage}
\end{figure}

\iflong
\begin{lemma}
\fi
\ifshort
\begin{lemma}
\fi
\label{lem:redruletwo}
Reduction Rule~\ref{redruletwo} is safe.
\end{lemma}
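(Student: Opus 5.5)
We prove safeness by exhibiting score-preserving translations in both directions, following the structure of the proof of Lemma~\ref{lem:redruleone}. The key observation is that in any solution $D$ of $\III$, the arcs incident to $b_1,\dots,b_m$ all lie on the path $a,b_1,\dots,b_m,c$. Hence the restriction of $D$ to the path falls into exactly one of six scenarios (Figure~\ref{fig:redruletwo}). These are determined by three things: whether $a$ and/or $c$ send an arc into the path, whether $a$ or $c$ receive an arc from $b_1$ or $b_m$, and whether the path contains a directed $a$--$c$ or $c$--$a$ path. Outside the path, only the parent sets of $a$ and $c$ see the path vertices, and they do so only through $b_1$ and $b_m$. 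So we only need to certify that the gadget $\{b_1,b,b_m\}$ in $\III'$ reproduces the same interface at the same optimal score, without creating or destroying reachability between $a$ and $c$ that matters for acyclicity.

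\textbf{Converse direction} ($\III$ to $\III'$). Take a solution $D$ of $\III$ and classify it by the scenario on the path. Suppose the path contains a directed $a\to c$ path (so $a$ is a parent of $b_1$, $c$ is not a parent of $b_m$, and the arcs run $a\to b_1\to\cdots\to b_m\to c$ or stop at $b_m$). Then in $D'$ we use $b_1$ with parents $\{a,b,b_m\}$? That does not produce an $a\to c$ path. So instead we route the case through $b$: we give $b$ the parent set $\{a,b_1,b_m\}$ or similar. Here the delicate point is to choose, case by case, which of $b,b_1,b_m$ take nonzero parent sets and which arcs to add so that three conditions hold. First, the reachability pattern between $a$ and $c$ through the gadget matches that in $D$. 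Second, whenever $a$ (respectively $c$) had $b_1$ (respectively $b_m$) as a parent in $D$, it now has $b$, together with $b_1$ or $b_m$ as appropriate, so that its score under $f'$ is unchanged. Third, the gadget's total score is at least the score of $b_1,\dots,b_m$ in $D$, which holds by the definitions of $\ell_{\max}(B)$ and $\ell_{\text{noPath}}$ as maxima over exactly those scenarios. Acyclicity of $D'$ then follows because every directed path through the gadget between $a$ and $c$ corresponds to one in $D$, and the gadget's internal arcs are acyclic by inspection.

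\textbf{Forward direction} ($\III'$ to $\III$). Take a solution $D'$ of $\III'$. Every vertex of the gadget either has its unique nonzero parent set or has score $0$, and we may assume w.l.o.g.\ that zero-score vertices have no unnecessary in-arcs. We read off which of the six configurations $D'$ realises. Then we replace the gadget by the optimal DAG on $b_1,\dots,b_m$ witnessing the corresponding $\ell_{\max}(B)$ or $\ell_{\text{noPath}}$ value. That witness uses $a$ and $c$ as parents exactly as prescribed, and it only creates $a\to c$ reachability when $D'$ already had it, which is guaranteed by the no-path condition in $\ell_{\text{noPath}}$ and, for $\ell_{\max}$, by the fact that $b$ requires both $b_1$ and $b_m$ as parents. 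We also replace $b$ in the parent sets of $a$ and $c$ by the original $b_1$ or $b_m$ they stood for, which preserves their scores by construction of $f'_a$ and $f'_c$. The translations are clearly polynomial-time.

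\textbf{Main obstacle.} The hard step is the case analysis on reachability. For each of the six scenarios we must verify that the out-arcs from $b_1,b_m$ to $a,c$ and the in-arcs from $a,c$, as realised in the gadget, induce precisely the same directed reachability between $a$ and $c$ (none, $a\to c$, $c\to a$) as the original path, so that no new cycle appears through the rest of $G$. This is also where the $\max$ in $f'_b$ must be shown to be attained by a configuration compatible with whatever arcs $a$ and $c$ use. We would handle it by listing the six cases explicitly and checking the three conditions above for each.
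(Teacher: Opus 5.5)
Your strategy is the paper's: classify $D[a,b_1,\dots,b_m,c]$ into six scenarios and translate each across the rule. However, the proposal defers exactly the part that carries the proof, namely the concrete gadget configuration for each scenario. The paper's assignment is as follows.
\begin{itemize}
\item If $ab_1,cb_m\in A(D)$, make $b$ a sink with parents $\{a,c,b_1,b_m\}$.
\item If neither arc is present, give $b$ parents $\{b_1,b_m\}$, and add $ba$ (resp.\ $bc$) iff $b_1a$ (resp.\ $b_mc$) is in $D$.
\item If $ab_1\in A(D)$, $cb_m\notin A(D)$ and \emph{all} arcs $b_ib_{i+1}$ are present, give $b$ parents $\{a,b_1,b_m\}$ and add $bc$ iff $b_mc\in A(D)$.
\item If $ab_1\in A(D)$, $cb_m\notin A(D)$ but some $b_ib_{i+1}$ is missing, switch to $b_1$ with parents $\{a,b,b_m\}$ (score $\ell_{\emph{noPath}}(a)$), make $b$ a source, and add $bc$ iff $b_mc\in A(D)$.
\end{itemize}
The cases for $c$ are symmetric. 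The last case is one your sketch never reaches, and it is the reason $\ell_{\emph{noPath}}$ exists. Every nonzero parent set of $c$ in $\III'$ that contains $b_m$ also contains $b$. So routing this case through $b$ would force $a\to b\to c$ whenever $b_mc\in A(D)$. That reachability is absent from $D$ and can close a cycle with a $c$-to-$a$ path elsewhere. Making $b_1$ the active sink avoids this. Relatedly, the split must be on whether $a$ reaches $b_m$, not $c$; your parenthetical ``or stop at $b_m$'' has this right, but your stated criterion does not. The scenario $a\to b_1\to\dots\to b_m$ with $b_mc\notin A(D)$ has no $a$--$c$ path, yet it may score strictly more than $\ell_{\emph{noPath}}(a)$, so it must go to $\ell_{\emph{max}}(\{a\})$.

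For the direction from $\III'$ to $\III$, two points need fixing.
\begin{itemize}
\item You need that at most one of $b,b_1,b_m$ has nonzero score in $D'$. This holds because every nonzero parent set of $b$ contains $b_1,b_m$, that of $b_1$ contains $b,b_m$, and that of $b_m$ contains $b,b_1$. Without it, ``reading off the configuration'' is not well defined. Note also that $b$ has four nonzero parent sets, not a unique one.
\item Your justification for the $\ell_{\emph{max}}$ cases uses the wrong fact. The optimal witness for $\ell_{\emph{max}}(\{a\})$ may contain $a\to b_1\to\dots\to b_m$, and you reinstate $b_mc$ whenever $b_m\in P_{D'}(c)$. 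So a new $a\to c$ path can appear in the reconstructed $D$. What makes this harmless is not that $b$ requires $b_1,b_m$ as parents. After removing in-arcs of zero-score vertices, $b_m\in P_{D'}(c)$ forces $b\in P_{D'}(c)$ by the definition of $f'_c$, and $a\in B$ gives $a\to b$. Hence $a\to b\to c$ already exists in $D'$, and any cycle through the reconstructed path maps to a cycle in $D'$. The case for $c$ is symmetric, via $f'_a$.
\end{itemize}
With the explicit case list and these two points filled in, your argument is complete. Its $\III'\to\III$ half would then be more explicit than the paper's one-line ``reverse the construction''.
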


\iflong
\begin{proof}
Note that the superstructure graph of reduced instance  is obtained from $G_{\III}$ by contracting $b_2...b_{m-1}$, adding $b$ and connecting it by edges to $a,c,b_1,b_m$.
We will show that a score of at least $\ell$ can be achieved in the original instance $\III$ if and only if a score of at least $\ell$ can be achieved in the reduced instance $\III'$.

Assume that $D$ is a DAG that achieves a score of $\ell$ in $\III$. We will construct a DAG $D'$, called the \emph{reduct} of $D$, with $f'(D')\geq \ell$. To this end, we first modify $D$ by removing the vertices $b_2...b_{m-1}$ and adding $b$ (let us denote the DAG obtained at this point $D^*$). Further modifications of $D^*$ depend only on $D[a,b_1...b_m,c]$, and we distinguish the $6$ cases listed below (see also Figure~\ref{fig:redruletwo}):
\begin{itemize}
    \item case 1: $D$ contains both arcs $ab_1$ and $cb_m$. We add to $D^*$ arcs from $a,c,b_1,b_m$ to $b$, denote resulting graph by $D'$. As $D'$ is obtained from DAG by making $b$ a sink, it is a DAG as well. Parent set of $b$ in $D'$ is $\{a,c,b_1,b_m\}$, so its score is $\ell_\emph{max}(a,c)\ge \sum_{i=1}^m f_{b_i}(P_D(b_i))$, which means that it achieves the highest scores all of $b_i$'s can achieve in $D$. The remaining vertices in $V(D')\setminus\{b_1,b_m,b\}$ have the same scores as in $D$, so $f'(D')\ge f(D) =\ell$.
    \item case 2: $D$ contains none of the arcs $ab_1$ and $cb_m$. To keep the scores of $a$ and $c$ the same as in $D$, we add to $D^*$ the arc $ba$ iff $D$ contains $b_1a$, add arc $bc$ iff $D$ contains $b_mc$. Furthermore, we add arcs $b_1 b$ and  $b_mb$ and denote resulting graph $D'$. As $D'$ is obtained from $D$ by making $b$ a source and then adding sources $b_1$ and $b_m$, it is a DAG as well. The parent set of $b$ in $D'$ is $\{b_1,b_m\}$, so its score is $\ell_\emph{max}(\emptyset)\ge \sum_{i=1}^m f_{b_i}( P_D(b_i))$. Rest of vertices in $V(D')\setminus\{b_1,b_m,b\}$ have the same scores as in $D$, so $f'(D')\ge f(D) =\ell$.
    \item case 3: $D$ doesn't contain arc $ab_1$, but contains $cb_m$ and all the arcs $b_{i+1} b_i$, $i\in [m-1]$.  We add to $D^*$ arcs $cb$, $b_1 b$ and  $b_m b$. We also add $ba$ iff $D$ contains $b_1 a$, to preserve the score of $a$. Denote resulting graph by $D'$. $D'$ can be considered as $D$ where long directed path $c\to b_m\to...\to b_1$ was replaced by $c\to b$ and then sources $b_1$ and $b_m$ were added, so it is a DAG. Arguments for scores are similar to cases 1 and 2.
    \item case 4: $D$ doesn't contain arc $cb_m$, but contains $a b_1$ and all the arcs $b_i b_{i+1}$, $i\in [m-1]$. This case is symmetric to case 3.
    \item case 5: $D$ contains the arc $a b_1$ but does not contain the arc $cb_m$ and at least one of the arcs $b_i b_{i+1}$, $i\in [m-1]$ is also missing (i.e., there is no directed path from $a$ to $b_m$).
We add to $D'$ arcs $bb_1$ and $b_m b_1$. If $b_m c\in A(D)$, add also $bc$. Denote the resulting graph $D'$. As $D'$ is obtained from $D^*$ by making $b_1$ a sink and $b$ a source, it is a DAG. $b_1$ has parent set $\{a,b,b_m\}$ in $D'$, so its score is $\ell_{\emph{noPath}}(a)\ge \sum_{i=1}^m f_{b_i} (P_D(b_i))$. Rest of vertices in $V(D')\setminus\{b_1,b_m,b\}$ have the same scores as in $D$, so $f'(D')\ge f(D) =\ell$.
    \item case 6: $D$ contains the arc $cb_m$ but does not contain the arc $ab_1$ and at least one of the arcs $b_{i+1} b_i$, $i\in [m-1]$ is also missing. This case is symmetric to case 5.
\end{itemize}

The considered cases exhaustively partition all possible configurations of $D[a,b_1...b_m,c]$, so we always can construct $D'$ with a score at least $\ell$.
For the converse direction, note that the DAGs constructed in cases 1-6 cover all optimal configurations on $\{a,b_1,b,b_m,c\}$: if there is a DAG $D''$ in $\III'$ with a score of $\ell'$, we can always reverse the construction to obtain a DAG $D'$ with score at least $\ell'$ such that  $D'[a,b_1,b,b_m,c]$  has one of the forms depicted at the bottom line of the figure. The claim for the converse direction follows from the fact that every such $D'$ is a reduct of some DAG $D$ of the original instance with the same score. 
\end{proof}

We are now ready to prove the desired result.
\fi

\begin{proof}[Proof of Theorem~\ref{thm:kernel}]  
We begin by exhaustively applying Reduction Rule~\ref{redruleone} on an instance whose superstructure graph has a feedback edge set of size $k$, which results in an instance with the same feedback edge set but whose spanning tree $T$ has at most $2k$ leaves. It follows that there are at most $2k$ vertices with a degree greater than $2$ in $T$.

Let us now ``mark'' all the vertices that either are endpoints of the edges in $E_F$ or have a degree greater then $2$ in $T$; the total number of marked vertices is upper-bounded by $4k$. We now proceed to the exhaustive application of Reduction Rule~\ref{redruletwo}, which will only be triggered for sufficiently long paths in $T$ that connect two marked vertices but contain no marked vertices on its internal vertices; there are at most $4k$ such paths due to the tree structure of $T$. Reduction Rule~\ref{redruletwo} will replace each such path with a set of $3$ vertices, and therefore after its exhaustive application we obtain an equivalent instance with at most $4k+4k\cdot3=16k$ vertices, as desired. Correctness follows from the safeness of Reduction Rules~\ref{redruleone}, \ref{redruletwo}, and the runtime bound follows by observing that the total number of applications of each rule as well as the runtime of each rule are upper-bounded by a linear function of the input size.
\end{proof}

As an immediate corollary of Theorem~\ref{thm:kernel}, we can apply a standard brute-force branching procedure~\cite{Ott2004} to solve \BNSL\ in time $n^{\bigoh(1)} + 2^{\bigoh(k)}$.

\iflong
\subsection{Fixed-Parameter Tractability of \BNSLneq\ using the Local Feedback Edge Number}
\label{sub:lfes}
\fi
\ifshort
\noindent \textbf{Fixed-Parameter Tractability of \BNSLneq\ using the Local Feedback Edge Number.} \quad
\fi
Our aim here will be to lift the fixed-parameter tractability of \BNSLneq\ established by Theorem~\ref{thm:kernel} by relaxing the parameterization to \lfes. In particular, we will prove:

\begin{theorem}
\label{thm:lfes} \BNSLneq\ is fixed-parameter tractable when parameterized by the local feedback edge number of the superstructure.
\end{theorem}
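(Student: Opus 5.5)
We will give a dynamic programming algorithm that runs along the spanning tree $T$ itself, using the local feedback edge sets as the ``boundary'' information. First, by Theorem~\ref{thm:complfes}, we compute in \FPT\ time a spanning tree $T$ of the superstructure $G$ with $\lfes(G,T)\leq k$. We root $T$ at an arbitrary vertex $r$, and for each node $t$ let $T_t$ denote the subtree rooted at $t$ and $V_t$ its vertex set. The key structural observation, already used in the proof of Proposition~\ref{pro:lfescompare}, is that every edge of $G$ with exactly one endpoint in $V_t$ is either the tree edge from $t$ to its parent $p$, or a non-tree edge in $E_{\loc}^T(t)$. Hence the edge cut $\cut(t)$ between $V_t$ and $V\setminus V_t$ has at most $k+1$ edges, and its set $B_t$ of endpoints has at most $2k+2$ vertices.

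The DP record at $t$ will be a table indexed by \emph{boundary configurations}. A configuration consists of three parts. The first is an orientation of every edge in $\cut(t)$ (or a marking that the edge is unused). The second is, for every boundary vertex $x\in B_t\cap V_t$, the restriction of its parent set to outside vertices; only scores of vertices inside $V_t$ are counted, and because parent sets are given explicitly in $\parentsets$ we can simply guess, for each such $x$, which edges of the cut it takes as incoming arcs. The third is a reachability relation: a binary relation on the inside endpoints of the cut recording which of them can reach which others via directed paths inside $D[V_t]$. The number of configurations is bounded by $2^{\bigoh(k^2)}$. For each configuration, we store the maximum total score of vertices in $V_t$ achievable by an acyclic orientation of arcs inside $V_t$ that is consistent with the configuration. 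Acyclicity of the final solution is then checked exactly as in treewidth-style DP for \BNSL\ \cite{OrdyniakS13}. A cycle crossing the cut alternates between inside-paths and outside-paths, so it is detected by combining the inside reachability relation with the guessed outside arcs and a guessed outside reachability relation.

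The computation at a node $t$ with children $t_1,\dots,t_d$ proceeds as follows. We iterate over all choices of the parent set $P\in\parentsets(t)\cup\{\emptyset\}$ of $t$ (polynomially many). We then combine the children's tables. The obstacle is that $d$ may be unbounded, so we cannot enumerate combinations of child configurations. Here we use the bound $|E_{\loc}^T(t)|\leq k$: every non-tree edge that joins two different child subtrees, or a child subtree to outside $V_t$, lies in $E_{\loc}^T(t)$. Therefore all but at most $2k$ children $t_i$ have cuts consisting only of the tree edge $t t_i$. For such ``isolated'' children the interaction with the rest is only through the arc between $t$ and $t_i$. Their contribution is a value depending on whether $t_i\in P$ and whether $t$ is a parent of $t_i$, and there is no cycle risk beyond the single arc. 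For such a child we take the better option consistent with $P$, which is a per-child maximization computable in linear time overall. The at most $2k$ remaining ``interacting'' children have bounded-size boundaries, and we enumerate combinations of their configurations in $2^{\bigoh(k^3)}$ time. For each combination we check consistency of shared cut edges and acyclicity by computing a transitive closure on the bounded set of boundary vertices plus $t$. We then project onto the configuration of $\cut(t)$. At the root, the answer is the maximum entry, compared with $\ell$.

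The main obstacle is the join step. One difficulty is making the reachability bookkeeping correct when isolated children are processed greedily: an isolated child can still create a path $t\to t_i\to t$ only via the single tree edge, which is impossible, so greedy treatment is safe. The other is bounding the interacting children properly, since an edge of $E_{\loc}^T(t)$ touches at most two child subtrees. I would first verify the cut-size lemma carefully and then design the state so that composition reduces to a closure computation on $\bigoh(k)$ vertices, giving total running time $2^{\bigoh(k^3)}\cdot|\III|^{\bigoh(1)}$.
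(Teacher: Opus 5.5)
Your proposal is correct and is essentially the paper's proof: a leaf-to-root DP over the spanning tree $T$ whose records store reachability on the at most $2k+2$ boundary vertices plus the best inside score. As in the paper, children whose cut is only the tree edge are resolved by a per-child maximization depending on membership in $P$, and the at most $2k$ interacting children are handled by enumerating record combinations and checking irreflexivity of a transitive closure. The differences are cosmetic: you store cut-edge orientations plus inside-only reachability, whereas the paper stores $\Con(\delta(v),D)$ for a DAG on $\bar V_v$ with all arcs ending in $V_v$. The ``guessed outside reachability'' you mention is not needed, since a crossing cycle is detected at the ancestor where it becomes internal.
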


Since \fes\ is a more restrictive parameter than \lfes, this results in a strictly larger class of instances being identified as tractable. However, the means we will use to establish Theorem~\ref{thm:lfes} will be fundamentally different: we will not use a polynomial-time data reduction algorithm as the one provided in Theorem~\ref{thm:kernel}, but instead apply a dynamic programming approach. Since the kernels constructed by Theorem~\ref{thm:kernel} contain only polynomially-many variables w.r.t.\ \fes, that result is incomparable to Theorem~\ref{thm:lfes}.
\iflong

\fi
In fact one can use standard techniques to prove that, under well-established complexity assumptions, a data reduction result such as the one provided in Theorem~\ref{thm:kernel} \emph{cannot} exist for \lfes. \ifshort 
\begin{theorem}
\label{thm:nokernel}
Unless $\NP\subseteq \coNP/\cc{poly}$, there is no polynomial-time algorithm which takes as input an instance $\III$ of \BNSLneq\ whose superstructure has \lfes\ $k$ and outputs an equivalent instance $\III'=(V',\FFF',\ell')$ of \BNSLneq\ such that $|V'|\in k^{\bigoh(1)}$. In particular, \BNSLneq\ does not admit a polynomial kernel when parameterized by \lfes.
\end{theorem}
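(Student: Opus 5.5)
The plan is to use the standard OR-cross-composition framework (Bodlaender, Jansen, Kratsch). By that framework, if an \NP-hard problem OR-cross-composes into \BNSLneq\ parameterized by \lfes, then \BNSLneq\ has no polynomial compression unless $\NP\subseteq \coNP/\cc{poly}$. The claimed statement, which bounds $|V'|$, is slightly stronger than ``no polynomial kernel''. I would handle this at the end by showing that a bound on $|V'|$ together with a re-encoding gives a polynomial compression.

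For the source problem I take \BNSLneq\ itself restricted to instances whose superstructure has \lfes\ bounded by a constant. This is \NP-hard, and I would justify it as follows.
\begin{itemize}
\item Ordyniak and Szeider's hardness reduction, or Chickering's, may not directly give bounded \lfes.
\item Instead, I would use the \NP-hardness of \BNSL\ already when the target is restricted to directed paths~\cite{Meek01}, or a simple direct reduction, e.g.\ from \textsc{3-SAT}.
\item The reduction should be designed so that the superstructure is a collection of pieces attached along a tree with few non-tree edges through any vertex.
\end{itemize}
A cleaner option is to take the source problem as \BNSLneq\ parameterized by \fes\ restricted to instances with $\fes\le c$ for a constant $c$. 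This is useless, though, since such instances are solvable in polynomial time by Theorem~\ref{thm:kernel}. So the source must be an \NP-hard fragment with $\lfes\le c$, and establishing this fragment is the main obstacle. My first attempt would be a reduction from \textsc{3-SAT} in which each variable and clause gadget is attached to a long path. The aim is that the scores encode the choices while the superstructure remains a ``tree plus local chords'', so that $\lfes$ stays constant even though \fes\ is unbounded.

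The composition itself is easy. Given $t$ instances $\III_1,\dots,\III_t$ of this fragment, I form their disjoint union and ask for score at least $\ell$. To make this an OR rather than an AND, I first normalize all instances to have the same target $\ell$ and the same number of vertices, using the polynomial equivalence relation. Then I add a new selector vertex $s$:
\begin{itemize}
\item For each $i$, $s$ has the parent set consisting of one designated vertex $r_i$ of instance $i$, with a score chosen so that picking instance $i$ compensates for the other instances.
\end{itemize}
A simpler alternative is to rescale scores so that each instance's optimum is either $\ell-1$ or at least $\ell$ (for example, after subtracting a baseline). The composed instance then asks whether the total reaches the sum of the maxima plus one.

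The superstructure of the composed instance is the disjoint union, made connected via $s$ or a path through $r_1,\dots,r_t$. Connecting components by a tree adds no cycles, so \lfes\ stays at $c$. This is where \lfes\ beats \fes: disjoint unions keep \lfes\ bounded, which is exactly why a kernel is impossible here. Finally, any hypothetical algorithm producing $|V'|\in k^{\bigoh(1)}$ would, after normalizing score values, yield compressed instances whose size is polynomial in $k+\log t$, contradicting the framework.
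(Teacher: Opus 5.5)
Your proposal has a fatal gap at exactly the step you single out as the main obstacle. You want an \NP-hard fragment of \BNSLneq\ whose superstructures have $\lfes\le c$ for a fixed constant $c$. But Theorem~\ref{thm:lfes}, together with Theorem~\ref{thm:complfes}, solves \BNSLneq\ in time $2^{\bigoh(k^3)}\cdot |\III|^{\bigoh(1)}$ for $k=\lfes$. So every such fragment is polynomial-time solvable and cannot be \NP-hard unless $\cc{P}=\NP$. This is the same objection you correctly raised for \fes\ via Theorem~\ref{thm:kernel}, and it rules out the planned \textsc{3-SAT} gadgets. The misconception is about what a cross-composition requires: the parameter of the composed instance only has to be bounded by a polynomial in $\max_i|\III_i|+\log t$, not by a constant. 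Since \lfes\ does not grow under disjoint union (the tree path of each non-tree edge stays inside its own component), you can start from \emph{any} \NP-hard source. The paper uses the instances produced by the reduction in the proof of Theorem~\ref{thm:tcwhard}, and the \lfes\ of their union is at most the number of edges of a single input instance.

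The composition step also fails as an OR. In a disjoint union each component contributes its own optimum to the total independently. A selector vertex $s$ whose score depends only on which single $r_i$ it picks therefore cannot ``compensate'' for the other components, and any fixed threshold would have to depend on the unknown optima of the \NO-instances. ``Subtracting a baseline'' likewise does not force every \NO-instance to have optimum exactly $\ell-1$. Additive objectives naturally give an AND, and that is what the paper exploits. The instances from Theorem~\ref{thm:tcwhard} are \NP-hard, have parent-set encodings of size $\bigoh(|V|^2)$, and can never score above their target. After a polynomial equivalence relation equalizes sizes and targets $\ell_1$, the disjoint union with target $t\cdot\ell_1$ is a \YES-instance iff every $\III_i$ is. Drucker's AND-cross-composition theorem (Theorem~15.12 in Cygan et al.) then gives the lower bound. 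Finally, your closing step is too quick. Bounding $|V'|$ does not bound the number of non-zero parent sets, which may be exponential in $|V'|$, as the paper remarks after Theorem~\ref{thm:kernel}. So normalizing the score values alone does not produce a compression of size polynomial in $k+\log t$.
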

\fi \iflong
The intuitive reason for this is that \lfes\ is a ``local'' parameter that does not increase by, e.g., performing a disjoint union of two distinct instances (the same property is shared by many other well-known parameters such as treewidth, pathwidth, treedepth, clique-width, and treecut width).
We provide a formal proof of this claim at the end of Subsection~\ref{sub:lowerbounds}.

As our first step towards proving Theorem~\ref{thm:lfes}, we provide general conditions for when the union of two DAGs is a DAG as well. Let $D=(V,A)$ be a directed graph and $V'\subseteq V$. Denote by $\Con(V',D)$ the binary relation on $V'\times V'$ which specifies whether vertices from $V'$ are connected by a path in $D$: $\Con(V',D)=\{(v_1,v_2)\subseteq V'\times V'|\:\exists$ directed path from $v_1$ to $v_2$ in $D \}$. Similarly to arcs, we will use $v_1v_2\in $ as shorthand for $(v_1,v_2)$; we will also use $\trcl$ to denote the transitive closure.

\begin{lemma}\label{lem:con2dags}
Let $D_1$, $D_2$ be directed graphs with common vertices $V_{\emph{com}}=V(D_1)\cap V(D_2)$, $V_{\emph{com}} \subseteq V_1 \subseteq V(D_1)$, $V_{\emph{com}} \subseteq V_2\subseteq V(D_2)$. Then:
 \begin{itemize}
     \item (i) $\Con(V_1\cup V_2, D_1\cup D_2)=\trcl(\Con(V_1,D_1)\cup \Con(V_2,D_2))$;
     \item (ii) If $D_1$, $D_2$ are DAGs and $\Con(V_1\cup V_2, D_1\cup D_2$) is irreflexive, then $D_1\cup D_2$ is a DAG. 
 \end{itemize}
\end{lemma}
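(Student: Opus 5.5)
We prove (i) by showing both inclusions. For $\supseteq$: every pair in $\Con(V_1,D_1)$ or in $\Con(V_2,D_2)$ is witnessed by a directed path in $D_1$ or $D_2$, hence in $D_1\cup D_2$. So both relations lie in $\Con(V_1\cup V_2,D_1\cup D_2)$. The latter relation is transitive, since directed walks concatenate and a walk from $x$ to $y$ contains a path from $x$ to $y$. Hence it contains the transitive closure.

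For $\subseteq$, take a directed path $P$ in $D_1\cup D_2$ from $x\in V_1\cup V_2$ to $y\in V_1\cup V_2$. Every arc of $P$ belongs to $D_1$ or to $D_2$. Split $P$ into maximal consecutive segments whose arcs all come from the same $D_j$ (when an arc lies in both, assign it arbitrarily).
\begin{itemize}
\item Consider a breakpoint between a segment in $D_1$ and a segment in $D_2$. That vertex is incident to arcs of both graphs, so it lies in $V(D_1)\cap V(D_2)=V_{\mathrm{com}}\subseteq V_1\cap V_2$.
\item Consider the first segment, lying in $D_j$. Its start $x$ lies in $V_1\cup V_2$ and in $V(D_j)$.
\end{itemize}
The main point to check is that $x\in V_j$. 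If $x\in V_{3-j}\subseteq V(D_{3-j})$, then $x\in V_{\mathrm{com}}\subseteq V_j$; otherwise $x\in V_j$ directly. The same argument applies to the end $y$ of the last segment.

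Therefore each segment is a directed path in some $D_j$ whose endpoints lie in $V_j$, and so it is a pair of $\Con(V_j,D_j)$. Chaining the segments shows that $(x,y)$ lies in the transitive closure. If $P$ has no arcs, then $x=y$. This trivial case is excluded by the path convention (nonempty paths), or else handled consistently with reflexivity on both sides.

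For (ii), suppose that $D_1\cup D_2$ contains a directed cycle $C$. If all arcs of $C$ lie in a single $D_j$, this contradicts $D_j$ being a DAG. Otherwise $C$ uses arcs of both graphs, so it passes through a vertex $v$ at which an arc of $D_1$ meets an arc of $D_2$. As above, $v\in V_{\mathrm{com}}\subseteq V_1\cup V_2$. Traversing $C$ from $v$ back to $v$ gives a nonempty directed path, so $(v,v)\in\Con(V_1\cup V_2,D_1\cup D_2)$. This contradicts irreflexivity, so $D_1\cup D_2$ is a DAG.

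The only delicate step is the endpoint bookkeeping in (i): the first and last segments must start and end in the correct $V_j$, and this is exactly where the assumption $V_{\mathrm{com}}\subseteq V_1,V_2$ is used.
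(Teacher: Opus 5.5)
Your proof is correct. Part (i) uses the same idea as the paper, while part (ii) takes a slightly shorter route.

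For (i), the paper inducts on the length of a shortest $x$--$y$ path. When the path is not contained in a single $D_i$, it splits the path at an intermediate vertex of $V_{\mathrm{com}}$. Your one-shot segmentation at the switching vertices is that induction unrolled. Your explicit check that the first and last segments start and end in the correct $V_j$ fills in a step the paper leaves implicit: its base case and its ``$P$ is contained in some $D_i$'' case tacitly assume $x,y\in V_i$.

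For (ii), the paper takes a shortest cycle and finds two distinct vertices $x,y\in V_{\mathrm{com}}$ on it. It then uses (i) to place $xy$ and $yx$ in $\trcl(\Con(V_1,D_1)\cup\Con(V_2,D_2))$ and obtains $xx$ by transitivity. You instead take a single switching vertex $v$ and read $(v,v)\in\Con(V_1\cup V_2,D_1\cup D_2)$ directly off the cycle, so your (ii) no longer depends on (i).

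Your version relies only on the reading that a nonempty closed directed path through $v$ witnesses $(v,v)\in\Con$. This is exactly the reading the lemma needs:
\begin{itemize}
\item If $(v,v)$ could never lie in $\Con$, then (i) itself would fail. Take $D_1$ consisting of the arc $xy$, $D_2$ of the arc $yx$, and $V_1=V_2=\{x,y\}$; the transitive closure then contains $xx$ but $\Con$ does not.
\item If trivial paths counted, irreflexivity could never hold and (ii) would be vacuous.
\end{itemize}
So the paper implicitly uses the same convention, and your shortcut for (ii) loses nothing.
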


\begin{proof}
(i) Denote $R_i:=\Con(V_i,D_i)$, $i=1,2$. Obviously $\trcl(R_1\cup R_2)$ is a subset of $\Con(V_1\cup V_2, D_1\cup D_2)$. Assume that for some $x$, $y \in V_1\cup V_2$ there exists a directed path $P$ from $x$ to $y$ in $D_1\cup D_2$. We will show (by induction on the length $l$ of shortest $P$) that $xy \in \trcl(R_1\cup R_2)$.
\begin{itemize}
     \item $l=1$: in this case there is an arc $xy$ in some $D_i$, so $xy \in R_i \subseteq \trcl(R_1\cup R_2)$
     \item $l\to l+1$. If P is completely contained in some $D_i$, then $xy \in R_i \subseteq \trcl(R_1\cup R_2)$. Otherwise $P$ must contain arcs $e\notin A(D_1)$, $f\notin A(D_2)$. Then there is $w\in V_{\emph{com}} \subseteq V_1 \cup V_2$ between them. By the induction hypothesis $xw \in \trcl(R_1\cup R_2)$ and  $wy \in \trcl(R_1\cup R_2)$, so $xy \in \trcl(R_1\cup R_2)$
 \end{itemize}
 (ii) The precondition implies that the digraph $D_1\cup D_2$ induced on $V_1\cup V_2$ is a DAG. 
 Assume that $D_1\cup D_2$ is not a DAG and let $C$ be a shortest directed cycle in $D_1\cup D_2$. As $D_1$ and $D_2$ are DAGs, $C$ must contain arcs $e\notin A(D_1)$, $f\notin A(D_2)$. So there are least 2 different vertices $x,y$ from $V_{\emph{com}}$ in $C$. By (i) we have that $xy \in \trcl(R_1\cup R_2)$ and $yx \in \trcl(R_1\cup R_2)$, then also $xx \in \trcl(R_1\cup R_2)$, which contradicts irreflexivity.
\end{proof}

\fi

Towards proving Theorem~\ref{thm:lfes}, assume that we are given an instance  $\III=(V,\FFF,\ell)$ of \BNSLneq\ with connected superstructure graph $G=(V,E)$. Let $T$ be a fixed rooted spanning tree of $G$ such that $\lfes(G,T)=\lfes(G)=k$, denote the root by $r$.
For $v \in V(T)$, let $T_v$ be the subtree of $T$ rooted at $v$, let $V_v=V(T_v)$, and let $\bar V_v=N_G(V_v)\cup V_v$. We define the \emph{boundary} $\delta(v)$ of $v$ to be the set of endpoints of all edges in $G$ with precisely one endpoint in $V_v$ (observe that the boundary can never have a size of $1$). 
\ifshort
Notice that $|\delta(v)|\leq 2k+2$.
\fi
\iflong
$v$ is called \emph{closed} if $|\delta(v)| \le 2$ and \emph{open} otherwise. 
We begin by establishing some basic properties of the local feedback edge set.

\begin{observation}
\label{obs:basiclfesproperties}
Let $v$ be a vertex of $T$. Then:
\begin{enumerate}[itemsep=0pt]
\item For every closed child $w$ of $v$ in $T$, it holds that $\delta (w)=\{v,w\}$ and $vw$ is the only edge between $V_w$ and $V\setminus V_w$ in $G$.
\item $|\delta(v)|\leq 2k+2$.  \item Let $\{v_i|i\in [t]\}$ be the set of all open children of $v$ in $T$. Then $t\leq 2k$ and\\ $\delta(v)\subseteq \cup _{i=1}^t \delta(v_i) \cup \{v\}\cup N_G(v)$
\end{enumerate}
\end{observation}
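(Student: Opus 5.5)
The plan is to derive all three items from one structural fact. Every edge of $G$ with exactly one endpoint in $V_w$ is either the tree edge joining $w$ to its parent, or a non-tree edge $xy$ with $x\in V_w$ and $y\notin V_w$. For such a non-tree edge, the unique $T$-path from $x$ to $y$ leaves the subtree $T_w$, so it passes through both $w$ and the parent of $w$. Hence it belongs to $E_{\loc}^T(w)$, and also to $E_{\loc}^T(p)$ where $p$ is the parent of $w$. Consequently the cut between $V_w$ and $V\setminus V_w$ consists of at most one tree edge together with at most $k$ non-tree edges.

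For item 2, each of these at most $k+1$ edges contributes at most two endpoints to $\delta(v)$, which gives $|\delta(v)|\le 2k+2$ immediately. For item 1, let $w$ be a child of $v$. The tree edge $vw$ always lies in the cut, so $\{v,w\}\subseteq\delta(w)$. If $w$ is closed, then $|\delta(w)|\le 2$ forces $\delta(w)=\{v,w\}$. Any other cut edge would be a non-tree edge with one endpoint in $V_w$ and the other outside, and both endpoints lie in $\delta(w)$. The only candidate would therefore be the edge $wv$, which is the tree edge itself. So $vw$ is the only edge between $V_w$ and $V\setminus V_w$, using that $G$ is simple.

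For item 3, first bound the number $t$ of open children. An open child $w$ has a cut containing some non-tree edge $xy$ with $x\in V_w$, $y\notin V_w$; otherwise the cut would be just $vw$ and $w$ would be closed. By the fact above, the $T$-path of $xy$ passes through $v$, so $xy\in E_{\loc}^T(v)$. Each such non-tree edge can be charged to at most two children of $v$, namely the ones whose subtrees contain its endpoints. Hence $t\le 2|E_{\loc}^T(v)|\le 2k$.

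For the inclusion in item 3, take any $z\in\delta(v)$ arising from an edge $xy$ with $x\in V_v$ and $y\notin V_v$. If $x=v$, then $y\in N_G(v)$, so both $x$ and $y$ lie in $\{v\}\cup N_G(v)$. Otherwise $x$ lies in $V_w$ for some child $w$ of $v$, and $xy$ leaves $V_w$, so $x,y\in\delta(w)$. If $w$ is open we are done. If $w$ were closed, item 1 would force $xy=vw$, contradicting $y\notin V_v$.

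I expect the main obstacle to be the careful handling of the tree edge versus non-tree edges, and the factor $2$ in the count of open children. The rest is bookkeeping.
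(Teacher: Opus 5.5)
Your proposal is correct and follows essentially the same route as the paper: every non-tree edge leaving $V_w$ lies in $E_{\loc}^T(w)$ and in the local feedback edge set of $w$'s parent. This gives the $2k+2$ bound, the charging of open children to (endpoints of) edges of $E_{\loc}^T(v)$ to get $t\le 2k$, and the same case split ($x=v$, $x$ in an open child's subtree, $x$ in a closed child's subtree) for the inclusion. Your explicit argument for item 1, via $vw$ being a tree edge and $|\delta(w)|\le 2$, spells out what the paper dismisses as following from connectivity and the definition of the boundary.
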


\begin{proof}
The first claim follows by the connectivity assumption on $G$ and the definition of boundary. 

For the second claim, clearly $\delta(r)=\emptyset$. Let $v\neq r$ have the parent $u$, and consider an arbitrary $w\in \delta(v)\setminus \{u,v\}$. Then there is an edge $ww'\in E(G)$ with precisely one endpoint in $V_v$ and $ww'\ne uv$. Hence $ww' \not \in E(T)$ and the path between $w$ and $w'$ in T contains $v$, and this implies $ww'\in E_{loc}^T(v)$ by definition. Consequently, $w\in V_{loc}^T(v)$. 
For the claimed bound we note that $|V_{loc}^T(v)| \le 2 |E_{loc}^T(v)| \le 2k$.

For the third claim, let $w=v_i$ for some $i\in[t]$. As $w$ is open, there exists an edge $e\neq vw$ between $V_w$ and $V\setminus V_w$ in $G$. By definition of local feedback edge set, $e \in E_{loc}^T(v)$. Let $x_w$ be the endpoint of $e$ that belongs to $V_w$, then $x_w \in V_{loc}^T(v)$ and $x_w\not \in V_{w'}$ for any open child $w'\neq w$ of $v$. But $|V_{loc}^T(v)| \le 2k$, which yields the bound on number $t$ of open children. \\
For the boundary inclusion, consider any edge $c$ in $G$ with precisely one endpoint $x_v$ in $V_v$. Note that $x_v$ can not belong to $V_w$ for any closed child $w$ of $v$. If $x_v\in V_{v_i}$ for some $i\in [t]$, then endpoints of $c$ belong to $\delta(v_i)$. Otherwise $x_v=v$ and therefore the second endpoint of $c$ is in $N_G(v)$.
\end{proof}

With Observation~\ref{obs:basiclfesproperties} in hand, we can proceed to a definition of the records used in our dynamic program. 
\fi
\ifshort
We can now proceed to a definition of the records that will be used in our dynamic program.
\fi
Intuitively, these records will be computed in a leaf-to-root fashion and will store at each vertex $v$ information about the best score that can be achieved by a partial solution that intersects the subtree rooted at $v$.

Let $R$ be a binary relation on $\delta(v)$ and $s$ an integer. For $s\in \mathbb{Z}$,  we say that $(R:s)$ is a \emph{record} for a vertex $v$ if and only if there exists a DAG $D$ on $\bar V_v$ such that 
\ifshort
(1) $w\in V_v$ for each arc $uw\in A(D)$, (2) $ab\in R$ if there exists an $a$-$b$ path in $D$, and (3) $s$ is the total score achieved by $D$ on vertices in $V_v$.
\fi
\iflong
(1) $w\in V_v$ for each arc $uw\in A(D)$, (2) $R=\Con(\delta(v), D)$ and (3) $\sum_{u\in V_v} f_u(P_D(u)) = s$. \fi
The records $(R,s)$ where $s$ is maximal for fixed $R$ are called \emph{valid}. Denote the set of all valid records for $v$ by $\mathcal R (v)$, and note that $|\mathcal{R}(v)|\leq 2^{\bigoh(k^2)}$.

Observe that if $v_i$ is a closed child of $v$, then \iflong by Observation \ref {obs:basiclfesproperties}.1 \fi $\mathcal R(v_i)$ consists of precisely two valid records: one for $R=\emptyset$ and one for $R=\{v v_i\}$. Moreover, the root $r$ of $T$ has only a single valid record $(\emptyset:s_{\III})$, where $s_\III$ is the maximum score that can be achieved by a solution in $\III$. The following lemma lies at the heart of our result and shows how we can compute our records in a leaf-to-root fashion along $T$.

\iflong
\begin{lemma}
\fi
\ifshort
\begin{lemma}
\fi
\label{lem:combRecordsnew} 
Let $v\in V(G)$ have $m$ children in $T$ where $m>0$, and assume we have computed $\mathcal{R}(v_i)$ for each child $v_i$ of $v$. Then $\mathcal{R}(v)$ can be computed in time at most $m\cdot |\Gamma_f(v)|\cdot 2^{\bigoh(k^3)}$.
\end{lemma}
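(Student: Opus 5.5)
We compute $\mathcal R(v)$ by guessing the parent set of $v$ and then merging the children's records one at a time. The merging is governed by Lemma~\ref{lem:con2dags}. Throughout, let $v_1,\dots,v_t$ be the open children of $v$, with $t\le 2k$ by Observation~\ref{obs:basiclfesproperties}.3, and let the remaining children be closed.

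\textbf{Step 1: the local DAG at $v$.} We branch over each $P\in\Gamma_f(v)\cup\{\emptyset\}$ as the parent set of $v$. Since the superstructure contains the edge $pv$ for every $p\in P$, each such $p$ is either a child of $v$, the parent of $v$, or an endpoint of an edge of $E_{\loc}^T(v)$. Let $D_v$ be the star consisting of the arcs $pv$ for $p\in P$. Its relevant vertex set $W_0=\{v\}\cup P$, intersected with the neighbourhood, has size $\bigoh(k)$ apart from closed children. We make this precise as follows. Every arc into a vertex of $V_{v_i}$ ends inside $V_{v_i}$, so the subsolutions of different children interact only through their boundaries $\delta(v_i)$ and through $v$ and its neighbours.

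\textbf{Step 2: closed children.} A closed child $w$ has only the two records, for $R=\emptyset$ and $R=\{vw\}$. If $w\notin P$, then $w$ can only create the path $v\to w$. This path lies inside $V_w$ and cannot close a cycle, so we simply take the larger of the two scores. If $w\in P$, the arc $wv$ is present, so we must take the record with $R=\emptyset$, since $vw$ together with $wv$ would form a cycle. Each closed child is thus handled in constant time after fixing $P$. This contributes the factor $m$ in the running time.

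\textbf{Step 3: open children.} For each open child we branch over a valid record $(R_i:s_i)\in\mathcal R(v_i)$. There are $2^{\bigoh(k^2)}$ choices per child and at most $2k$ open children, which gives $2^{\bigoh(k^3)}$ combinations in total. Set $U=\{v\}\cup P\cup\bigcup_i\delta(v_i)\cup\delta(v)$, which has size $\bigoh(k^2)$ apart from closed children, who do not matter. We then compute
\[
R^*=\trcl\Big(\Con(W_0,D_v)\cup \bigcup_i R_i\Big).
\]
By Lemma~\ref{lem:con2dags}(i), applied inductively over the children (whose DAGs share only boundary vertices), $R^*$ equals the connectivity relation of the union DAG on $U$. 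By Lemma~\ref{lem:con2dags}(ii), the union is acyclic if and only if $R^*$ is irreflexive, so we discard the combination otherwise. If it survives, we output $(R^*|_{\delta(v)}:\ f_v(P)+\sum_i s_i+\sum_{\text{closed}}\cdot)$, using $\delta(v)\subseteq U$ from Observation~\ref{obs:basiclfesproperties}.3. For each $R$ we keep only the maximum score. Computing the transitive closure on $\bigoh(k^2)$ elements is polynomial in $k$.

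\textbf{Step 4: correctness, the main obstacle.} Soundness follows from Lemma~\ref{lem:con2dags}: the glued DAG witnesses the record. Completeness is the harder direction, and I expect it to be the main obstacle. Given an optimal witness DAG $D$ for $(R:s)$ at $v$, we restrict it to each $\bar V_{v_i}$ to obtain a record for $v_i$. We then replace that record by the valid record with the same relation, which has score at least as large. Finally we argue that the swap preserves both acyclicity and $\Con(\delta(v),\cdot)$, again via Lemma~\ref{lem:con2dags}(i). For this we must check carefully that the children's DAGs pairwise intersect only inside their boundaries. Arcs of the child DAGs have heads in $V_{v_i}$, and any vertex of $V_{v_i}$ adjacent to something outside lies in $\delta(v_i)$, so the shared vertices are indeed boundary vertices. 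This justifies the composition. The running time is $m\cdot|\Gamma_f(v)|\cdot 2^{\bigoh(k^3)}$.
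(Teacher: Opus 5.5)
Your proposal is correct and follows the paper's proof essentially step for step. Like the paper, you branch over $P\in\Gamma_f(v)$ and over valid records of the at most $2k$ open children, resolve closed children greedily (forcing the $\emptyset$ record when $v_i\in P$), check irreflexivity of $\trcl\bigl(R_0\cup\bigcup_{i\in[t]}R_i\bigr)$, restrict to $\delta(v)$, keep the maximum score per relation, and justify correctness via Lemma~\ref{lem:con2dags}, as in the paper's Claim; your completeness step (swapping in each child's valid record with the same relation) is just a repackaging of part (b) of that Claim, and adding $\emptyset$ to the branching set is a harmless, slightly more careful deviation.
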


\iflong
\begin{proof}
Without loss of generality, let the open children of $v\in V(G)$ be $v_1,\dots,v_t$ and let the remaining (i.e., closed) children of $v$ be $v_{t+1},\dots,v_m$; recall that by Point 3. of Observation~\ref{obs:basiclfesproperties}, $t\leq 2k$. For each closed child $v_j$, $j\in [m]\setminus [t]$, let $s^{\emptyset}_j$ be the second component of the valid record for $\emptyset\in \mathcal{R}(v_j)$, and let $s^{\times}_j$ be the second component of the valid record for the single non-empty relation in $\mathcal{R}(v_j)$. Consider the following procedure $\mathbb{A}$. 

First, $\mathbb{A}$ branches over all choices of $P\in \Gamma_f(v)$ and all choices of $(R_i,s_i)\in \mathcal{R}(v_i)$ for each individual open child $v_i$ of $v$. Let $R_0=\SB pv \SM p\in P \SE$ and let $R'=\bigcup_{j\in [t]_0} R_j$.  If $\trcl(R')$ is not irreflexive, we discard this branch; otherwise, we proceed as follows. Let $R_\emph{new}$ be the subset of $R'$ containing all arcs $uw$ such that $w\in V_v$. Moreover, let $s_\emph{new}=f_v(P)+(\sum_{i\in [t]}s_i)+(\sum_{i\in [m]\setminus [t]~|~v_i\in P}s^\emptyset_i)+(\sum_{i\in [m]\setminus [t]~|~v_i\not \in P}(\max (s^\emptyset_i,s^\times_i))$. 

The algorithm $\mathbb{A}$ gradually constructs a set $\mathcal{R}^*(v)$ as follows. At the beginning, $\mathcal{R}^*(v)=\emptyset$. For each newly obtained tuple $(R_\emph{new},s_\emph{new})$, $\mathbb{A}$ checks whether $\mathcal{R}^*(v)$ already contains a tuple with $R_\emph{new}$ as its first element; if not, we add the new tuple to $\mathcal{R}^*(v)$. If there already exists such a tuple $(R_\emph{new},s_\emph{old})\in \mathcal{R}^*(v)$, we replace it with $(R_\emph{new},\max(s_\emph{old},s_\emph{new}))$.
 
For the running time, recall that in order to construct $\mathcal R^* (v)$ the algorithm branched over $|\Gamma_f(v)|$-many possible parent sets of $v$ and over the choice of at most $2k$-many binary relations $R_i$ on the boundaries of open children. According to Observation~\ref{obs:basiclfesproperties}.2, there are at most $3^{(2k+2)^2}$ options for every such relation, so we have at most $\bigoh((3^{(2k+2)^2})^{2k}\cdot|\Gamma_f(v)|)  \le 2^{\bigoh(k^3)}\cdot |\Gamma_f(v)|$ branches. In every branch we compute $\trcl(R')$ in time $k^{\bigoh(1)}$ and then compute the value of $s_\emph{new}$ using the equation provided above before updating $\mathcal R^* (v)$, which takes time at most $\bigoh(m)$.
 
Finally, to establish correctness it suffices to prove following claim:
\begin{claim}
\label{cl:combRecords}
$(R:s)$ is a record for $v$ if and only if there exist $P\in \Gamma_f(v)$ and records $(R_i:s_i)$ for $v_i$, $i\in[m]$, such that:
\begin{itemize}
    \item $\trcl (\cup_{i=0}^t R_i)$ is irreflexive;
    \item $R_i=\emptyset$ for any closed child $v_i\in P$; 
    \item $\sum_{i=1}^m s_i+f_v(P)=s$;
    \item $R= (\trcl (\cup_{i=0}^t R_i))|_{\delta(v)\times \delta(v)}$.
\end{itemize}
Moreover, if $(R:s)\in \mathcal R(v)$ then in addition:
  \begin{itemize}
  \item $(R_i:s_i)\in \mathcal R(v_i)$, $i \in [t]$;
  \item for every closed child $v_i\not \in P$, $s_i=\max (s^\emptyset_i,s^\times_i)$.
\end{itemize}
\end{claim}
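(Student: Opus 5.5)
We prove both directions of the equivalence separately, treating $R_0=\SB pv \SM p\in P\SE$ as the relation contributed by $v$ itself, and then derive the ``moreover'' part by an exchange argument.

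\textbf{Forward direction.} Let $(R:s)$ be a record for $v$, witnessed by a DAG $D$ on $\bar V_v$. We set $P=P_D(v)$. Since $f_v(P)$ contributes to $s$ and we may assume scores are only nonzero for $P\in\Gamma_f(v)$, we have $P\in\Gamma_f(v)$; if $f_v(P)=0$ we can replace $P$ by $\emptyset\in\Gamma_f(v)$, or simply treat $P$ as a zero-score set.

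For each child $v_i$, let $D_i$ be the subgraph of $D$ consisting of all arcs whose head lies in $V_{v_i}$, restricted to $\bar V_{v_i}$. Every arc with head in $V_{v_i}$ has its tail in $\bar V_{v_i}$ by the definition of the superstructure, so $D_i$ is a DAG witnessing a record $(R_i:s_i)$ for $v_i$ with $R_i=\Con(\delta(v_i),D_i)$. The score of $D$ on $V_v$ splits as $f_v(P)+\sum_i s_i$.

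Next we view $D$ as the union of $D_0=(\{v\}\cup P,\{pv\})$ and the graphs $D_i$. We then apply Lemma~\ref{lem:con2dags}(i) iteratively, taking the vertex sets $\delta(v_i)$ together with $\{v\}\cup P$. The common vertices of distinct pieces lie in these boundaries, because an edge leaving $V_{v_i}$ has both endpoints in $\delta(v_i)$. This gives $\Con(\bigcup_i\delta(v_i)\cup\{v\}\cup P,D)=\trcl(\bigcup_{i}R_i)$.

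A closed child contributes only $v$ and $v_i$, so its relation either is empty or equals $\{vv_i\}$ and adds nothing new. We may therefore restrict the union to $i\in[t]_0$, treating a closed $v_i\in P$ via $R_0$. If $v_i\in P$ for a closed child, acyclicity forces $vv_i\notin D$, so $R_i=\emptyset$.

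Irreflexivity of the union follows from acyclicity of $D$. Finally, by Observation~\ref{obs:basiclfesproperties}.3 we have $\delta(v)\subseteq\bigcup\delta(v_i)\cup\{v\}\cup N_G(v)$, so restricting to $\delta(v)$ yields $R$.

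\textbf{Backward direction.} Take witnessing DAGs $D_i$ for the records $(R_i:s_i)$ and let $D=D_0\cup\bigcup_i D_i$. Since the heads of the pieces are disjoint, no vertex receives arcs from two pieces, so parent sets are preserved and the score equals $s$. By Lemma~\ref{lem:con2dags}(ii), irreflexivity of $\trcl$ implies that $D$ is a DAG. Lemma~\ref{lem:con2dags}(i) then gives $\Con(\delta(v),D)=R$.

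The main obstacle is bookkeeping on the vertex sets when applying Lemma~\ref{lem:con2dags}: its hypothesis requires $V_{\mathrm{com}}\subseteq V_1,V_2$. We must check that all shared vertices between the pieces lie in $\bigcup_i\delta(v_i)\cup\{v\}\cup P$, and that when a closed child $v_i\in P$ is present, the arc $v_iv$ coming from $R_0$ together with $R_i=\emptyset$ creates no hidden cycle. For a closed child $v_i\notin P$, the arc $vv_i$ is harmless because $v_i$ has no other exits from $V_{v_i}$.

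\textbf{Moreover part.} Suppose $(R:s)\in\mathcal R(v)$. If some open child had $(R_i:s_i)$ with $s_i$ not maximal for $R_i$, we could swap in the valid record with the same $R_i$. All four conditions depend on $R_i$ only through the relation itself, so the swap preserves them while increasing $s$, contradicting maximality. Likewise, for a closed child $v_i\notin P$, either choice of $R_i$ is admissible, so $s_i$ must equal $\max(s^\emptyset_i,s^\times_i)$.
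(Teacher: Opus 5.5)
Your proposal is correct and follows the paper's proof essentially step for step. It splits a witness $D$ by arc heads into $D_0$ and the $D_i$, relates connectivity through Lemma~\ref{lem:con2dags} and restricts to $\delta(v)$ via Observation~\ref{obs:basiclfesproperties}.3, uses the fact that no directed path can pass through $V_{v_i}$ for a closed child (so only $R_0,\dots,R_t$ matter and $R_i=\emptyset$ is forced for closed $v_i\in P$), and gets the ``moreover'' part by the same exchange argument.

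One small caution: replacing a zero-score $P_D(v)$ by $\emptyset$ would drop the arcs of $R_0$ and can change $R$. Keep $P=P_D(v)$ and treat it as an admissible zero-score parent set, which is what the paper implicitly does.
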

 \emph{Proof of the Claim.}
(a) ($\Leftarrow$) Denote $V_i=V_{v_i}$ and $\bar V_i=\bar V_{v_i}$, $i\in[m]$.  For every $i\in[m]$ there exists DAG $D_i$ on $\bar V_i$ such that all its arcs finish in $V_i$, $R_i=\Con(\delta(v_i), D_i)$ and $\sum_{u\in V_i} f_u(P_{D_i}(u)) = s_i$. Denote by $D_0$ DAG on $V_0=v\cup N_G(v)$ with arc set $R_0$. We will construct the witness $D$ of $(R,s)$ by gluing together all $D_i$, $i\in [m]_0$. \\\\
 We start from $D_0$ and DAGs of open children. Note that $\Con(V_0, D_0)=R_0$ and $\Con(\delta(v_i), D_i)=R_i$ for $i\in [t]$ . Inductive application of Lemma \ref{lem:con2dags} to DAGs $D_i$, $i\in[t]$, yields $\Con(\cup _{i=1}^t \delta(v_i) \cup V_0, D^*)=\trcl(\cup _{i=0}^t R_i)$. In particular, as $ \delta(v)\subseteq \cup _{i=1}^t \delta(v_i) \cup V_0$  by Observation~\ref{obs:basiclfesproperties}.3, we have that $\Con(\delta(v), D^*)= (\trcl(\cup _{i=0}^t R_i))|_{\delta(v)\times \delta(v)}=R$. As $\trcl (\cup _{i=0}^t R_i)$ is irreflexive, $D^*=\cup_{i=0}^t D_i$ is DAG by Lemma \ref{lem:con2dags}.\\\\Now we add to $D^*$ DAGs for closed children and finally obtain $D=\cup_{i=t+1}^m D_i \cup D^*$. For every closed child $v_i$, $D_i$ is by Observation~\ref{obs:basiclfesproperties}.1 the union of $v$ and  $D_i\setminus v$, plus at most one of arcs $vv_i, v_iv$ between them (recall $R_i=\emptyset$ for any closed child $v_i\in P$). Note that  $D_i\setminus v$ can share only $v_i$ with $D_0$ and doesn't have common vertices with any other $D_j$.  Therefore any directed path in $D$ starting and finishing outside outside of $V_i$, $i>t$, doesn't intersect $V_i$. In particular, acyclicity of $D^*$ and $D_i$, $i\in[m]\setminus[t]$, implies acyclicity of $D$; $\Con(\delta(v), D)=\Con(\delta(v), D^*)=R$.\\\\ All the arcs in $D_i$ finish in $V_i$, so parent set for every $x_i\in D_i$ in $D$ is the same as in $D_i$, $ i \in [m]$. Also parent set of $v$ in $D$ is the same as in $D_0$. So $$\sum_{u\in V_v} f_u(P_D(u)) = \sum_{i=1}^m \sum_{u\in V_i} f_u(P_{D_i}(u)) + f_v(P_{D_0}(v)) = \sum_{i=1}^m s_i+f_v(P)=s$$   

 $(\Rightarrow)$ Let $D$ be a witness for $(R:s)$, i.e. $D$ is DAG on $\bar V_v$ with all arcs finishing in $V_v$ such that $\sum_{u\in V_v} f_u(P_D(u)) =s$ and $\Con(\delta(v),D)=R$. For $i=1\in[m]$ define $D'_i=D[\bar V_i]$ and let $D_i$ be obtained from $D'_i$ by deleting arcs that finish outside $V_i$. Note that $\cup_{i=1}^m D_i= D$. Let $R_i=\Con(\delta(v_i),D_i)$, as in ($\Leftarrow$) we have that $R=\Con(\delta(v), D)=\trcl (\cup_{i=0}^t R_i))|_{\delta(v)\times \delta(v)}$. As $D$ is DAG, $\trcl (\cup_{i=0}^t R_i)$ is irreflexive and $R_i=\emptyset$ for any closed child $v_i\in P$.  Local score for $D_i$ is $$s_i=\sum_{u\in V_i} f_u(P_{D_i}(u))=\sum_{u\in V_i} f_u(P_{D'_i}(u))=\sum_{u\in V_i} f_u(P_{D}(u))$$ So $v_i$ has record $(R_i:s_i)$. Denote $P=P_D(v)$. Then:
 $$s=\sum_{u\in V_v} f_u(P_D(u)) = \sum_{i=1}^m \sum_{u\in V_i} f_u(P_D(u)) + f_v(P_D(v)) = \sum_{i=1}^m s_i+f_v(P)$$
 (b)
 Let $(R:s)\in \mathcal R(v)$ and all $D, P, D_i, R_i, s_i$ are as in $(a)(\Rightarrow)$. Assume that for some $i$ $(R_i,s_i)$ is not valid record of $v_i$. In this case $v_i$ must have a record $(R_i:s_i+\Delta)$ with $\Delta > 0$. But then $(a)(\Leftarrow)$ implies that $v$ has record $(R:s+\Delta)$, which contradicts to validity of $(R:s)$\\\\
Assume that some closed $v_i\not \in P$ has valid record $(R_i', s_i+\Delta)$ with $\Delta > 0$. $R'$ and $R$ differ only by arc $vv_i$, so addition or deletion of the arc to $D$ would increase the total score by $\Delta > 0$ without creating cycles. This would result in record $(R:s+\Delta)$ and yield a contradiction with validity of $(R:s)$. \hfill $\blacksquare$
 \end{proof}

We are now ready to prove the main result of this subsection.

\begin{proof}[Proof of Theorem~\ref{thm:lfes}] We provide an algorithm that solves \BNSLneq\ in time $2^{O(k^3)}\cdot n^3$, where $n=|\III|$, assuming that a spanning tree $T$ of $G$ such that $\lfes(G,T)=k$ is provided as part of the input. Once that is done, the theorem will follow from Theorem~\ref{thm:complfes}.

The algorithm computes $\mathcal{R} (v)$ for every node $v$ in $T$, moving from leaves to the root:
\begin{itemize}
    \item For a leaf $v$, compute $\mathcal{R}^*(v) := \{(R_P:f_v(P))|\:P\in \Gamma_f(v),\: R_P =  \{uv|u \in P\}  \}$. This can be done by simply looping over $\Gamma_f(v)$ in time $\bigoh(n)$.
         Note that $\mathcal{R}^*(v)$ is the set of all records of $v$, so we can correctly set  $\mathcal R(v):= \{(R:s)\in \mathcal{R}^* (v) |$ there is no $(R:s')\in \mathcal R^* (v)$ with $s'>s\}$.
    \item Let $v\in V(G)$ have at least one child in $T$, and assume we have computed $\mathcal{R}(v_i)$ for each child $v_i$ of $v$. Then we invoke Lemma~\ref{lem:combRecordsnew} to compute $\mathcal{R}(v)$  in time at most $m\cdot |\Gamma_f(v)|\cdot 2^{\bigoh(k^2)} \leq 2^{\bigoh(k^2)}\cdot n^2$.\qedhere
\end{itemize}
 \end{proof}
\fi

\ifshort
To prove Theorem~\ref{thm:lfes}, we start by invoking Theorem~\ref{thm:complfes} to obtain a spanning tree $T$ and then compute the records $\mathcal{R}(v)$ for each leaf of $T$ via exhaustive branching. We then apply Lemma~\ref{lem:combRecordsnew} to propagate our record sets towards the root $r$ of $T$; once we obtain $\mathcal{R}(r)$, we can output a solution in constant time. The runtime of the dynamic programming procedure used in the proof of Theorem~\ref{thm:lfes} is upper-bounded by $|\III|^3\cdot 2^{\bigoh(k^3)}$. 

For our final result for this section, recall that \lfes\ lies between \fes\ and treecut width in the parameter hierarchy (see Proposition~\ref{pro:lfescompare}). Since \BNSLneq\ is \FPT\ when parameterized by \lfes, the next step would be to ask whether this tractability result can be lifted to treecut width. Below, we answer this question negatively via a reduction from the \W$[1]$-hard \textsc{Multicolored Clique} problem~\cite{DowneyFellows13,CyganFKLMPPS15}.

\begin{theorem}[]
\label{thm:tcwhard}
\BNSLneq is \W$[1]$-hard when parameterized by the treecut width of the superstructure.
  \end{theorem}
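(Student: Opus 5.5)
We prove Theorem~\ref{thm:tcwhard} by a parameterized reduction from \textsc{Multicolored Clique}. Given a graph $H$ with vertex set partitioned into color classes $U_1,\dots,U_k$, we ask for a clique with exactly one vertex in each class. We will build an instance $\III$ of \BNSLneq\ whose superstructure has a vertex cover $X$ with $|X|\in\bigoh(k^2)$, such that every vertex outside $X$ has degree at most $2$. By the remark in the preliminaries (graphs with such a vertex cover have treecut width at most $|X|$~\cite{GanianO21}), this gives $\tcw(G_\III)\le|X|=\bigoh(k^2)$, so the parameter is bounded.

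The construction proceeds as follows.
\begin{itemize}
\item \textbf{Selector vertices.} For each color $i$, add a selector vertex $x_i\in X$. For each pair $i<j$, add a pair vertex $y_{ij}\in X$.
\item \textbf{Encoding a choice.} Selection of $u\in U_i$ is encoded by the orientation/parent set of $x_i$. The difficulty is that parent sets may be arbitrary in the non-zero representation, but the superstructure must keep non-$X$ vertices at degree $\le 2$. So we cannot let $x_i$ pick among many private neighbors that are themselves adjacent to other selectors.
\item \textbf{Checking adjacency via acyclicity.} Instead, for every edge $e=uw$ of $H$ with $u\in U_i$, $w\in U_j$, add a degree-$2$ vertex $z_e$ adjacent to $x_i$ and $y_{ij}$ (or two such vertices, one per endpoint). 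The identity of the chosen vertex/edge is encoded by which subset of these neighbors a vertex in $X$ takes as parents. Concretely, $x_i$ has, for each $u\in U_i$, a candidate parent set consisting of all $z_e$ with $e$ incident to a vertex of $U_i$ other than $u$, each with the same high score. Thus $x_i$ ``points away'' from exactly the edges at $u$.
\item \textbf{Consistency.} Each $y_{ij}$ similarly has, for each edge $e$ between $U_i$ and $U_j$, a candidate parent set excluding $z_e$. Each $z_e$ gets a large score only for parent set $\{x_i\}$ or $\{y_{ij}\}$, chosen so that a directed cycle $x_i\to z_e\to y_{ij}\to z_{e'}\to x_i$ arises exactly when the choices of $x_i$ and $y_{ij}$ are inconsistent. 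The intended target $\ell$ is reachable only if every selector and every degree-$2$ vertex gets its maximum score.
\end{itemize}

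We then argue both directions. A multicolored clique gives an acyclic orientation attaining $\ell$: orient the consistency edges along a global order (all $x$'s before $y$'s on non-selected gadgets, with the selected edge handled so no cycle closes). Conversely, a DAG of score $\ge\ell$ forces each $x_i$ and $y_{ij}$ to pick exactly one candidate set. Acyclicity then forces the edge chosen at $y_{ij}$ to have endpoints equal to the vertices chosen at $x_i$ and $x_j$, so the chosen vertices form a clique. The reduction is polynomial and the parameter is $\bigoh(k^2)$.

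The main obstacle is designing the gadget so that inconsistency is punished by a \emph{cycle} or score loss, while consistent non-selected gadgets do not accidentally create cycles. With one $z$-vertex per edge, acyclicity alone may not distinguish which edge was chosen, so we first try two degree-$2$ vertices per (edge, endpoint) pair with opposite preferred orientations. We then tune the scores (large uniform weight $M$ for selector sets, weight $1$ for the $z$'s) so that exactly the consistent configurations reach $\ell$; verifying this counting is the delicate step. If acyclicity-based checking proves too fragile, the fallback is to mimic the Ordyniak--Szeider reduction~\cite[Theorem 3]{OrdyniakS13}, subdividing its high-degree connections so that only $\bigoh(k^2)$ vertices retain large degree.
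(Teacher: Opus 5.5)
There is a genuine gap, and it is the step you flag as ``delicate'': your consistency gadget does not enforce consistency. A directed cycle costs only one unit-weight vertex, and the inconsistency that creates it gains exactly one unit. Your parameter bound, via the degree-$2$ vertex-cover remark, is fine; the problem is correctness.

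Concretely, take $z^i_e$ adjacent to $x_i$ and $y_{ij}$. Let $x_i$ choose $S_u$ (all $z^i_e$ with $u\notin e$), let $y_{ij}$ choose $T_e$ (all its $z$'s except those of $e$), each with score $M$, and let each $z^i_e$ score $1$ on $\{x_i\}$ or on $\{y_{ij}\}$. Now fix \emph{arbitrary} $u_i\in U_i$ and edges $e_{ij}$:
\begin{itemize}
\item give $x_i$ and $y_{ij}$ the sets $S_{u_i}$ and $T_{e_{ij}}$;
\item let every $z^i_e$ with $u_i\in e$ take $x_i$ as its parent;
\item give the remaining $z$'s no parent.
\end{itemize}
Every arc is then of the form $z\to x$, $x\to z$ or $z\to y$. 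A $z$ entered from $x_i$ can only continue to $y_{ij}$, which is a sink, so this is a DAG. Its score is $M(k+\binom{k}{2})+\sum_i\deg_H(u_i)$. This is exactly the value of a consistent configuration: there every $z^i_e$ with $u_i\notin e$ is a parent of both its neighbours and scores $0$. Hence on a regular clique-free $H$, every selection reaches the value a clique would give, and the choice at $y_{ij}$ is never forced to agree with $x_i$ and $x_j$.

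Your intended $\ell$, with every $z$ at its maximum, is not reachable even on \YES-instances. Moreover, the number of surviving $z$'s depends on the degrees of the chosen vertices, so without regularity no uniform threshold exists. The fallback of subdividing the Ordyniak--Szeider construction is not an argument either, since subdividing superstructure edges changes which parent sets exist.

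The paper avoids pair vertices and long cycles altogether. It reduces from \emph{Regular} Multicolored Clique, where every vertex has degree $m$. It uses one vertex $v_i$ per color and one $v_e$ per edge, so $\{v_1,\dots,v_k\}$ is a vertex cover and all other vertices have degree $2$. For each $v\in V_i$, $v_i$ gets the candidate parent set $\{v_e : v\in e\}$ with score $m+1$. For $e$ between colours $i$ and $j$, $v_e$ scores $1$ only on $\{v_i,v_j\}$.

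The only use of acyclicity is the $2$-cycle: an edge-vertex that is a parent of $v_i$ cannot have $v_i$ as a parent, so it loses its point. An exchange argument shows every $v_i$ attains $m+1$ in an optimum. The $km$ edge-vertices killed this way contain at least $km-\binom{k}{2}$ distinct ones, with equality iff the chosen vertices are pairwise adjacent; this gives $\ell=|E|+k+\binom{k}{2}$.

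Incidentally, your complement idea works on its own if you drop the $y_{ij}$. Set $S_u=\{z_e : e \text{ meets } U_i,\ u\notin e\}$ with score $M>|E(H)|$, and let $z_e$ (adjacent to $x_i,x_j$) score $1$ only on $\{x_i,x_j\}$. Then each $x_i$ must use some $S_{u_i}$. A vertex $z_e$ can score only if it lies in neither $S_{u_i}$ nor $S_{u_j}$, i.e.\ $e=u_iu_j$. So $\ell=Mk+\binom{k}{2}$ characterizes multicolored cliques, even without regularity.
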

\fi

\iflong
\subsection{Lower Bounds for \BNSLneq}
\label{sub:lowerbounds}
Since \lfes\ lies between \fes\ and treecut width in the parameter hierarchy (see Proposition~\ref{pro:lfescompare}) and \BNSLneq\ is \FPT\ when parameterized by \lfes, the next step would be to ask whether this tractability result can be lifted to treecut width. Below, we answer this question negatively.
     
\begin{theorem}
\label{thm:tcwhard}
\BNSLneq is \W$[1]$-hard when parameterized by the treecut width of the superstructure graph.
  \end{theorem}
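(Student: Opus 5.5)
We reduce from \textsc{Multicolored Clique}, parameterized by the number $k$ of colour classes: given a graph $H$ with vertex partition $U_1,\dots,U_k$, decide whether $H$ has a clique containing exactly one vertex from each $U_i$. The output is a \BNSLneq\ instance whose superstructure contains a set $X$ of $\bigoh(k^2)$ ``hub'' vertices such that every other vertex has degree at most $2$ and all its neighbours lie in $X$. By the remark in the preliminaries, a vertex cover $X$ in which every outside vertex has degree at most $2$ gives treecut width at most $|X|$. Hence the parameter is $\bigoh(k^2)$, and the reduction is a parameterized reduction.

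\textbf{Construction.} For each colour $i$, create a hub $c_i$. For each unordered pair $i<j$, create a hub $e_{ij}$. For each edge $uv\in E(H)$ with $u\in U_i$ and $v\in U_j$, create a hub-free ``edge vertex'' $x_{uv}$ adjacent only to $e_{ij}$ and to one further hub. The selection mechanism is encoded in the score functions of the hubs, which is exactly where the non-zero representation is powerful: parent sets of a hub can be arbitrary subsets of its many degree-$\le 2$ neighbours.

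Concretely, we give $e_{ij}$ exactly one positive-score parent set for each edge $uv$ between $U_i$ and $U_j$, namely a set that encodes the identities of $u$ and $v$. Similarly, $c_i$ gets exactly one positive-score parent set per vertex $u\in U_i$. A hub choosing parent set $P$ forces arcs from the degree-$\le 2$ vertices in $P$ into the hub. Acyclicity then interacts with arcs those same degree-$2$ vertices want to receive from another hub.

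For consistency, we connect $c_i$ and $e_{ij}$ through ``index'' vertices $y$, each of degree $2$. We want the choice of $u\in U_i$ at $c_i$ and the choice of edge $uv$ at $e_{ij}$ to agree. The encoding we would try first is a binary or unary index paired with a complementary encoding. For example, take $|U_i|$ index vertices $y^{ij}_1,\dots,y^{ij}_{n}$, each adjacent to $c_i$ and $e_{ij}$. Selecting $u=u_p$ at $c_i$ uses parents $\{y_1,\dots,y_p\}$ together with a reverse-direction set. Selecting an edge at $e_{ij}$ whose $U_i$-endpoint is $u_q$ requires arcs from $e_{ij}$ into $y_{q+1},\dots,y_n$ and the reverse pattern. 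Here arcs $e_{ij}\to y\to c_i$ together with $c_i\to y'\to e_{ij}$ would create a cycle, so the two sides can coexist without a cycle only when $p=q$. Because a single 2-cycle constraint gives only ``$\le$'', we pair two such families, one with the order reversed, to obtain equality. The scores are set so that every hub gains $1$ exactly when it picks one of its designated sets, and we set $\ell=k+\binom{k}{2}$. Each index vertex $y$ can get a bonus for having any parent, which is possible because its degree is at most $2$; this also pushes the solution to actually route the paths that create the conflicts.

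\textbf{Correctness.} Given a multicoloured clique, orient everything according to the chosen indices and verify acyclicity. Since all cycles pass through the hubs via degree-$2$ vertices, we can check a topological order explicitly, for example colour hubs before pair hubs arranged layerwise. Conversely, a DAG of score $\ell$ must give every hub a designated set. The cycle constraints on each pair of families $y$ then force the index chosen at $c_i$ to equal the $U_i$-endpoint chosen at $e_{ij}$ for all $j$, which yields a clique.

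\textbf{Main obstacle.} The hard part is designing the index gadgets so that acyclicity enforces equality and not mere inequality. At the same time, no unintended long cycles may arise through several hubs, and these would break the easy direction. I would handle this by making the orientation in the intended solution respect a fixed global order: colour hubs first, then pair hubs. The equality checks would be placed only on 2-hub cycles, with the two complementary monotone encodings giving $p\le q$ and $p\ge q$ respectively.
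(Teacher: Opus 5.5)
Your proposal has a genuine gap. The reduction as specified encodes nothing, and the mechanism meant to enforce consistency cannot work as described.

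\textbf{The construction is trivially satisfiable.} With positive scores counted only at the hubs and $\ell=k+\binom{k}{2}$, essentially every instance is a \YES-instance. Every designated parent set of a hub consists of degree-$\le 2$ non-hub vertices. So let each hub take any designated set and draw only those arcs, from the non-hub vertex into the hub. Every non-hub vertex is then a source, the digraph is trivially acyclic, and the score is $\ell$.

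\textbf{Hubs cannot force out-arcs.} The underlying error is your step ``selecting an edge at $e_{ij}$ \dots\ requires arcs from $e_{ij}$ into $y_{q+1},\dots,y_n$''. A local score depends only on a vertex's own parent set, so no hub can force arcs out of itself. Arcs $e_{ij}\to y$ or $c_i\to y'$ appear only if the index vertex is itself rewarded for having that parent. Every ``forbidden cycle'' then becomes a soft trade-off, in which some index vertex forgoes its bonus. These bonuses must be counted exactly in $\ell$, and that accounting is precisely the part you defer as the ``main obstacle''.

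\textbf{The natural instantiation does not distinguish $p=q$.} All index-vertex paths between $c_i$ and $e_{ij}$ must be oriented the same way. Suppose $c_i$'s set contains $\{y_1,\dots,y_p\}\cup\{y'_{p+1},\dots,y'_n\}$, $e_{ij}$'s set contains $\{y_{q+1},\dots,y_n\}\cup\{y'_1,\dots,y'_q\}$, and each index vertex earns $1$ for having a parent. Then $|p-q|$ index vertices lie in both sets and lose their bonus. Whichever orientation you pick between the two hubs, another $n-|p-q|$ lose theirs. The total is $n$ for all $p,q$, so acyclicity does not single out $p=q$.

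\textbf{The missing idea.} You need no pair hubs or index gadgets. The conflict should be a plain $2$-cycle, and consistency should come from counting. The paper reduces from \textsc{Regular Multicolored Clique}, where every vertex has degree $m$:
\begin{itemize}
\item one hub $v_i$ per colour;
\item one vertex $v_e$ per edge $e$ between colours $i$ and $j$, whose only nonzero parent set is $\{v_i,v_j\}$, with score $1$;
\item score $m+1$ for hub $v_i$ on each set $P_i^v=\{v_e : e\ni v\}$, $v\in V_i$;
\item target $\ell=|E|+k+\binom{k}{2}$.
\end{itemize}
A hub's designated set can cost at most $m$ edge-vertex points, so an optimal DAG gives every hub a designated set $P_i^{v^i}$. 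Each edge vertex in $P_i^{v^i}$ is a parent of $v_i$ and therefore loses its point. Each $v_e$ lies in at most two chosen sets, and by regularity every chosen set has size exactly $m$. Hence exactly $km$ minus the number of adjacent pairs among $v^1,\dots,v^k$ edge vertices lose, which is at least $km-\binom{k}{2}$, with equality iff the selected vertices form a clique.

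For the forward direction, give each edge vertex incident to a selected vertex only out-arcs, into the corresponding hubs. Give every other edge vertex both of its hubs as parents. Every edge vertex is then a source or a sink, so acyclicity is immediate. Finally, $\{v_1,\dots,v_k\}$ is a vertex cover with all other vertices of degree $2$, which gives treecut width at most $k$.
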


In fact, we show an even stronger result: \BNSLneq\ is $\W[1]$-hard when parameterized by the vertex cover number of the superstructure even when all vertices outside of the vertex cover are required to have degree at most $2$. We remark that while \BNSLneq\ was already shown to be $\W[1]$-hard when parameterized by the vertex cover number~\cite{OrdyniakS13}, in that reduction the degree of the vertices outside of the vertex cover is not bounded by a constant and, in particular, the graphs obtained in that reduction have unbounded treecut width.

\begin{proof}[Proof of Theorem~\ref{thm:tcwhard}]
  
We reduce from the following well-known $\W[1]$-hard problem~\cite{DowneyFellows13,CyganFKLMPPS15}:

\noindent
\begin{center}
\begin{boxedminipage}{0.98 \columnwidth}
\textsc{Regular Multicolored Clique} ($\RMC$) \\[5pt]
\begin{tabular}{l p{0.83 \columnwidth}}
Input: & A $k$-partite graph $G=(V_1\cup...\cup V_k, E)$ such that $|N_G(v)|=m$ for every $v \in V$\\
Parameter: & The integer $k$\\
Question: &  Are there nodes $v^i$ that form a $k$-colored clique in $G$, i.e. $v^i\in V_i$ and $v^i v^j \in E$ for all $i,j\in[k],i \neq j$?
\end{tabular}
\end{boxedminipage}
\end{center}
We say that vertices in $V_i$ have color $i$.
Let $G=(V_1\cup...\cup V_k,E)$ be an instance of $\RMC$. We will construct an instance $(V, \FFF, \ell)$ of 
\BNSLneq such that $\III$ is a $\YES$-instance if and only if $G$ is a $\YES$-instance of $\RMC$.
$V$ consists of one vertex $v_i$ for each color $i \in [k]$ and one vertex $v_e$ for every edge $e \in E$. For each edge $e\in E$ that connects a vertex of color $i$ with a vertex of color $j$, the constructed vertex $v_e$ will have precisely one element in its score function that achieves a non-zero score, in particular: $f_{v_e}(\{v_i,v_j\})=1$.

Next, for each $i\in [k]$, we define the scores for $v_i$ as follows. For every $v\in V_i$, let $E_v$ be the set of all edges incident to $v$ in $G$, and let $P_i^v=\{v_e:e\in E_v\}$. We  now set $f_{v_i}(P_i^v)=m+1$ for each such $v$; all other parent sets will receive a score of $0$.
Note that $\SB v_i \SM i\in [k] \SE$ forms a vertex cover of the superstructure graph and that all vertices outside of this vertex cover have degree at most $2$, as desired.
We will show that $G$ has a $k$-colored clique if and only if there is a Bayesian network $D$ with score
at least $\ell=|E|+k+{k \choose 2}$. (In fact, it will later become apparent that the score can never exceed $\ell$.)

Assume first that $G$ has a $k$-colored clique on $v^i, i\in[k],$ consisting of a set $E_X$ of ${k\choose 2}$ edges. Consider the digraph $D$ on $V$ obtained as follows. For each vertex $v_i$, $i\in [k]$, and each vertex $v_e$ where $e\in E$, $D$ contains the arc $v_ev_i$ if $v_e$ is incident to $v^i$ and otherwise $D$ contains the arc $v_iv_e$. This completes the construction of $D$. Now notice that the construction guarantees that each $v_i$ receives
the parent set $P_i^{v^i}$ and hence contributes a score of $m+1$. Moreover, for every edge $e$ not incident to a vertex in the clique, the vertex $v_e$ contributes a score of $1$; note that the number of such edges is $|E|-km+{k \choose 2}$; indeed, every $v_i$ is incident to $m$ edges but since $v^i,i\in[k],$ was a clique we are guaranteed to double-count precisely ${k \choose 2}$ many edges.
Hence the total score is $k(m+1)+|E|-km+{k \choose 2}=|E|+k+{k \choose 2}$, as desired.

Assume that $\III=(V, \FFF, \ell)$ is a $\YES$-instance and let $s_{\opt}\geq \ell = |E|+k+{k \choose 2}$ be the maximum score that can be achieved by a solution to $\III$; let $D$ be a dag witnessing such a score. 
 Then all $v_i$, $i\in[k]$, must receive a score of $m+1$ in $D$. Indeed, assume that some $v_i$ receives a score of $0$ and let $P_v$ be any parent set of $v_i$ with a score $m+1$. Modify $D$ by 
orienting edges $v_i v_e$ for every $v_e\in P_v$ inside $v_i$. Now local score of $v_i$ is $m+1$, total score of the rest of vertices decreased by at most $m$ 
(maximal number of $v_e$ that had local score 1 in $D$ and lost it after the modification). So the modified DAG has a score of at least $s_{\opt}+1$, which 
contradicts the optimality of $s_{\opt}$. Therefore all $v_i$, $i\in[k]$, get score $m+1$ in $D$. 

Let $P_i$ be parent set of $v_i$ in $D$, then
$|P_i|=m$, $P_i=P_i^{v^i}$ for some $v^i \in V_i$. For every $v_e \in P_i$, the local score of $v_e$ in $D$ is $0$. Denote by $E_{\emph{unsat}}$ the set of all $v_e$ that
have a score of $0$ in $D$. Every $v_e$ belongs to at most 2 different $P_i$ 
and $P_i\cap P_j \le 1$ for every $i \neq j$, so $|E_{\emph{unsat}}|\ge km -{k \choose 2}$. If $|E_{\emph{unsat}}|> km -{k \choose 2}$, 
sum of local scores of $e_v$ in $D$ would be smaller then  $|E|-km+{k \choose 2}$, which results in $s_{\opt} < |E|+k+{k \choose 2}$.
Therefore $|E_{\emph{unsat}}|= km -{k \choose 2}$. But this means that $P_i\cap P_j \neq \emptyset$ for any  $i\neq j$, i.e. $v^i$, $i\in [k]$ form a $k$-colored clique in $G$. 
In particular $s_{\opt}=\ell.$
\end{proof}

For our second result, we note that the construction in the proof of Theorem~\ref{thm:tcwhard} immediately implies that \BNSLneq\ is \NP-hard even under the following two conditions: (1) $\ell+\sum_{v\in V}|\Gamma_f(v)|\in \bigoh(|V|^2)$ (i.e., the size of the parent set encoding is quadratic in the number of vertices),
 and (2) the instances are constructed in a way which makes it impossible to achieve a score higher than $\ell$. Using this, as a fairly standard application of \emph{AND-cross-compositions}~\cite{CyganFKLMPPS15} we can exclude the existence of an efficient data reduction algorithm for \BNSLneq\ parameterized by \lfes:

\begin{theorem}
\label{thm:nokernel}
Unless $\NP\subseteq \coNP/\cc{poly}$, there is no polynomial-time algorithm which takes as input an instance $\III$ of \BNSLneq\ whose superstructure has \lfes\ $k$ and outputs an equivalent instance $\III'=(V',\FFF',\ell')$ of \BNSLneq\ such that $|V'|\in k^{\bigoh(1)}$. In particular, \BNSLneq\ does not admit a polynomial kernel when parameterized by \lfes.
\end{theorem}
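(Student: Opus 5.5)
The plan is an AND-cross-composition from the \NP-hard restriction of \BNSLneq\ noted just before the statement. These are the instances produced by the construction in the proof of Theorem~\ref{thm:tcwhard}. They satisfy (1) $\ell+\sum_{v}|\Gamma_f(v)|\in\bigoh(|V|^2)$, and (2) no solution can score more than $\ell$. Since \RMC\ is \NP-hard and the reduction is polynomial, this restricted problem is \NP-hard. I will then invoke the standard framework~\cite{CyganFKLMPPS15}. If an \NP-hard language AND-cross-composes into a parameterized problem, then the problem has no polynomial compression unless $\NP\subseteq\coNP/\cc{poly}$. An algorithm producing an equivalent instance with $|V'|\in k^{\bigoh(1)}$ is not yet a polynomial compression, because $\III'$ may be large. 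I will therefore first post-process it. By condition (2), a single non-optimal parent set or arc choice matters only through scores. More to the point, an instance with $|V'|$ vertices can be re-encoded by an equivalent instance over the $|V'|$ vertices, which is an instance of an \NP\ language (\BNSL\ itself) whose size depends only on $|V'|$ up to the size of the scores. I plan to handle this by noting that the framework only needs compression into some language. Storing, for each vertex, only the best-scoring parent set for each of the $2^{|V'|^2}$ possible reachability patterns is too much, so instead I will argue via a weaker statement. The vertex count bound yields a kernel of size polynomial in $k$ plus $\log$ of the scores, after normalizing scores (only relative order and sums matter, and by Frank–Tardos-type weight reduction scores can be replaced by ones of bit-length polynomial in the number of entries). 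This step is the main technical nuisance; if it fails, I fall back to stating the result as excluding such reductions for compositions that keep parent-set families polynomial.

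The composition itself works as follows. Given $t$ instances $\III_1,\dots,\III_t$ of the restricted problem, I first use a polynomial equivalence relation to group them by $|V|$ and $\ell$, so that all have the same number $n$ of vertices and the same target. I output the disjoint union $\III=(\bigcup V_j,\bigcup\FFF_j,\sum_j \ell_j)$. By (2), each component scores at most $\ell_j$. Hence the total reaches $\sum_j\ell_j$ iff every component reaches $\ell_j$, i.e., iff all $\III_j$ are \YES: exactly the AND behaviour. The construction is polynomial in $\sum_j|\III_j|$ thanks to (1).

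The parameter is bounded as follows. The superstructure of $\III$ is a disjoint union, and \lfes\ is a maximum over vertices of a quantity local to a spanning tree. I take spanning trees of each component, and connect them into one spanning tree by extra tree edges, or simply apply the connectivity convention componentwise. Then $\lfes$ of the union equals the maximum over components, which is at most the number of edges of a component, and hence at most $n^2$, polynomial in $\max_j|\III_j|$. This is what the AND-cross-composition requires, so the theorem follows. The "in particular" part is immediate, since a polynomial kernel would give $|V'|\le|\III'|\in k^{\bigoh(1)}$.
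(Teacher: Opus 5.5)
Your composition is the paper's. You take instances produced by the reduction of Theorem~\ref{thm:tcwhard} and output their disjoint union, with target equal to the sum of the targets. Since no instance can exceed its own target, the union reaches the total iff every instance is a \YES-instance, which gives AND-behaviour. You bound \lfes\ componentwise by the number of edges of a single component and invoke the AND-cross-composition theorem (Theorem~15.12 of~\cite{CyganFKLMPPS15}). This part is correct. It shows that \BNSLneq\ parameterized by \lfes\ has no polynomial compression, and hence no polynomial kernel, unless $\NP\subseteq\coNP/\cc{poly}$. One small point: joining the spanning trees ``by extra tree edges'' is not available, since those edges are not in the superstructure. Use the componentwise convention, as the paper does.

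The step that fails is your post-processing paragraph. You rightly observe that an output with $|V'|\in k^{\bigoh(1)}$ is not yet a polynomial compression, but the proposed fix does not work. A bound on $|V'|$ does not bound $\sum_{v}|\Gamma_{f'}(v)|$ polynomially. Each vertex may carry up to $2^{|V'|-1}$ candidate parent sets. Even after discarding dominated supersets, the parent sets of size about $|V'|/2$ still form an antichain of size $2^{\Omega(|V'|)}$. So the claim that the vertex bound ``yields a kernel of size polynomial in $k$ plus $\log$ of the scores'' is false. Frank--Tardos weight reduction does not rescue it, because it bounds score bit-lengths by a polynomial in the \emph{number of entries}, which is again exponential in $|V'|$.

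Consequently your argument proves only your fallback, i.e.\ the ``in particular'' sentence. In your last line the implication should run the other way: the composition yields the kernel (compression) lower bound directly, and the vertex-count version is what remains unproven. The paper's sketch has the same limitation: it simply cites Theorem~15.12, which excludes only outputs of bit-size polynomial in $k$. Excluding outputs with $k^{\bigoh(1)}$ vertices but superpolynomially large parent-set families would need an additional idea that neither your proposal nor the paper supplies.
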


\begin{proof}[Proof Sketch]
We describe an AND-cross-composition for the problem while closely following the terminology and intuition introduced in Section 15 in the book~\cite{CyganFKLMPPS15}. Let the input consist of instances $\III_1,\dots,\III_t$ of (unparameterized) instances of \BNSLneq\ which satisfy conditions (1) and (2) mentioned above, and furthermore all have the same size and same target value of $\ell_1$ (which is ensured through the use of the polynomial equivalence relation $\mathcal{R}$~\cite[Definition 15.7]{CyganFKLMPPS15}). The instance $\III$ produced on the output is merely the disjoint union of instances $\III_1,\dots,\III_t$ where we set $\ell:=t\cdot \ell_1$, and we parameterize $\III$ by \lfes. 

Observe now that condition (a) in Definition 15.7~\cite{CyganFKLMPPS15} is satisfied by the fact that the local feedback edge number of $\III$ is upper-bounded by the number of edges in a connected component of $\III$. Moreover, the AND- variant of condition (b) in that same definition (see Subsection 15.1.3~\cite{CyganFKLMPPS15}) is satisfied as well: since none of the original instances can have a score greater than $\ell_1$, $\III$ achieves a score of $\ell_1\cdot t$ if and only if each of the original instances was a \YES-instance. 

This completes the construction of an AND-cross-composition for \BNSLneq\ parameterized by \lfes, and the claim follows by Theorem 15.12~\cite{CyganFKLMPPS15}.
\end{proof}
\fi

\newcommand{\fp}[1]{f_{#1}^{\neq 0}}
\newcommand{\fs}[1]{f_{#1}^{+}}

\section{Additive Scores and Treewidth}
\label{sec:add}
While the previous section focused on the complexity of \BNSL\ when the non-zero representation was used (i.e., \BNSLneq), here we turn our attention to the complexity of the problem with respect to the additive representation. Recall from Subsection~\ref{sec:prelims} that there are two variants of interest for this representation: \BNSLadd\ and \BNSLaddq. We begin by showing that, unsurprisingly, both of these are \NP-hard. \ifshort
We do so by reducing from the classical \textsc{Minimum Feedback Arc Set} problem~\cite{Gavril77b,DemetrescuF03}.
\fi

\iflong
\begin{theorem}
\fi
\ifshort
\begin{theorem}[]
\fi
\label{thm:addhard}
\BNSLadd\ is \NP-hard. Moreover, \BNSLaddq\ is \NP-hard for every $q\geq 3$.
\end{theorem}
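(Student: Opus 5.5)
We reduce from \textsc{Minimum Feedback Arc Set} (\MFAS): given a digraph $H=(U,A_H)$ and an integer $p$, decide whether deleting at most $p$ arcs makes $H$ acyclic, or equivalently whether $H$ contains an acyclic subgraph with at least $|A_H|-p$ arcs. \MFAS\ is \NP-hard even when every vertex has in-degree at most $2$ and out-degree at most $2$ (maximum total degree $3$ is known to suffice). We will use this bounded version so that the parent-set bound $q$ causes no trouble.

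\textbf{Construction.} Given $(H,p)$, we build an instance of \BNSLadd\ with $V:=U$. For every arc $uw\in A_H$ we set $f_w(u):=1$, and all other single-parent scores are $0$. We set $\ell:=|A_H|-p$. Thus the superstructure is exactly the skeleton of $H$. For \BNSLaddq\ we keep the same instance and add any $q\geq 3$; the degree-restricted \MFAS\ instance guarantees that every vertex has at most $2\le q$ in-neighbours in $H$, so the bound is never binding.

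\textbf{Correctness.} Suppose $A'\subseteq A_H$ is such that $(U,A')$ is acyclic and $|A'|\geq \ell$. Then $D=(U,A')$ is a DAG, and by additivity its score is exactly $|A'|$. Conversely, let $D$ be a DAG with $\score(D)\geq\ell$. Every arc of $D$ that is not in $A_H$ contributes $0$, so deleting all such arcs leaves $D$ acyclic without changing the score. The remaining arcs form an acyclic subgraph of $H$ with $\score(D)\geq\ell$ arcs. The complement of this arc set is a feedback arc set of size at most $p$. The reduction is clearly polynomial, and the in-degree bound of the solution is at most the maximum in-degree of $H$.

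\textbf{Main obstacle.} The only delicate point is the claim for every $q\geq 3$, since we must make sure that bounded-in-degree \MFAS\ is hard. If the citable hardness statement only bounds total degree by $3$, in-degree is at most $3$, which still suffices for $q\geq 3$. Otherwise, we first transform a general \MFAS\ instance by replacing each vertex $v$ with a path of copies of weight-heavy arcs. Concretely, each arc is replaced by a bundle of $p+1$ parallel paths, so that these arcs are never deleted in an optimal solution. Incoming arcs are distributed along the chain $v_{\mathrm{in},1}\to\dots\to v$, which keeps in-degrees at most $2$ while preserving the optimum. This chaining gadget is where we would spend the care.
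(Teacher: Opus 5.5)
Your reduction is the paper's own: score $1$ for each arc of an \MFAS\ instance whose skeleton has maximum degree $3$, and $\ell=|A_H|-p$. As your last paragraph observes, every in-degree is then at most $3\le q$, so the first part of your proposal already constitutes a complete proof of both claims.

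The chain-gadget fallback is unnecessary, and as sketched it would fail. A bundle of $p+1$ parallel paths ending at a chain vertex gives that vertex in-degree at least $p+1$. If you ever did need such a gadget, you could instead give the chain arcs a large additive score directly.
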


\iflong
\begin{proof}
We provide a direct reduction from the following \NP-hard problem~\cite{Gavril77b,DemetrescuF03}:
\begin{center}
\begin{boxedminipage}{0.98 \columnwidth}
\textsc{Minimum Feedback Arc Set on Bounded-Degree Digraphs} ($\MFAS$) \\[5pt]
\begin{tabular}{l p{0.83 \columnwidth}}
Input: & Digraph $D = (V, A)$ whose skeleton has degree at most $3$, integer $m \le |A|$.\\
Question: & Is there a subset $A'\subseteq A$ where $|A'|\le m$ such that $D-A'$ is a DAG?
\end{tabular}
\end{boxedminipage}
\end{center}
Let $(D,m)$ be an instance of \MFAS. We construct an instance $\III$ of \BNSLaddq\ as follows:
\begin{itemize}
\item $V=V(D)$,
\item $f_y(x)=1$ for every $xy\in A(D)$,
\item $f_y(x)=0$ for every $xy \in A_V \setminus A(D)$, 
\item $\ell=|A|-m$, and
\item $q=3$.
\end{itemize}
Assume that $(D,m)$ is a \YES-instance and $A'$ is any feedback arc set of size $m$. Let $D'$ be the DAG obtained from $D$ after deleting arcs in $A'$. Then $\score(D')$ is equal to the number of arcs in $D'$, which is $|A|-m$, so $\III$ is a \YES-instance. On the other hand, if $\III$ is a \YES-instance of \BNSLadd, pick any DAG $D'$ with $\score(D')\ge \ell=|A|-m$. Without loss of generality we may assume that $A(D')\subseteq A$, as the remaining arcs have a score of zero and may hence be removed. All the arcs in $A$ have a score 1 and hence the DAG $D'$ contains at least $|A|-m$ arcs, i.e., it can be obtained from $D$ by deleting at most $m$ arcs.  Hence $(D,m)$ is also a \YES-instance. To establish the \NP-hardness of \BNSLadd, simply disregard the bound $q$ on the input.
\end{proof}
\fi

While the use of the additive representation did not affect the classical complexity of \BNSL, it makes a significant difference in terms of parameterized complexity. Indeed, in contrast to \BNSLneq:
 \iflong
\begin{theorem}
\fi
\ifshort
\begin{theorem}[]
\fi
\label{thm:additive}
\BNSLadd\ is \FPT\ when parameterized by the treewidth of the superstructure. Moreover, \BNSLaddq\ is \FPT when parameterized by $q$ plus the treewidth of the superstructure.
 \end{theorem}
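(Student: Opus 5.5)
We will give a dynamic program over a nice tree-decomposition $(T,\chi)$ of the superstructure $G$ of width $w$, computed with Bodlaender's algorithm~\cite{Bodlaender96}. The key observation is that under the additive representation the score of a DAG decomposes over arcs: $\score(D)=\sum_{xy\in A(D)} f_y(x)$. Moreover, we may assume without loss of generality that every arc of the solution is an edge of $G$ oriented in one of its two directions, since arcs outside $G$ contribute $0$ and can be deleted without creating cycles or increasing in-degree. Hence \BNSLadd\ amounts to choosing, for each edge $uv\in E(G)$, one of the three states ``$uv$'', ``$vu$'', or ``absent'', so that the result is acyclic (and, for \BNSLaddq, has in-degree at most $q$), while maximizing the total arc weight. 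There are no parent-set candidates to branch over, which is exactly why treewidth suffices here, unlike in \cite[Theorem 3]{OrdyniakS13}.

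For each node $t$ let $G_t$ be the subgraph induced by the vertices introduced in $T_t$. We assign each edge to the unique forget node at which the first of its endpoints is forgotten; at that point both endpoints lie in the bag of its child, so the edge can be decided there. A record at $t$ consists of three things. First, a binary relation $R\subseteq \chi(t)\times\chi(t)$ describing reachability among bag vertices in the partial DAG built so far, which is $\Con(\chi(t),D_t)$ for the partial DAG $D_t$ on the already-decided edges. Second, for \BNSLaddq, a vector $d\in[q]_0^{\chi(t)}$ of current in-degrees of bag vertices. Third, the maximum partial score achieving $(R,d)$. The number of records is $2^{(w+1)^2}(q+1)^{w+1}$, which is FPT in $w+q$.

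For \BNSLadd\ without $q$ no degree vector is needed. The transitions are as follows.
\begin{itemize}
\item At an \emph{introduce} node, the new vertex is isolated, so $R$ is extended trivially and its degree is set to $0$.
\item At a \emph{forget} node for $v$, we first branch over orientations of all not-yet-decided edges between $v$ and other vertices of $\chi(t')$. We add these arcs to $R$, take the transitive closure, reject the branch if the result is not irreflexive, and check the degree bound; then we project the relation onto $\chi(t)$.
\item At a \emph{join} node, we combine $R_1,R_2$ as $\trcl(R_1\cup R_2)$, require irreflexivity, add the degree vectors and scores (edge sets are disjoint by the assignment rule), and reject if some degree exceeds $q$.
\end{itemize}
Correctness of the relation bookkeeping follows from Lemma~\ref{lem:con2dags}: the two partial DAGs at a join share only bag vertices, so reachability in the union is the transitive closure of the two restricted relations, and irreflexivity certifies acyclicity. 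The same lemma handles the forget step, viewing the newly added arcs as a second DAG on the bag. At the root (empty bag) the stored score is the optimum, and we compare it with $\ell$.

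The main obstacle is verifying that projecting $R$ onto the bag loses no information needed later, i.e.\ that the stored relation on $\chi(t)$ captures all future cycle possibilities. The point to argue is that every future arc has both endpoints in current or future bags, and any directed path through forgotten vertices enters and leaves via bag vertices (the bag is a separator). Thus any future cycle using forgotten vertices corresponds to a cycle in the transitive closure over bag vertices. We establish this by induction using Lemma~\ref{lem:con2dags}(i). The running time is $2^{\bigoh(w^2)}\cdot(q+1)^{\bigoh(w)}\cdot 3^{\bigoh(w)}\cdot n$, which gives both claims.
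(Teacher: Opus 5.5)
Your proposal is correct and follows essentially the same dynamic program as the paper: a nice tree-decomposition, records storing $\Con(\chi(t),D_t)$, the in-degrees of bag vertices and the best partial score, with Lemma~\ref{lem:con2dags} justifying the irreflexivity checks and the join. The only difference is bookkeeping. The paper chooses arcs at introduce nodes, so it must also store the set $\loc$ of arcs inside the bag, match $\loc$ at join nodes and subtract the double-counted scores and in-degrees there; your rule of deciding each edge at the unique forget node of its first-forgotten endpoint makes that component unnecessary and makes joins purely additive.
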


\ifshort
\begin{proof}[Proof Sketch]
As noted in the preliminaries, due to space constraints we refer to the usual books for a definition of \emph{treewidth} and \emph{nice tree-decompositions}~\cite{DowneyFellows13,CyganFKLMPPS15}.
We begin by applying Bodlaender's algorithm~\cite{Bodlaender96,Kloks94} to compute a nice tree-decomposition $(\mathcal{T},\chi)$ of $G_\III$ of width $k=\tw(G_\III)$, whose vertices are called \emph{nodes}. To prove the theorem, we will design a leaf-to-root dynamic programming algorithm which will compute and store a set of records at each node of $T$, whereas once we ascertain the records for $r$ we will have the information required to output a correct answer. Intuitively, the records will store all information about each possible set of arcs between vertices in each \emph{bag}, along with relevant connectivity information provided by arcs between all vertices that are either in the current bag $t$ or in some descendant of $t$  (we denote the set of these vertices $\chi_t^\downarrow$), and information about the partial score. When solving \BNSLaddq, the records will also keep track of parent set sizes.

Formally, the records will have the following structure. For a node $t$, let $S(t)=\{(\loc,\con, \inn)~|~\loc, \con \subseteq A_{\chi(t)}, \inn:\chi(t)\to [q]_0\}$ be the set of \emph{snapshots} of $t$. The record $\RRR_t$ of $t$ is then a mapping from $S(t)$ to $\Nat_0\cup \{\bot\}$. Observe that $|S(t)|\leq 4^{k^2}(q+1)^k$.
 To introduce the semantics of our records, let 
 $\DDD_t$ be the set of all directed acyclic graphs over the vertex set $\chi_t^\downarrow$ with maximal in-degree at most $q$, and let $D_t=(\chi_t^\downarrow,A)$ be a directed acyclic graph in $\DDD_t$. We say that the \emph{snapshot of} $D_t$ in $t$ is the tuple $(\alpha,\beta,p)$ where 
$\alpha=A\cap A_{\chi(t)}$, $\beta=\Con(\chi(t), D_t)$ and $p$ (which is only used for \BNSLaddq) specifies numbers of parents of vertices from $\chi(t)$ in $D$, i.e., $p(v)=|\{w\in  \chi_t^\downarrow| wv \in A\}|$, $v\in \chi(t)$.
Now, for each snapshot $(\loc,\con,\inn)\in S(t)$ we set $\RRR_t(\loc,\con,\inn)=\bot$ if there exists no DAG in $\DDD_t$ with $(\loc, \con,\inn)$ as its snapshot, and otherwise we set $\RRR_t(\loc,\con,\inn)$ to the highest score that can be achieved by such a DAG.
       
The algorithm computes these records in a leaf-to-root fashion while traversing $\mathcal{T}$, which can be achieved in time at most $2^{\bigoh(k^2)}\cdot q^{\bigoh(k)} \cdot n$, where $n$ is the input size. Once we reach the root node $r$, we use the fact that $\chi(r)=\emptyset$ by the definition of nice tree-decompositions to simply check if $\RRR_r(\emptyset,\emptyset,\emptyset)\geq \ell$; the algorithm then outputs ``\YES'' if and only if this is the case.
\end{proof}
\fi
\iflong
\begin{proof}
We begin by proving the latter statement, and will then explain how that result can be straightforwardly adapted to obtain the former. As our initial step, we apply Bodlaender's algorithm~\cite{Bodlaender96,Kloks94} to compute a nice tree-decomposition $(\mathcal{T},\chi)$ of $G_\III$ of width $k=\tw(G_\III)$.
 In this proof we use $T$ to denote the set of nodes of $\mathcal{T}$ and $r\in T$ be the root of $\mathcal{T}$. Given a node $t\in T$, let $\chi_t^\downarrow$ be the set of all vertices occurring in bags of the rooted subtree $T_t$, i.e., $\chi_t^\downarrow=\{u~|~\exists t' \in T_t$ such that $u\in \chi(t')\}$. Let $G^\downarrow_t$ be the subgraph of $G_\III$ induced on $\chi_t^\downarrow$.

To prove the theorem, we will design a leaf-to-root dynamic programming algorithm which will compute and store a set of records at each node of $T$, whereas once we ascertain the records for $r$ we will have the information required to output a correct answer. Intuitively, the records will store all information about each possible set of arcs between vertices in each bag, along with relevant connectivity information provided by arcs between vertices in $\chi_t^\downarrow$ and information about the partial score. They will also keep track of parent set sizes in each bag. 

Formally, the records will have the following structure. For a node $t$, let $S(t)=\{(\loc,\con, \inn)~|~\loc, \con \subseteq A_{\chi(t)}, \inn:\chi(t)\to [q]_0\}$ be the set of \emph{snapshots} of $t$. The record $\RRR_t$ of $t$ is then a mapping from $S(t)$ to $\Nat_0\cup \{\bot\}$. Observe that $|S(t)|\leq 4^{k^2}(q+1)^k$.
 To introduce the semantics of our records, let 
 $\DDD_t$ be the set of all directed acyclic graphs over the vertex set $\chi_t^\downarrow$ with maximal in-degree at most $q$, and let $D_t=(\chi_t^\downarrow,A)$ be a directed acyclic graph in $\DDD_t$. We say that the \emph{snapshot of} $D_t$ in $t$ is the tuple $(\alpha,\beta,p)$ where 
$\alpha=A\cap A_{\chi(t)}$, $\beta=\Con(\chi(t), D_t)$ and $p$ specifies numbers of parents of vertices from $\chi(t)$ in $D$, i.e. $p(v)=|\{w\in  \chi_t^\downarrow| wv \in A\}|$, $v\in \chi(t)$.
We are now ready to define the record $\RRR_t$. For each snapshot $(\loc,\con,\inn)\in S(t)$:
\begin{itemize}
\item  $\RRR_t(\loc,\con,\inn)=\bot$ if and only if there exists no directed acyclic graph in $\DDD_t$ whose snapshot is $(\loc, \con,\inn)$, and
\item $\RRR_t(\loc,\con,\inn)=\tau$ if $\exists D_t\in \DDD_t$ such that 
\begin{itemize}
\item the snapshot of $D_t$ is $(\loc, \con, \inn)$, 
\item score$(D_t)=\tau$, and 
\item $\forall D'_t\in \DDD_t$ such that the snapshot of $D'_t$ is $(\loc, \con, \inn)$: score$(D_t)\geq$ score$(D'_t)$.
\end{itemize}
\end{itemize}

Recall that for the root $r\in T$, we assume $\chi(r)=\emptyset$. Hence $\RRR_r$ is a mapping from the one-element set $\{(\emptyset,\emptyset,\emptyset)\}$ to an integer $\tau$ such that $\tau$ is the maximum score that can be achieved by any DAG $D=(V,A)$ with all in-degrees of vertices upper bounded by $q$. In other words, $\III$ is a YES-instance if and only if $\RRR_r(\emptyset,\emptyset,\emptyset)\geq \ell$. To prove the theorem, it now suffices to show that the records can be computed in a leaf-to-root fashion by proceeding along the nodes of $T$. We distinguish four cases:

\textbf{$t$ is a leaf node.} Let $\chi(t)=\{v\}$. By definition, $S(t)=\{(\emptyset,\emptyset,\emptyset)\}$ and $\RRR_t(\emptyset,\emptyset,\emptyset)=f_v(\emptyset)$.

\noindent \textbf{$t$ is a forget node.} 
Let $t'$ be the child of $t$ in $\mathcal{T}$ and let $\chi(t)=\chi(t')\setminus \{v\}$. We initiate by setting $\RRR^{0}_t(\loc,\con,\inn)=\bot$ for each $(\loc,\con,\inn)\in S(t)$. 

For each $(\loc',\con',\inn')\in S(t')$, let $\loc_v$, $\con_v$ be the restrictions of $\loc'$, $\con'$ to tuples containing $v$. We now define $\loc=\loc'\setminus \loc_v$, $\con=\con' \setminus \con_v$, $\inn=\inn'|_{\chi(t)}$ and say that $(\loc,\con,\inn)$ is \emph{induced} by $(\loc',\con',\inn')$. Set $\RRR^0_t(\loc,\con,\inn):=\max(\RRR^0_t(\loc,\con,\inn),\RRR_{t'}(\loc',\con',\inn'))$, where $\bot$ is assumed to be a minimal element. 

For correctness, it will be useful to observe that $\DDD_t=\DDD_{t'}$. Consider our final computed value of $\RRR^0_t(\loc,\con,\inn)$ for some $(\loc,\con,\inn)\in S(t)$. 

If $\RRR_t(\loc,\con,\inn)=\tau$ for some $\tau\neq \bot$, then there exists a DAG $D$ which witnesses this. But then $D$ also admits a snapshot $(\loc', \con',\inn')$ at $t'$ and witnesses $\RRR_{t'}(\loc',\con',\inn')\ge \tau$.  Note that $(\loc,\con,\inn)$ is induced by $(\loc', \con', \inn')$.
So  in our algorithm $\RRR^0_t(\loc,\con,\inn)\ge \RRR_{t'}(\loc',\con',\inn')\ge \tau$.
\\\\
If on the other hand $\RRR^0_t(\loc,\con,\inn)=\tau$ for some $\tau\neq \bot$, then there exists a snapshot $(\loc',\con',\inn')$ such that $(\loc,\con,\inn)$ is induced by  $(\loc',\con',\inn')$ and $\RRR_{t'}(\loc',\con',\inn')=\tau$. $\RRR_t(\loc,\con,\inn)\ge \tau$ now follows from the existence of a DAG witnessing the value of $\RRR_{t'}(\loc',\con',\inn')$.

Hence, we can correctly set $\RRR_t=\RRR_t^0$.

\noindent \textbf{$t$ is an introduce node.} 
Let $t'$ be the child of $t$ in $\mathcal{T}$ and let $\chi(t)=\chi(t')\cup \{v\}$. We initiate by setting $\RRR^{0}_t(\loc,\con,\inn)=\bot$ for each $(\loc,\con,\inn)\in S(t)$. 

For each $(\loc',\con',\inn')\in S(t')$ and each $Q\subseteq \{ab\in A_{\chi(t)}~|~ \{a,b\}\cap \{v\}\neq \emptyset\}$, we define:
\begin{itemize}
\item $\loc:=\loc'\cup Q$ 
\item $\con:=\trcl(con' \cup Q)$
\item $\inn(x):=\inn'(x)+|\{y\in \chi(t)| yx \in Q\}|$  for every $x\in \chi(t) \setminus \{v\}$\\ $\inn(v):=|\{y\in \chi(t)| yv \in Q\}|$
\end{itemize}

 If $\con$ is not irreflexive or $\inn(x)>q$ for some $x\in \chi(t)$, discard this branch. Otherwise, let $\RRR^{0}_t(\loc,\con,\inn):=\max(\RRR^{0}_t(\loc,\con,\inn),\texttt{new})$ where $\texttt{new}=\RRR_{t'}(\loc',\con',\inn')+\sum_{ab\in Q}f_b(a)$. As before, $\bot$ is assumed to be a minimal element here.

Consider our final computed value of $\RRR^0_t(\loc,\con,\inn)$ for some $(\loc,\con,\inn)\in S(t)$.

For correctness, assume that $\RRR^0_t(\loc,\con,\inn)=\tau$ for some $\tau\neq \bot$ and is obtained from $(\loc',\con',\inn'), Q$ defined as above. Then $\RRR_{t'}(\loc',\con',\inn')=\tau-\sum_{ab\in Q} f_b(a)$. Construct a directed graph $D$ from the witness $D'$ of $\RRR_{t'}(\loc',\con',\inn')$ by adding the arcs specified in $Q$. As $\con=\trcl(con' \cup Q)$ is irreflexive and $D'$ is a DAG, $D$ is a DAG as well by $\ref{lem:con2dags}$. Moreover, $\inn(x)\leq q$ for every $x\in \chi(t)$ and the rest of vertices have in $D$ the same parents as in $D'$, so $D\in \DDD_t$. In particular, $(\loc,\con,\inn)$ is a snapshot of $D$ in $t$ and $D$ witnesses $\RRR_t(\loc,\con,\inn)\ge \RRR_{t'}(\loc',\con',\inn') + \sum_{ab\in Q} f_b(a) = \tau$.

On the other hand, if $\RRR_t(\loc,\con,\inn)=\tau$ for some $\tau\neq \bot$, then there must exist a directed acyclic graph $D=(\chi_t^\downarrow,A)$ in $\DDD_t$ that achieves a score of $\tau$. Let $Q$ be the restriction of $A$ to arcs containing $v$, and let $D'=(\chi_t^\downarrow \setminus {v},A\setminus Q)$, clearly $D'\in \DDD_{t'}$. Let $(\loc',\con',\inn')$ be the snapshot of $D'$ at $t'$. Observe that $\loc=\loc' \cup Q$, $\con=\trcl(\con' \cup Q)$, $\inn$ differs from $\inn'$ by the numbers of incoming arcs in $Q$ and the score of $D'$ is precisely equal to the score $\tau$ of $D$ minus $\sum_{(a,b)\in Q} f_b(a)$. Therefore $\RRR_{t'}(\loc',\con',\inn')\ge \tau - \sum_{(a,b)\in Q} f_b(a)$ and in the algorithm $\RRR^0_t(\loc,\con,\inn)\geq \RRR_{t'}(\loc',\con',\inn')+\sum_{(a,b)\in Q} f_b(a)\ge \tau$. Equality then follows from the previous direction of the correctness argument.

Hence, at the end of our procedure we can correctly set $\RRR_t=\RRR_t^0$.

\noindent \textbf{$t$ is a join node.} 
Let $t_1,t_2$ be the two children of $t$ in $\mathcal{T}$, recall that $\chi(t_1)=\chi(t_2)=\chi(t)$. By the well-known separation property of tree-decompositions, $\chi_{t_1}^\downarrow \cap \chi_{t_2}^\downarrow = \chi (t)$~\cite{DowneyFellows13,CyganFKLMPPS15}. 
 We initiate by setting $\RRR^{0}_t(\loc,\con,\inn):=\bot$ for each $(\loc,\con,\inn)\in S(t)$. 

Let us branch over each $\loc,\con_1,\con_2\subseteq A_{\chi(t)}$ and $\inn_1, \inn_2:\chi(t)\to [q]_0$. For every $b \in \chi(t)$ set $\inn(b)=\inn_1(b)+\inn_2(b)-|\{a|ab\in\loc\}|$. If:
\begin{itemize}
\item $\trcl(\con_1\cup \con_2)$ is not irreflexive and/or
\item $\RRR_{t_1}(\loc,\con_1,\inn_1)=\bot$, and/or
\item $\RRR_{t_2}(\loc,\con_2,\inn_2)=\bot$, and/or
\item $\inn(b)>q$ for some $b\in \chi(t)$
\end{itemize}
then discard this branch. Otherwise, set $\con=\trcl(\con_1\cup \con_2)$, $\texttt{doublecount}=\sum_{ab\in \loc} f_b(a)$ and $\texttt{new}=\RRR_{t_1}(\loc,\con_1)+\RRR_{t_2}(\loc,\con_2)-\texttt{doublecount}$. We then set $\RRR^0_t(\loc,\con,\inn):=\max(\RRR^0_t(\loc,\con,\inn),\texttt{new})$ where $\bot$ is once again assumed to be a minimal element.

At the end of this procedure, we set $\RRR_t=\RRR_t^0$.

For correctness, assume that $\RRR^0_t(\loc,\con,\inn)=\tau \neq \bot$ is obtained from $\loc,\con_1, \con_2, \inn_1, \inn_2$ as above. Let $D_1=(\chi_{t_1}^\downarrow, A_1)$ and $D_2=(\chi_{t_2}^\downarrow, A_2)$ be DAGs witnessing $\RRR_{t_1}(\loc,\con_1, \inn_1)$ and $\RRR_{t_2}(\loc,\con_2, \inn_2)$ correspondingly. 
Note that common vertices of $D_1$ and $D_2$ are precisely $\chi(t)$. In particular, if $D_1$ and $D_2$ share an 
arc $ab$, then $a,b\in \chi(t)$ and therefore $ab \in \loc$. On the other hand, $\loc \subseteq A_1$, $\loc \subseteq A_2$,  so $loc=A_1\cap A_2$. Hence $\inn$ specifies the number of parents of every $b\in\chi(T)$ in $D=D_1\cup D_2$. Rest of vertices $v\in V(D)\setminus \chi(t)$ belong to precisely one of $D_i$ and their parents in $D$ are the same as in this $D_i$. As $\trcl(\con_1\cup \con_2)$ is irreflexive, $D$ is a DAG by
Lemma~$\ref{lem:con2dags}$, so $D\in \DDD_t$. The snapshot of $D$ in $t$ is $(\loc, \con,\inn)$ and $\score(D)=\sum_{ab\in A(D)} f_b(a) =  \sum_{ab\in A_1} f_b(a) + \sum_{ab\in A_2} f_b(a) -  
 \sum_{ab\in loc} f_b(a) = \score(D_1)+\score(D_2)-\texttt{doublecount}=\RRR_{t_1}(\loc,\con_1,\inn_1)+\RRR_{t_2}(\loc,\con_2, \inn_2)-\texttt{doublecount}=\tau$. So $D$ witnesses that $\RRR_t(\loc,\con,\inn)\ge \tau$.

For the converse, assume that $\RRR_t(\loc,\con,\inn)= \tau \ne \bot $ and $D$ is a DAG witnessing this. Let $D_1$ and $D_2$ be restrictions of $D$ to $\chi_{t_1}^\downarrow$ and $\chi_{t_2}^\downarrow$ correspondingly, then by the same arguments as above $A(D_1)\cap A(D_2)=\loc$, in particular $D=D_1\cup D_2$. Let $(\loc,\con_i,\inn_i)$ be the snapshot of $D_i$ in $t_i$, $i=1,2$, then $\RRR_{t_i}(\loc,\con_i,\inn_i)\ge \score (D_i)$. By the procedure of our algorithm,
$\RRR^0_t(\loc,\con,\inn)\ge \RRR_{t_1}(\loc,\con_1,\inn_1)+\RRR_{t_2}(\loc,\con_2,\inn_2)-\texttt{doublecount} \ge \score (D_1) + \score (D_2) - \sum_{ab\in \loc} f_b(a) = \score (D) = \tau.$

Hence the resulting record $\RRR_t$ is correct, which concludes the correctness proof of the algorithm. 

Since the nice tree-decomposition $\mathcal{T}$ has $\bigoh(n)$ nodes, the runtime of the algorithm is upper-bounded by $\bigoh(n)$ times the maximum time required to process each node. This is dominated by the time required to process join nodes, for which there are at most $(2^{k^2})^3((q+1)^k)^2= 8^{k^2}\cdot (q+1)^{2k}$ branches corresponding to different choices of $\loc, \con_1, \con_2, \inn_1, \inn_2$. Constructing $\trcl(\con_1\cup \con_2)$ and verifying that it is irreflexive can be done in time $\bigoh(k^3)$. Computing $\texttt{doublecount}$ and $\inn$ takes time at most $\bigoh(k^2)$. So the record for a join node can be computed in time $2^{\bigoh(k^2)}\cdot q^{\bigoh(k)}$. Hence, after we have computed a width-optimal tree-decomposition for instance by Bodlaender's algorithm~\cite{Bodlaender96}, the total runtime of the algorithm is upper-bounded by $2^{\bigoh(k^2)}\cdot q^{\bigoh(k)} \cdot n$.

Finally, to obtain the desired result for \BNSLadd, we can simply adapt the above algorithm by disregarding the entry $\inn$ and disregard all explicit bounds on the in-degrees (e.g., in the definition of $\DDD_t$). The runtime for this dynamic programming procedure is then $2^{\bigoh(k^2)} \cdot n$.
\end{proof}
\fi

This completely resolves the parameterized complexity of \BNSLadd\ w.r.t.\ all parameters depicted on Figure~\ref{fig:overview}. However, the same is not true for \BNSLaddq: while a careful analysis of the algorithm provided in the proof of Theorem~\ref{thm:additive} reveals that \BNSLaddq\ is \XP-tractable when parameterized by the treewidth of the superstructure alone, it is not yet clear whether it is \FPT---in other words, do we need to parameterize by both $q$ and treewidth to achieve fixed-parameter tractability? 
   \ifshort
We conclude this section by answering this question affirmatively via an involved two-step reduction from a variant of the \W$[1]$-hard \textsc{Multidimensional Subset Sum} problem~\cite{GanianOS21,GanianKO21}. 

\begin{theorem}[]
\label{thm:addwhard}
\BNSLaddq\ is \W\textup{[1]}-hard when parameterized by the treewidth of the superstructure.
\end{theorem}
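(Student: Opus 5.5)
We reduce from the variant of \textsc{Multidimensional Subset Sum} used by Ganian, Ordyniak and Szeider~\cite{GanianOS21,GanianKO21}. An instance consists of $n$ vectors $\vec{s}_1,\dots,\vec{s}_n\in\Nat_0^{d}$, a target $\vec{t}\in\Nat_0^d$ and an integer $k$. The question is whether some $k$ vectors sum exactly to $\vec{t}$ (equivalently, to at least $\vec t$ in every coordinate, with every coordinate sum over all vectors balanced). This problem is \W$[1]$-hard parameterized by $d$, or by $d+k$, even with all numbers encoded in unary. Because the reduction must be polynomial, unary encoding is essential: it lets us represent a coordinate value by that many vertices.

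The reduction goes in two steps. In the first step we pass to an intermediate problem, which I expect to be \BNSLaddq\ restricted to instances where a small set $X$ of ``coordinate hubs'' $h_1,\dots,h_d$ (plus $\bigoh(1)$ extra vertices) forms a vertex cover, and every other vertex has bounded degree. Such a superstructure has treewidth at most $|X|+\bigoh(1)=\bigoh(d)$. In the second step we encode the vectors. For each vector $\vec s_j$ we build an item gadget containing a selector vertex $x_j$ and, for each coordinate $i$, $s_j[i]$ ``token'' vertices attached to $h_i$ and to $x_j$. The in-degree bound $q$ carries the whole budget constraint: $h_i$ may accept at most $q=t[i]$ parents. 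For this we pad all targets to a common value $t$ by adding dummy vectors or padding tokens, so that a single $q$ works for every hub.

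The additive scores are chosen so that each token wants to be a parent of its hub, giving a large weight $M$ per arc. A token can only point to the hub if its selector is ``on''. We enforce this via acyclicity and scores. If $x_j$ is off, the tokens of item $j$ must take $x_j$ as a parent in order to collect a compensating score, and a 2-cycle-avoidance gadget then prevents them from simultaneously pointing into the hub. Alternatively, the gadget forces each token to choose between the arc to $h_i$ and the arc to $x_j$, with $x_j$'s in-degree bound used to make the choice all-or-nothing for the item. A second bound on a global counter vertex, to which the selectors point, limits the selection to $k$ items.

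The target $\ell$ is set so that it is reached exactly when every hub receives exactly $t$ token parents coming from a consistent set of $k$ fully selected items. Correctness then follows the pattern of Theorem~\ref{thm:tcwhard}: we give a forward construction of the DAG, and in the backward direction an exchange argument shows that any optimal DAG has all-or-nothing item behaviour.

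The main obstacle is the all-or-nothing consistency. Additive scores cannot express ``all tokens of item $j$ together'', so partial selections, where some tokens of an item go to the hub and others do not, must be made strictly suboptimal. My first attempt is to exploit acyclicity. I would orient a path through the tokens of item $j$ whose direction is forced by $x_j$, so that a mixed orientation necessarily loses at least one weight-$M'$ arc with $M'\gg M\cdot(\text{total tokens})$. At the same time the treewidth must stay $\bigoh(d+k)$. If this fails, I would fall back to an intermediate problem with individually bounded vertices, namely a degree-constrained orientation problem, prove its hardness first, and then simulate it.
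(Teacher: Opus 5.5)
Your outline has the right global shape: coordinate hubs whose in-degree bound $q$ encodes the budget, unary token vertices per item, treewidth bounded because deleting the hubs leaves a forest, and an exchange argument in the backward direction. There is a genuine gap, though. The step you flag as the main obstacle, all-or-nothing behaviour of each item, is where the proof actually lives, and none of your proposed mechanisms achieves it.

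\begin{itemize}
\item In your design as described, hubs only receive arcs. So no directed cycle can contain a token-to-hub arc, and acyclicity places no restriction on these arcs at all. The only remaining coupling is the in-degree bound, which can only say ``at most''.
\item Letting a token $\tau$ ``choose between'' $\tau h_i$ and $\tau x_j$ fails. These are two out-arcs of $\tau$ with different heads. \BNSLaddq\ has no out-degree bound and no cycle through both arcs, so $\tau$ may simply use both.
\item The in-degree bound of $x_j$ cannot force ``all''. It is the same global $q$, and it is an upper bound.
\item Adding forced arcs $h_i x_j$, so that a token with parent $x_j$ cannot point to $h_i$, does not help either. Each token still decides separately, so partial selections survive.
\item A ``second bound'' on a counter vertex is not directly available, since $q$ is global. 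It would have to be simulated with forced dummy parents.
\end{itemize}

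The missing idea is a single switch arc that places all tokens of an item below all hubs at once. In the paper, each item $s$ gets two vertices $a_s,b_s$.
\begin{itemize}
\item There are heavy arcs $v^i b_s$ for every coordinate $i$ and $a_s s^i_j$ for every token. Their weight $L$ equals the total light score, so every solution reaching $\ell$ contains all of them.
\item There is one light arc $b_s a_s$ of score $M_s=2\sum_i s^i-1$, while each token arc $s^i_j v^i$ scores $2$.
\end{itemize}
Using $b_s a_s$ would close a cycle $v^i\to b_s\to a_s\to s^i_j\to v^i$ with every token arc of $s$, so it excludes all of them simultaneously. All token arcs of $s$ together give $M_s+1$, while any proper subset gives at most $M_s-1$. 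Hence a partial selection can be exchanged for $b_s a_s$ without creating a cycle or violating the in-degree bound. Each fully selected item then contributes exactly $+1$, so the threshold $\ell=mL+\sum_s M_s+d$ counts selected items and no counter vertex is needed. Your path-orientation idea would also need such forced arcs from the hubs into the gadget to interact with the hub arcs at all, and then it essentially becomes this construction.

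Because the in-degree bound only expresses upper bounds, the paper reduces from the dual variant rather than an exact-sum version. That variant asks for at least $d$ items with sum at most $(r,\dots,r)$, and is obtained from \MSS\ by complementation plus a padding coordinate and vector that make the target uniform. With your exact-sum source you would additionally need a way to enforce the lower bounds.

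A minor point: your hubs-plus-$\bigoh(1)$ set is not a vertex cover, since token--selector edges are uncovered and selectors have unbounded degree. The treewidth bound itself is still fine.
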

\fi

\iflong

We conclude this section by answering this question affirmatively. To do so, we will aim to reduce from the following problem, which can be seen as a dual to the \W$[1]$-hard \textsc{Multidimensional Subset Sum} problem considered in recent works~\cite{GanianOS21,GanianKO21}. 

\begin{center}
\begin{boxedminipage}{0.98 \columnwidth}
\textsc{Uniform Dual Multidimensional Subset Sum} (\UDMSS)\\[5pt]
\begin{tabular}{l p{0.83 \columnwidth}}
Input: & An
  integer $k$, a set $S=\{s_1,\dotsc,s_n\}$ of item-vectors with $s_i
  \in \Nat^{k}$ for every $i$ with $1\leq i \leq n$, a uniform target vector
  $t=(r,\dots,r) \in \Nat^k$, and an integer $d$.\\
Parameter: & $k$.\\
Question: & Is there a subset $S'
  \subseteq S$ with $|S'|\geq d$ such that $\sum_{s \in S'}s\leq t$?
\end{tabular}
\end{boxedminipage}
\end{center}

We first begin by showing that this variant of the problem is \W$[1]$-hard by giving a fairly direct reduction from the originally considered problem, and then show how it can be used to obtain the desired lower-bound result.

\begin{lemma}
\DMSS\ is \W$[1]$-hard.
\end{lemma}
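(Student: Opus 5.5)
Our plan is a parameterized reduction from \textsc{Multidimensional Subset Sum} (\MSS), which is known to be \W$[1]$-hard parameterized by the dimension $k$~\cite{GanianOS21,GanianKO21}. There we are given $S=\{s_1,\dots,s_n\}\subseteq\Nat^k$ and a target $t\in\Nat^k$, and we ask for a subset $S'\subseteq S$ with $\sum_{s\in S'}s\geq t$. We take the lemma to refer to the uniform variant \UDMSS\ just defined. The reduction proceeds in two steps. First we complement, turning a lower-bound constraint into an upper-bound one. Then we pad, making the target uniform and fixing the cardinality.

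\textbf{Step 1 (complement and fix the size).} A set $S'$ satisfies $\sum_{s\in S'}s\ge t$ exactly when its complement $S''=S\setminus S'$ satisfies $\sum_{s\in S''}s\le \Sigma-t$, where $\Sigma=\sum_{s\in S}s$. In the dual problem, however, we must force $S''$ to be large, which conflicts with taking complements: ``small $S''$'' is easy for an upper bound, while the dual asks for $|S''|\ge d$. We resolve this by adding an extra coordinate that counts items. Let $M$ be larger than every entry of every $s_i$ and of $t$. Replace each vector $s_i$ by $s_i'=(s_i, M)$, i.e.\ append a coordinate equal to $M$. We guess the size $j=|S''|$ (there are $n$ choices, which amounts to a Turing reduction; alternatively we add $n$ dummy zero-vectors so that the size can always be padded up to exactly $j$). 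We then impose $|S''|\ge j$ via $d=j$, and the bound $\le jM$ in the new coordinate caps the size at $j$. Any $S''$ of size exactly $j$ whose sum stays below the bound then corresponds to a valid \MSS\ solution, and the converse holds as well (padding with zero dummies if needed). Since the complement of an \MSS\ solution can be shrunk to a set of any chosen size while only decreasing its sum, allowing dummies makes a single instance with $d:=n$ suffice: we add $n$ zero-vectors, each with extra coordinate $M$, require $|S''|\ge n$, and set the bound to $nM$ in the extra coordinate.

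\textbf{Step 2 (uniformize the target).} The current target $t'=(\Sigma-t, nM)$ is not uniform. Let $r$ be at least the maximum of its coordinates. For each coordinate $c$ we would like to add a fixed offset $r-t'_c$, but offsets that are not attached to items are not allowed. We therefore add to every vector in coordinate $c$ the quantity $\delta_c=(r-t'_c)/N$, where $N$ is the exact cardinality of the selected set. Because Step~1 pins the cardinality to exactly $N$ (at least $N$ by $d$, at most $N$ by the counting coordinate), every feasible selection receives the same total offset $r-t'_c$. To keep all quantities integral, we first multiply every coordinate by $N$. The counting coordinate is preserved under this scaling provided $M$ is large enough that selecting more than $N$ items still violates the bound even after the offsets are added; choosing $M$ larger than $N$ times the sum of all entries works. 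The dimension grows from $k$ to $k+1$, so the parameter is preserved and the construction runs in polynomial time.

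The main obstacle is the interplay in Step~2: the uniformizing offsets must not break the exact-cardinality enforcement. We handle this by fixing the exact size first and choosing $M$ to dominate everything else, and then verifying both directions of the equivalence coordinate by coordinate.
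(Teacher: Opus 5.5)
The source problem you reduce from is not the \W$[1]$-hard one. As you state it, \MSS{} only asks for $S'\subseteq S$ with $\sum_{s\in S'}s\ge t$. Without the cardinality bound $|S'|\le d$ this is trivially polynomial: take $S'=S$, i.e., check $\Sigma\ge t$. So a reduction from it proves nothing. The hardness of \MSS{} parameterized by $k$ comes entirely from the constraint $|S'|\le d$, and your Step~1 discards it. In your final instance the $n$ zero dummies alone have sum $(0,nM)\le(\Sigma-t,nM)$ and size $n$, so the output is a \YES-instance exactly when $\Sigma\ge t$. Restoring $d$ in the definition would not rescue Step~1 as written either. Padding with zero dummies lets an arbitrarily small complement $S''$ (i.e., an overly large $S'$) pass, so the lower bound $|S''|\ge n-d$ that encodes $|S'|\le d$ is lost.

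With the correct source problem, the ``conflict'' you describe does not arise: $|S'|\le d$ iff $|S\setminus S'|\ge n-d$. Complementing therefore yields the \DMSS{} instance $(S,\Sigma-t,k,n-d)$ directly, with no guessing, counting coordinate or dummies; this is exactly the paper's first step.

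Your Step~2 then goes through if you pin the size to exactly $N=n-d$ without dummies. This is legitimate because entries are nonnegative, so any feasible $S''$ can be shrunk to exactly $N$ items; if $\Sigma\not\ge t$, output a trivial \NO-instance. Writing $t'=\Sigma-t$, use items $(Ns^1+r-t'_1,\dots,Ns^k+r-t'_k,r)$ with $r\ge\max(1,\max_c t'_c)$, uniform target $Nr$, and size bound $N$. The last coordinate forces $|X|\le N$, hence $|X|=N$, and then coordinate $c$ reads $\sum_{s\in X}s^c\le t'_c$. No dominance argument on $M$ is needed, since every item has the same entry in the counting coordinate.

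The paper uniformizes a \DMSS{} instance $(S,t,k,d)$ more economically. It appends the entry $t_{\max}$ to every item, so the target $d\,t_{\max}$ caps the count at $d$. It adds a single auxiliary vector $b=(d\,t_{\max}-t^1,\dots,d\,t_{\max}-t^k,0)$ and raises the size bound to $d+1$. This forces $b$ into every solution and avoids scaling.
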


\begin{proof}
The \W$[1]$-hard \textsc{Multidimensional Subset Sum} problem is stated as follows:
\begin{center}
\begin{boxedminipage}{0.98 \columnwidth}
\textsc{Multidimensional Subset Sum} (\MSS)\\[5pt]
\begin{tabular}{l p{0.83 \columnwidth}}
Input: & An
  integer $k$, a set $S=\{s_1,\dotsc,s_n\}$ of item-vectors with $s_i
  \in \Nat^{k}$ for every $i$ with $1\leq i \leq n$, a target vector
  $t=(t^1,\dots,t^k) \in \Nat^k$, and an integer $d$.\\
Parameter: & $k$.\\
Question: & Is there a subset $S'
  \subseteq S$ with $|S'|\leq d$ such that $\sum_{s \in S'}s\geq t$?
\end{tabular}
\end{boxedminipage}
\end{center}
Consider its dual version, obtained by reversing both inequalities:
\begin{center}
\begin{boxedminipage}{0.98 \columnwidth}
\textsc{Dual Multidimensional Subset Sum} (\DMSS)\\[5pt]
\begin{tabular}{l p{0.83 \columnwidth}}
Input: & An
  integer $k$, a set $S=\{s_1,\dotsc,s_n\}$ of item-vectors with $s_i
  \in \Nat^{k}$ for every $i$ with $1\leq i \leq n$, a target vector
  $t=(t^1,\dots,t^k) \in \Nat^k$, and an integer $d$.\\
Parameter: & $k$.\\
Question: & Is there a subset $S'
  \subseteq S$ with $|S'|\geq d$ such that $\sum_{s \in S'}s\leq t$?
\end{tabular}
\end{boxedminipage}
\end{center}
Given an instance $\III=(S,t,k,d)$ of $\MSS$, we construct an instance $\III_d=(S,z-t,k,n-d)$ of $\DMSS$, where $z=\sum_{s \in S}s$. Note that $S'$ is a witness of $\III$ if and only if $S\setminus S'$ is a witness of $\III_d$. The observation establishes $\W[1]$-hardness of $\DMSS$.   

Now it remains to show that \DMSS\ is  \W$[1]$-hard even if we require all the components of the target vector $t$ to be equal. Let $\III=(S,t,k,d)$ be the instance of \DMSS. We construct an equivalent instance $\III_\emph{eq}=(S_\emph{eq},t_\emph{eq},k+1,d+1)$ of $\UDMSS$ with $t_\emph{eq}=(d  \cdot t_\emph{max},\dots,d \cdot t_\emph{max})$, where $t_\emph{max}=\max\{t^i:i\in[k]\}$. $S_\emph{eq}$ is obtained from $S$ by setting the $(k+1)$-th entries equal to $t_\emph{max}$, plus one auxiliary vector to make the target uniform: $S_\emph{eq}=\{(a^1,\dots,a^k, t_\emph{max})|(a^1,\dots,a^k)\in S\}\cup\{b\}$, where $b=(d t_\emph{max}-t^1,\dots,d t_\emph{max}-t^k,0)$. 

For correctness, assume that $\III$ is a $\YES$-instance, in particular, we can choose $S'$ with $|S'|=d$ and $\sum_{s \in S'}s\leq t$. Then $S_\emph{eq}'=\{(a^1,\dots,a^k, t_\emph{max})|(a^1,\dots,a^k)\in S'\}\cup\{b\}$ witnesses that $\III_\emph{eq}$ is a $\YES$-instance. For the converse direction, let $\III_\emph{eq}$ be a $\YES$-instance, we choose $S'_\emph{eq}$ with $|S_\emph{eq}'|=d+1$ and $\sum_{s \in S'_\emph{eq}}\leq t_\emph{eq}$. If $b\not \in S_\emph{eq}$, sum of the $(k+1)$-th entries in $S_\emph{eq}'$ would be at least $(d+1) t_\emph{max}$, so $b$ must belong to $S'_\emph{eq}$. Then $S'_\emph{eq}\setminus \{b\}$ consists of precisely $d$ vectors with sum at most $t_\emph{eq}-b=(t^1,\dots,t^k,d t_\emph{max})$. Restrictions of these vectors to $k$ first coordinates witness that $\III$ is a $\YES$-instance.
\end{proof}
\begin{theorem}
\label{thm:addwhard}
\BNSLaddq\ is \W$[1]$-hard when parameterized by the treewidth of the superstructure.
\end{theorem}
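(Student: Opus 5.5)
The plan is a parameterized reduction from \UDMSS, which the preceding lemma shows to be \W$[1]$-hard. I will assume that the \W$[1]$-hardness of \MSS{} (and hence of \UDMSS) holds for unary-encoded numbers, so that the item vectors can be encoded in unary by gadget sizes. The uniform target $t=(r,\dots,r)$ is exactly what is needed for this encoding. The global in-degree bound $q$ will be set to (a function of) $r$, so that one vertex per dimension can serve as a capacity-$r$ ``bin''.

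\textbf{Construction.} Given $(k,S,t=(r,\dots,r),d)$, I create one \emph{dimension vertex} $c_j$ for each $j\in[k]$. For each item $s_i$, I create a hub vertex $x_i$ and, for every $j\in[k]$, $s_i^j$ private \emph{token} vertices $u_{i,j,1},\dots,u_{i,j,s_i^j}$, each adjacent only to $x_i$ and $c_j$. I also add edges $c_jx_i$ for all $j$.

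The intended semantics are as follows. Item $i$ is \emph{selected} iff all of its tokens point into the dimension vertices ($u_{i,j,p}c_j$ for all $j,p$). In that case each $c_j$ spends $s_i^j$ units of its in-degree budget on item $i$, and $q=r$ ensures $\sum_{s\in S'}s\le t$. An \emph{unselected} item instead routes every $c_j$ into $x_i$, and $x_i$ points to its tokens.

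\textbf{Scores and acyclicity.} The additive scores are
\begin{itemize}
\item $f_{c_j}(u_{i,j,p})=A$,
\item $f_{u_{i,j,p}}(x_i)=A$,
\item $f_{x_i}(c_j)=H$.
\end{itemize}
The cycles $c_j\to x_i\to u_{i,j,p}\to c_j$ force a trade-off: if $x_i$ takes $c_j$ as a parent, then the tokens of dimension $j$ cannot feed $c_j$. The weights $A\gg H$ (with a bonus term scaled so that each selected item gains a fixed amount over an unselected one) are chosen so that three things hold. First, a partially selected item is never better than fully selecting or fully deselecting it. Second, the optimum equals a baseline plus (bonus)$\cdot|S'|$. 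Third, the threshold $\ell$ corresponds exactly to $|S'|\ge d$.

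If $k>r$, the hubs would need in-degree $k$, so I first pad the instance by scaling $r$ (adding a dummy dimension or a common offset) so that $q\ge k$. Alternatively, I replace the $k$ arcs into $x_i$ by a short chain of hub copies.

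\textbf{Treewidth.} Deleting $\{c_1,\dots,c_k\}$ leaves a disjoint union of stars centred at the $x_i$. Hence $\tw\le k+1$, and the reduction is parameter-preserving.

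\textbf{Main obstacle.} The hard step is item coherence across dimensions. The cross-item paths $c_j\to x_i\to u_{i,j',p}\to c_{j'}\to x_{i'}\to\cdots$ can create cycles between different items, and partial selections might exploit unused capacity. My first attempt is to order the solution topologically with all $c_j$ placed together. I would then argue by an exchange argument that any optimal DAG can be transformed, without losing score, into one where each item is all-in or all-out and all selected tokens precede all $c_j$, which in turn precede all unselected hubs. If the exchange fails, I would replace the single hub by $k$ hub copies linked in a directed-cycle-forcing gadget, so that any partial choice provably costs at least $H$.
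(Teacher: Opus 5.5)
There is a genuine gap, and it sits exactly where you place the main obstacle. Your source problem, the capacity encoding via $q=r$, the unary token vertices and the treewidth bound are all fine. The problem is that your gadget cannot enforce item coherence, and no choice of additive weights on it can.

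The only cycle an arc $c_jx_i$ closes is $c_j\to x_i\to u_{i,j,p}\to c_j$. For $j'\neq j$, the walk $c_j\to x_i\to u_{i,j',p}\to c_{j'}$ is merely a path. So the select/deselect decision is taken independently for each pair $(i,j)$. Let $g_{ij}$ be the total score of the arcs $u_{i,j,p}c_j$ minus $f_{x_i}(c_j)$. Then selecting item $i$ in exactly the dimensions $J\subseteq[k]$ gains $\sum_{j\in J}g_{ij}$ over full deselection.

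To reward selection at all you need $\sum_j g_{ij}>0$, hence some $g_{ij}>0$. Selecting item $i$ in that dimension alone then beats full deselection while using capacity only in $c_j$. A single partially selected item also never closes a cycle; it only creates paths from the $c_j$ with $j\notin J$ to the $c_{j'}$ with $j'\in J$. So the exchange argument you plan cannot exist.

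Concretely, take $g_{ij}=1/k$ (after scaling), $k=2$, $d=2$, $r\ge 3$, and three items equal to $(r,1)$. This is a \NO-instance of \UDMSS. Yet selecting one item fully and the other two only in dimension $2$ is acyclic, respects $q=r$, and gains $1+\frac12+\frac12=d$.

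Your hierarchy $A\gg H$ makes matters worse. Because the arcs $u_{i,j,p}c_j$ carry the heavy weight, the optimum fills every $c_j$ with $r$ arbitrary tokens, regardless of item boundaries. The fallback of $k$ hub copies is left unspecified and is not needed.

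The missing idea, used in the paper's proof, is to make deselection a \emph{single} arc per item. Split your hub into two vertices $b_s,a_s$. Make all arcs $v^ib_s$ ($i\in[k]$) and all arcs $a_ss^i_j$ heavy, with weight $L$ larger than the total weight of all light arcs. Then every solution reaching $\ell$ contains all of them. The single light arc $b_sa_s$ now lies on the cycle $v^i\to b_s\to a_s\to s^i_j\to v^i$ for \emph{every} dimension $i$, so it excludes all token arcs $s^i_jv^i$ of the item at once.

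The light weights must also be tightly balanced rather than separated in scale:
\begin{itemize}
\item each token arc scores $2$, so a fully selected item yields $M_s+1$, where $M_s=2\sum_i s^i-1$;
\item the switch arc $b_sa_s$ scores $M_s$;
\item an item missing both $b_sa_s$ and some token arc scores at most $M_s-1$, so it can be switched to $b_sa_s$, which only lowers the in-degrees of the $v^i$.
\end{itemize}
Set $q=r$, assuming w.l.o.g.\ $r>k$ so that $b_s$ may take all $k$ parents, and $\ell=mL+\sum_sM_s+d$. The light arcs then contribute exactly $\sum_sM_s+|S'|$, which gives the equivalence with $|S'|\ge d$. Deleting the $v^i$ leaves a forest, so the treewidth stays bounded in $k$.
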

\begin{proof}
Let $\III=(S,t,k,d)$ be an instance of $\UDMSS$ with $t=(r,\dots,r)$, and w.l.o.g.\ assume that $r$ is greater than the parameter $k$. We construct an equivalent instance $(V,\FFF, \ell, r)$ of \BNSLaddq. Let us start from the vertex set $V$. For every $i \in[k]$, we add to $V$ a vertex $v^i$ corresponding to the $i$-th coordinate of the target vector $t$. Further, for every $s=(s^1,\dots,s^k)\in S$, we add vertices $a_s, b_s$ and $s^1+\dots+s^k$ many vertices $s^i_j$, $i\in[k]$, $j\in[s^i]$. Intuitively, taking $s$ into $S'$ will correspond to adding arcs from $s^i_j$ to $v^i$ for every $i\in[k]$, $j\in[s^i]$. The upper bound $r$ for each coordinate of the sum in $S'$ is captured by allowing $v^i$ to have at most $r$ many parents. Formally, for every $s\in S$, $i\in[k]$, $j\in[s^i]$ the scores are defined as follows (for convenience we list them as scores per arc): $f(s^i_jv^i)=2$, $f(b_sa_s)=M_s=2\cdot \sum_{i\in[k]} s^i - 1$. We call the arcs mentioned so far \emph{light}. Note that for every fixed $s\in S$, $\sum_{i\in[k]} \sum_{j\in[s^i]} f(s^i_jv^i)=2\cdot \sum_{i\in[k]} s^i=M_s+1$ so the sum of scores of light arcs is $L=\sum_{s\in S}(2M_s+1)$. We finally set $f(a_s s^i_j)=f(v^ib_s)=L$ for every $s\in S$, $i\in[k]$ and $j\in[s^i]$.
Now the number of arcs yielding the score of $L$ is  $m=k|S|+\sum_{s\in S} \sum_{i\in [k]} s^i$; we call these arcs $\emph{heavy}$. We set the scores of all arcs not mentioned above to zero and we set $\ell=mL+ \sum_{s\in S}M_s + d$. This finishes our construction; see Figure~\ref{fig:reductAdditive} for an illustration. Note that the superstructure graph has treewidth of at most $k+2$: the deletion of vertices $v^i$, $i\in[k]$, makes it acyclic.

\begin{figure}[thb]
\begin{minipage}[c]{0.65\textwidth}
\vspace{-0.1cm}
\includegraphics[width= \textwidth]{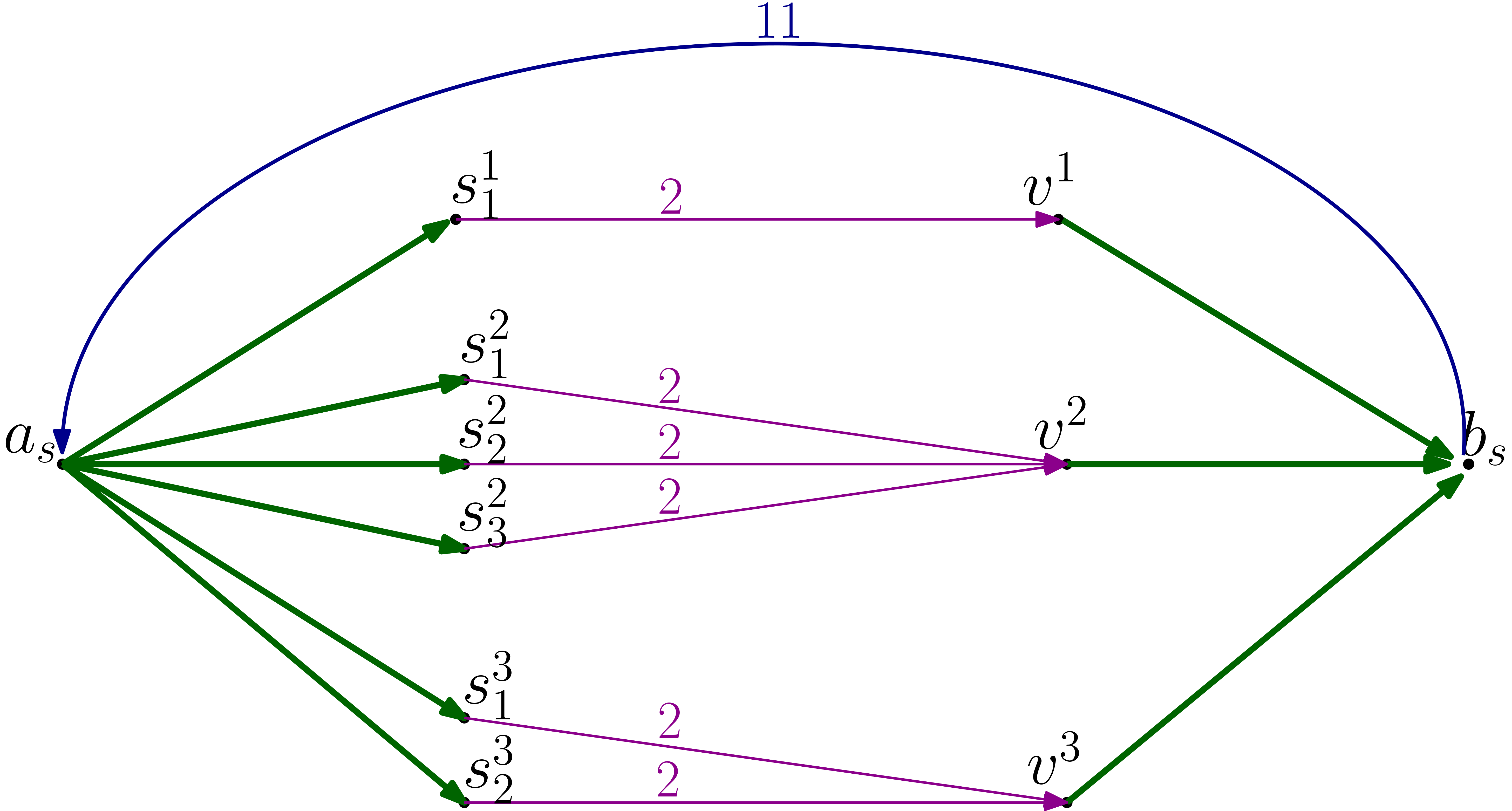}\vspace{-0.1cm}
\end{minipage}
\hfill
\begin{minipage}[c]{0.3\textwidth}
\caption{An example of our main gadget encoding the vector $s=(1,3,2)$ with $k=3$. Heavy arcs are marked in green, while purple and blue arcs are light. \\
Intuitively, the reduction forces a choice between using the blue edge or all the purple edges; the latter case provides a total score that is $1$ greater than the former, but is constrained by the upper bound $r$ on the in-degrees of $v^1$, $v^2$, $v^3$.}
\label{fig:reductAdditive}
\end{minipage}
\end{figure}

For correctness, assume that  $\III=(S,t,k,d)$ is a $\YES$-instance of $\UDMSS$, let $S'$ be a subset of $S$ of size $d$ witnessing it. We add all the heavy arcs, resulting in a total score of $mL$. Further, for every $s=(s^1,\dots,s^k)\in S'$, we add all the arcs 
$s^i_jv^i$, $i\in[k]$, $j\in[s^i]$, which increases the total score by $M_s+1$. For every $s\in S\setminus S'$, we add an arc $b_sa_s$, augmenting the total score by $M_s$. Denote the resulting digraph by $D$, then $\score(D)=mL+\sum_{s\in S'}(M_s+1)+\sum_{s\in S\setminus S'}M_s = mL+\sum_{s\in S}M_s+d=\ell$. We proceed by checking parent set sizes. Note that every $s^i_j$ has precisely one incoming arc $a_s s^i_j$ in $D$,  every $a_s$ has at most one in-neighbour $b_s$ and in-neighbours of every $b_s$ are $v^i$, $i\in[k]$. Finally, for every $i \in [k]$, $P_D(v^i)=\{s^i_j|s\in S'$, $j\in[s^i]\}$ by construction, so $|P_D(v^i)|=\sum_{s\in S'}s^i\leq r$ as $S'$ is a solution to $\UDMSS$. Therefore all the vertices in $D$ have at most $r$ in-neighbours. It remains to show acyclicity of $D$. As any cycle in the superstructure contains $v^i$ for some $i \in k$, the same holds for any potentional directed cycle $C$ in $D$. Two next vertices of $C$ after $v^i$ can be only $b_s$ and $a_s$ for some $s\in S$. In particular, by our construction, $s\in S\setminus S'$. Then, again by construction, $D$ doesn't contain an arc $s^i_jv^i$ for any $i\in[k]$, $j\in[s^i]$, so $v^i$ is not reachable from $a_s$, which contradicts to $C$ being a cycle. Therefore $D$ witnesses that $(V,\FFF, \ell, r)$ is a $\YES$-instance.

For the opposite direction, let  $(V,\FFF, \ell, r)$ be a $\YES$-instance of \BNSLaddq\ and let $D$ be a DAG witnessing this. Then $D$ contains all the heavy arcs. Indeed, sum of scores of all light arcs in $\FFF$ is $L$, so if at least one heavy arc is not in $A(D)$, then $\score(D)\leq (m-1)L+L=mL<\ell$. For every $s\in S$, let $A^s=\{s^i_jv^i| i\in[k]$, $j\in[s^i]\}$. If $D$ doesn't contain an arc $b_sa_s$ and some of arcs from $A^s$,  the total score of $A(D)\cap A^s$ is at most $M_s-1$. In this case we modify $D$ by deletion of $A(D)\cap A^s$ and addition of arc $b_sa_s$, which increases $\score (D)$ and may only decrease the parent set sizes of $v^i$, $i\in k$.  After these modifications, let $S''=\{s\in S|D$ contains an arc $b_sa_s\}$. Note that whenever $s\in S''$, $D$ cannot contain any of the arcs $s^i_jv^i$, $i\in[k]$, $j\in[s^i]$, as this would result in directed cycle $v^i\to b_s\to a_s \to s^i_j \to v^i$.  Therefore for every $s\in S$, $D$ contains either an arc $b_sa_s$ (yielding the score of $M_s$) or all of arcs $s^i_jv^i$, $i\in[k]$, $j\in[s^i]$ (yielding the score of $M_s+1$ in total), so the sum of scores of light arcs in $D$ is $\sum_{s\in S\setminus S''} (M_s+1)+ \sum_{s\in S''} M_s=\sum_{s\in S} M_s + |S\setminus S''|$, which should be at least  $\ell-mL=\sum_{s\in S}M_s + d$. So $|S\setminus S''|\geq d$, we claim that  $S'=S\setminus S''$ is a solution to $\III=(S,t,k,d)$. Indeed, for every $i\in [k]$, $r\ge |P_D(v^i)|=|\{s^i_j|s\in S'$, $j\in[s^i]\}|=\sum_{s\in S'} s^i$.
\end{proof}
\fi

\section{Implications for Polytree Learning}
\label{sec:PL}
 Here, we discuss how the results of Sections~\ref{sec:edgeparams} and~\ref{sec:add} can be adapted to \textsc{Polytree Learning} (\PL). 

\smallskip
\textbf{Theorem~\ref{thm:kernel}: Data Reduction.}\quad
Recall that the proof of Theorem~\ref{thm:kernel} used two data reduction rules. While Reduction Rule \ref{redruleone} carries over to \PLneq, Reduction Rule \ref{redruletwo} has to be completely redesigned to preserve the (non-)existence of undirected paths between $a$ and $c$. 
  By doing so, we obtain:

\iflong
\begin{theorem}
\fi
\ifshort
\begin{theorem}[]
\fi
\label{thm:kernelPL}
There is an algorithm which takes as input an instance $\III$ of \PLneq\ whose superstructure has feedback edge number $k$, runs in time $\bigoh(|\III|^2)$, and outputs an equivalent instance 
$\III'=(V',\FFF',\ell')$ of \PLneq\ such that $|V'|\leq 24k$.
\end{theorem}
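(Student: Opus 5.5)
The plan follows the proof of Theorem~\ref{thm:kernel}: first prune degree-one vertices, then compress long induced paths of degree-two vertices in the spanning tree $T$. The bound $24k$ will come from the same counting, but with a compression gadget of five new vertices per path instead of three.

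\textbf{Step 1: pruning leaves.} Compute a feedback edge set $E_F$ of size $k$ and the spanning tree $T=G-E_F$. Apply Reduction Rule~\ref{redruleone} exhaustively, reading it as a \PLneq\ rule. Safeness carries over almost verbatim. A degree-one neighbour $q$ of $v$ is attached only through the edge $vq$, so whether that arc is oriented $vq$, $qv$, or omitted, no undirected cycle through $q$ can arise. The forward and backward constructions in the proof of Lemma~\ref{lem:redruleone} therefore preserve the polytree property in both directions. After exhaustive application, every leaf of $T$ is an endpoint of an edge of $E_F$. Hence $T$ has at most $2k$ leaves and at most $2k$ vertices of $T$-degree greater than $2$, giving at most $4k$ marked vertices and at most $4k$ maximal unmarked paths, exactly as before.

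\textbf{Step 2: a new path rule.} For a path $a,b_1,\dots,b_m,c$ with all $b_i$ of degree $2$, we must now record more than the direction pattern at the ends and the existence of a directed path. For a polytree, the crucial information is whether the solution contains the \emph{whole} undirected path between $a$ and $c$. If it does, this connects $a$ and $c$ in the skeleton, which forbids any other undirected $a$--$c$ connection; if it does not, the path splits into two hanging pieces. Within each case we still distinguish the orientation of the end arcs $ab_1$/$b_1a$ and $cb_m$/$b_mc$. This affects the scores of $a$ and $c$ themselves and, when the full path is present, whether $a$ and $c$ are directedly connected; that connection is irrelevant for acyclicity here, since a forest skeleton already forces acyclicity.

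For each configuration, compute by a linear-time dynamic program along the path the maximum total score of $b_1,\dots,b_m$, in the same way $\ell_{\emph{max}}$ and $\ell_{\emph{noPath}}$ were computed. Then replace $b_2,\dots,b_{m-1}$ by a constant-size gadget: a short path $b_1,x,y,z,b_m$ together with one or two auxiliary vertices. The gadget is designed so that it has a polytree orientation realising each configuration, with the optimal score stored as the non-zero score of one designated gadget vertex for the matching parent set. The scores of $a$ and $c$ are rewritten by substituting the gadget neighbours for $b_1$ and $b_m$, as in Reduction Rule~\ref{redruletwo}.

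We will design the gadget so that exactly five non-marked vertices remain per path. Then the kernel has at most $4k+4k\cdot 5=24k$ vertices, matching the stated bound. The running time analysis is unchanged.

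\textbf{Main obstacle.} The hardest step is the safeness proof for the new rule. We must make sure the gadget's internal connectivity options correspond bijectively, up to score dominance, to the two skeleton-connectivity states of the original path. In particular, the gadget must offer no way to connect $a$ to $c$ without ``paying'' for the full-path configuration, and when the path is broken it must leave $a$ and $c$ disconnected inside the gadget. We will first try a gadget in which connecting $a$ to $c$ forces every gadget edge to be present, while every broken state leaves at least one gadget edge absent; one designated vertex's parent set then encodes which state occurs. We will then check the six-type case analysis of Lemma~\ref{lem:redruletwo}, with cycle-freeness of the skeleton replacing acyclicity in each case.
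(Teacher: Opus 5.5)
Your Step~1 is fine and matches the paper. Step~2, however, is the entire new content of this theorem, and you only announce it. You never specify the replacement gadget, its candidate parent sets and scores, or a safeness argument; the ``main obstacle'' paragraph is a plan to search for one.

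What you do say is also inconsistent with the bound. A path $b_1,x,y,z,b_m$ plus one or two auxiliary vertices is six or seven unmarked vertices per path. That gives $28k$ or $32k$, not $24k$.

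The suggested shape also does not work as described. The maximum score of $b_1,\dots,b_m$ is a joint function of three bits: whether $a\in P(b_1)$, whether $c\in P(b_m)$, and whether $b_1$ and $b_m$ are connected. So a single designated vertex must store eight values, and its parent set must ``see'' all three bits. A vertex inside $b_1,x,y,z,b_m$ is adjacent to neither $a$ nor $c$, and you have not specified any further adjacencies.

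There is a further subtlety you do not address. The arc $b_1a$ is chosen by $a$'s parent set, outside the gadget. The gadget must therefore (i) forbid $a$ from taking $b_1$'s role as a parent whenever the recorded state has $a\in P(b_1)$, and (ii) link $a$ to $c$ in the skeleton exactly when the original path would, depending on $a$'s and $c$'s own choices. For this reason the state to track is connectivity of $b_1$ and $b_m$, not of $a$ and $c$.

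The missing idea is the paper's Reduction Rule~\ref{redrulePL}. Delete \emph{all} of $b_1,\dots,b_m$ and add a hub $b$ and two copies $b_1',b_1''$ (resp.\ $b_m',b_m''$). The copies have no candidate parents and replace $b_1$ (resp.\ $b_m$) in every parent set of $a$ (resp.\ $c$). The hub $b$ gets eight candidate parent sets, one for each pair $(B,p)$ with $B\subseteq\{a,c\}$ and $p\in\{0,1\}$, each scored by $\ell_p(B)$. For example:
\begin{itemize}
\item $\{a,c,b_1',b_1'',b_m',b_m''\}$ for $B=\{a,c\}$, $p=1$;
\item $\{a,b_1',b_1'',b_m'\}$ for $B=\{a\}$, $p=1$;
\item $\{b_1',b_m'\}$ for $B=\emptyset$, $p=1$;
\item $\emptyset$ for $B=\emptyset$, $p=0$.
\end{itemize}
The three mechanisms are as follows.
\begin{itemize}
\item Whenever $a\in B$, $b$ takes both $b_1'$ and $b_1''$. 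Then $a$ cannot take both, since that would close the skeleton cycle $a - b_1' - b - b_1'' - a$; this encodes $b_1\notin P(a)$.
\item Whenever $p=1$ and $a\notin B$, $b$ takes only $b_1'$. So $a$ is joined to $b$ exactly when $a$ itself takes $b_1$'s copies.
\item $a$ appears in $b$'s parent set only when $p=1$.
\end{itemize}
Hence $a$ and $c$ are linked through the gadget precisely in the cases where the original path links them. This is five new vertices per path, giving $4k+4k\cdot 5=24k$.
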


\iflong
\begin{proof} Note that Reduction Rule \ref{redruleone} acts on the superstructure graph by deleting leaves and therefore preserves not only optimal scores but also (non-)existance of polytrees achiving the scores. Hence we can safely apply the rule to reduce the instance of \PLneq. After the exhaustive application, all the leaves of the superstructure graph $G$ are the endpoints of edges in feedback edge set, so there can be at most $2k$ of them. To get rid of long induced paths in $G$, we introduce the following rule:

\begin{redrule}
\label{redrulePL}
 Let $a,b_1,\dots,b_m,c$ be a path in $G$ such that for each $i\in [m]$, $b_i$ has degree precisely $2$. 
 For every $B\subseteq\{a,c\}$ and $p\in \{0,1\}$, let $\ell_p(B)$ be the maximum sum of scores that can be achieved by $b_1,\dots,b_m$ under the conditions that
(1) there exists an undirected path between $b_1$ and $b_m$ if and only if $p=1$; (2) $b_1$ (and analogously $b_m$) takes $a$ ($c$) into its parent set if and only if $a \in B$ ($c\in B$). 

We construct a new instance $\III'=(V',\FFF',\ell)$ as follows:
\begin{itemize}
\item $V':=V\cup\{b,b_1',b_1'',b_m',b_m''\}\setminus \{b_1\dots b_m\}$;
\item $\Gamma_{f'}(b_1')=\Gamma_{f'}(b_1'')=\Gamma_{f'}(b_m')=\Gamma_{f'}(b_m'')=\emptyset$;
\item The scores for $a$ (analagously $c$) are obtained from $\FFF$ by simply replacing every occurence of $b_1$ by $b_1'$ and $b_1''$ ($b_m$ by $b_m'$ and $b_m''$), formally:
\begin{itemize}
\item $\Gamma_{f'}(a)$ is a union of $\{P \in \Gamma_{f}(a)|b_1\not \in P\}$, where $f'_a(P):=f_a(P)$ and\\  $\{P\setminus b_1\cup \{b_1',b_1''\}| b_1 \in P, P \in \Gamma_{f}(a)\}$, where $f'_a(P\setminus b_1\cup \{b_1',b_1''\}):=f_a(P)$;
\item $\Gamma_{f'}(c)$ is a union of $\{P \in \Gamma_{f}(c)|b_m\not \in P\}$, where $f'_c(P):=f_c(P)$, and\\  $\{P\setminus b_m\cup \{b_m',b_m''\}| b_m \in P, P \in \Gamma_{f}(c)\}$, where $f'_c(P\setminus b_m\cup \{b_m',b_m''\}):=f_c(P)$.
\end{itemize}
\item $\Gamma_{f'}(b)$ consists of eight sets, yielding corresponding scores $f'_b$:
$\{a,c,b_1',b_1'',b_m',b_m''\}\to l_1(\{a,c\})$, 
$\{b_1',b_1'',b_m',b_m''\}\to l_0(\{a,c\})$, 
$\{b_1',b_m'\}\to l_1(\emptyset)$, 
$\emptyset \to l_0(\emptyset)$, 
$\{a,b_1',b_1'',b_m'\}\to l_1(\{a\})$, 
$\{b_1',b_1''\}\to l_0(\{a\})$, 
$\{b_m',b_m''\}\to l_1(\{c\})$,  
$\{b_1',b_m',b_m'',c\}\to l_0(\{c\})$.
\end{itemize}
\end{redrule}

Parent sets of $b$ are defined in a way to cover all the possible configurations on solutions to $\III$ restricted to $a,b_1,\dots,b_m,c$; the corresponding scores of $b$ are intuitively the sums of scores that $b_i$, $i\in[m]$, receive in the solutions. The eight cases that may arise are illustrated in Figure \ref{fig:redrulePL}.

\begin{figure}[ht]
\begin{center}
\includegraphics[width= \textwidth]{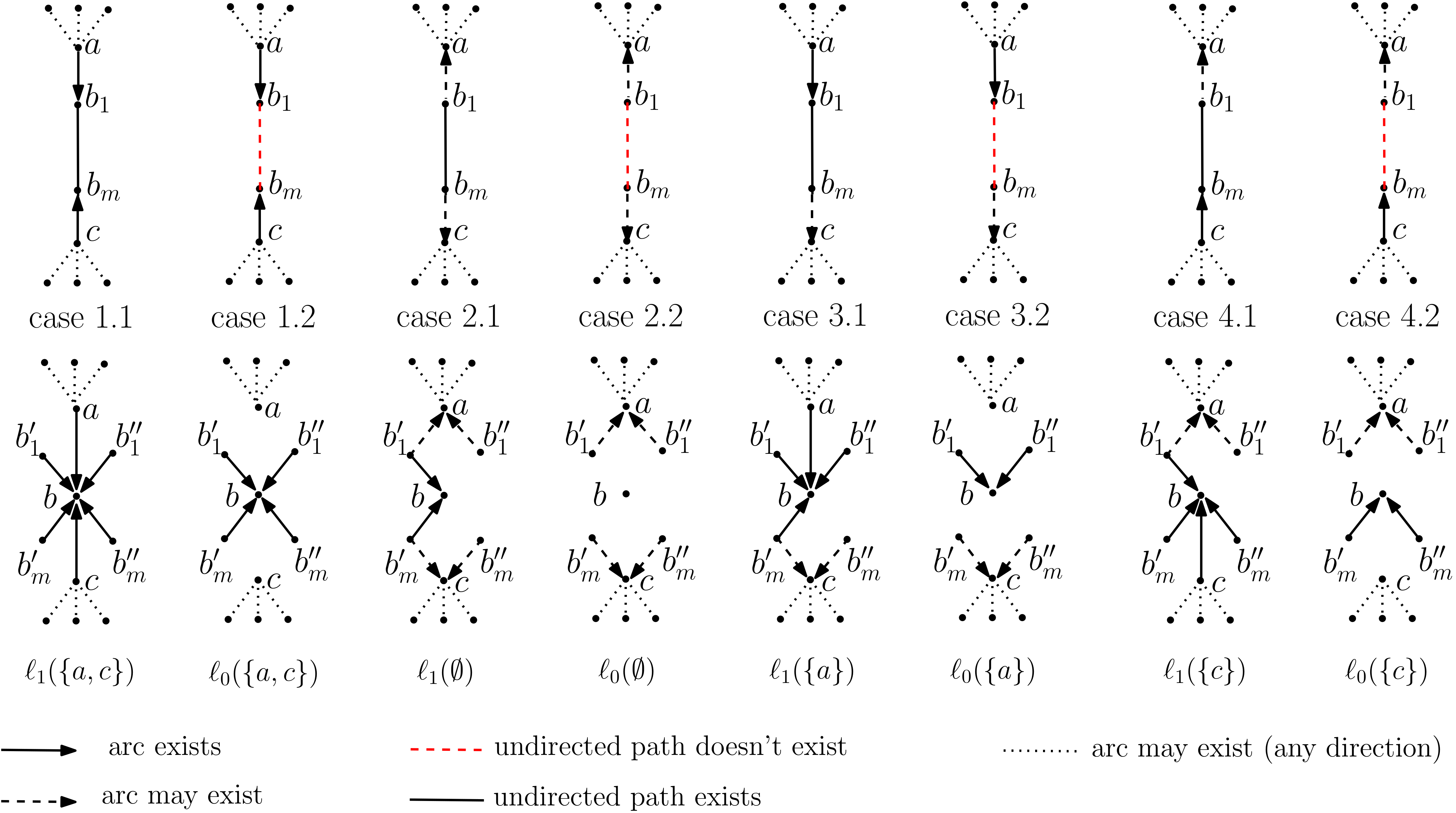}
\end{center}
\caption{Top: The eight possible scenarios for solutions to $\III$. Bottom: The corresponding arcs in the gadget after the application of Reduction Rule 2' (the scores of $b$ are specified below).
}
\label{fig:redrulePL}
\end{figure}
\begin{claim}
\label{lem:redrulePL}
Reduction Rule~\ref{redrulePL} is safe.
\end{claim}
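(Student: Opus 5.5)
We follow the template of the proof of Lemma~\ref{lem:redruletwo}: show that for every polytree $D$ of $\III$ there is a polytree $D'$ of $\III'$ (its \emph{reduct}) with $\score(D')\geq\score(D)$, and conversely that every polytree of $\III'$ can be normalized and lifted back without losing score. The key object is the \emph{type} of $D$ on the path, namely the pair $(B,p)$. Here $B\subseteq\{a,c\}$ records which of the arcs $ab_1$, $cb_m$ are present, and $p\in\{0,1\}$ records whether $b_1,\dots,b_m$ are joined by an undirected path in $D$, i.e.\ whether all arcs between consecutive $b_i$ are present. These are exactly the eight scenarios of Figure~\ref{fig:redrulePL}. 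For each type, the restriction of $D$ to the path can only score at most $\ell_p(B)$ on $b_1,\dots,b_m$. In the reduct, $b$ receives the parent set of $\Gamma_{f'}(b)$ listed for $(B,p)$, and thus scores $\ell_p(B)$.

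\textbf{Forward direction.} Delete $b_1,\dots,b_m$ and add $b,b_1',b_1'',b_m',b_m''$. Whenever $D$ contains $b_1a$, add both $b_1'a$ and $b_1''a$, so that $a$ keeps its score by the definition of $f'_a$; treat $c$ symmetrically. Then add the arcs into $b$ prescribed by its parent set for the type $(B,p)$. All scores outside the gadget are preserved, and $b$ scores at least what the $b_i$ scored. Next we check that $D'$ is a polytree. The gadget is designed so that, for each type, the skeleton of $D'$ restricted to $\{a,b,b_1',b_1'',b_m',b_m''\}$ reproduces exactly the following information: whether $a$ is connected to $c$ through the gadget, and which of $a$ and $c$ are attached to it. This holds whether or not $b_1a$ or $b_mc$ are present (they are attached via $b_1'$, $b_1''$).

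For example, consider $p=1$ and $B=\{a,c\}$. In $D$ the skeleton path $a\!-\!b_1\!\cdots b_m\!-\!c$ exists, while $b_1a$ and $b_mc$ are absent, since $b_1$ already has $a$ as a parent and orienting the edge both ways would form a 2-cycle. In $D'$ we get the path $a\!-\!b\!-\!c$ with the leaves $b_1',b_1'',b_m',b_m''$ hanging off $b$. In the cases where $a$ is a parent of $b_1$ in $D$, the copies $b_1',b_1''$ are not parents of $a$. In the cases where $b_1$ is a parent of $a$, the duplication into two vertices is what prevents a cycle through $b$. Take the parent set $\{b_1',b_1''\}$ for type $(\{a\},0)$ when $b_1a\in A(D)$: using both $b_1'a$ and $b_1''a$ together with $b_1'b$ and $b_1''b$ would create the undirected cycle $a b_1' b b_1'' a$. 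This is precisely the case in which $p$ and $B$ must be set so as to make that impossible, and we will verify it per case.

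Since connectivity between $a$ and $c$, and hence the forest property of the rest of $D$, depends only on this information, acyclicity of the skeleton follows by the same gluing argument as in Lemma~\ref{lem:con2dags}, applied to undirected paths.

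\textbf{Converse direction.} Given a polytree $D''$ of $\III'$, note that $b_1',b_1'',b_m',b_m''$ have no positive-score parents. We therefore normalize $D''$: $b$ takes one of its eight listed parent sets (or $\emptyset$), and $b_1'$ and $b_1''$ either both point to $a$ or neither does, since $f'_a$ only rewards sets containing both. Afterwards we replace the gadget by an optimal path configuration realizing $\ell_p(B)$. Once more, the preserved connectivity information shows that the result is a polytree with no smaller score.

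\textbf{Main obstacle.} The hardest step is the case check that, for each of the eight parent sets of $b$, the gadget creates an undirected cycle exactly when the original path configuration together with arcs $b_1a$ or $b_mc$ would have created one, and connects $a$ and $c$ exactly when the original did. I would tabulate the cases along with the presence or absence of $b_1a$ and $b_mc$ and check each directly.
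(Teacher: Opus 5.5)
Your proposal follows the paper's proof essentially step for step. Reducts are built from the eight types $(B,p)$, the forward direction rests on the gadget reproducing whether $a$ and $c$ are joined through $b_1,\dots,b_m$, and the converse normalizes $D''$ to one of the eight gadget forms and lifts it back; your gluing of two forests along $\{a,c\}$ just makes explicit what the paper asserts in one line.

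When you fill in the case table, two points matter. First, the duplication $b_1',b_1''$ does its work in the converse, not the forward direction. It creates the cycle $a\,b_1'\,b\,b_1''\,a$ whenever $b$ has a parent set of a type with $a\in B$ while $a$ takes $b_1',b_1''$, so no lift can contain both $ab_1$ and $b_1a$. In the forward direction that situation never arises, since $a\in B$ already excludes $b_1a$. Second, the check only succeeds if the $\{c\}$-entries of $\Gamma_{f'}(b)$ mirror the $\{a\}$-entries, i.e.\ $\{c,b_m',b_m'',b_1'\}\mapsto\ell_1(\{c\})$ and $\{b_m',b_m''\}\mapsto\ell_0(\{c\})$. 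This is what the paper's ``symmetric to 3.1 (3.2)'' intends, but the rule as printed swaps $\ell_0(\{c\})$ and $\ell_1(\{c\})$, and with the printed listing both directions fail.
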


\begin{proof}

We will show that a score of at least $\ell$ can be achieved in the original instance $\III$ if and only if a score of at least $\ell$ can be achieved in the reduced instance $\III'$.

Assume that $D$ is a polytree that achieves a score of $\ell$ in $\III$. We will construct a polytree $D'$, called the \emph{reduct} of $D$, with $f'(D')\geq \ell$. To this end, we first modify $D$ by removing the vertices $b_1,\dots,b_{m}$ and adding $b,b_1',b_1'',b_m',b_m''$. We also add arcs $b_1'a$ and $b_1''a$ ($b_m'c$ and $b_m''c$ correspondingly) if and only if $b_1a\in A(D)$ ($b_mc\in A(D)$). Let us denote the DAG obtained at this point $D^*$. Note that scores of $a$ and $c$ in $D^*$ are the same as in $D$. Further modifications of $D^*$ depend only on $D[a,b_1...b_m,c]$ and change only the parent set of $b$. We distinguish the $8$ cases listed below (see also Figure~\ref{fig:redrulePL}):
\begin{itemize}
\item case 1.1 (1.2): $ab_1,cb_m\in A(D)$, $b_1$ and $b_m$ are (not) connected by path in $D$. We add incoming arcs to $b$ from $a,c,b_1',b_1'',b_m',b_m''$ ($b_1',b_1'',b_m',b_m''$ only) resulting in $f'_b(P_{D'}(b))=l_1(\{a,c\})$ ($f'_b(P_{D'}(b))=l_0(\{a,c\})$).
\item case 2.1 (2.2): $ab_1,cb_m\not \in A(D)$, $b_1$ and $b_m$ are (not) connected by path in $D$.  We add incoming arcs to $b$ from $b_1'$ and $b_m'$ (leave $D^*$ unchanged) yielding $f'_b(P_{D'}(b))=l_1(\emptyset)$ ($f'_b(P_{D'}(b))=l_0(\emptyset)$).
\item case 3.1 (3.2):  $ab_1\in A(D)$,  $cb_m \not \in A(D)$, $b_1$ and $b_m$ are (not) connected by path in $D$.  We add incoming arcs to $b$ from $a,b_1',b_1'',b_m'$ ($b_1'$ and $b_1''$ only), then $f'_b(P_{D'}(b))=l_1(\{a\})$ ($f'_b(P_{D'}(b))=l_0(\{a\})$).
\item case 4.1 (4.2):  $ab_1 \not\in A(D)$,  $cb_m \in A(D)$, $b_1$ and $b_m$ are (not) connected by path in $D$. The cases are  symmetric to 3.1 (3.2)
\end{itemize}
Note that $D'$ contains a path between $a$ and $c$ if and only if $D$ does. By definition of $l_0$ and $l_1$, the score of $b$ in $D'$ is at least as large as the sum of scores of $b_i$, $i\in[m]$, in $D$. Moreover, each vertice in $V(D)\cap V(D')$ receives equal scores in $D$ and $D'$. Hence $D'$ is a polytree with $f'(D')\geq \ell$, as desired.
ith a score at least $\ell$.
For the converse direction, note that the polytrees constructed in cases 1.1-4.2 cover all optimal configurations which may arise in $\III'$: if there is a polytree $D''$ in $\III'$ with a score of $\ell'$, we can always modify it to a polytree $D'$ with a score of at least $\ell'$ such that  $D'[a,b_1',b_1'',b,b_m',b_m'',c]$  has one of the forms depicted at the bottom line of the figure. But every such $D'$ is a reduct of some polytree $D$ of the original instance with the same score. 
\end{proof}
We apply Reduction Rule \ref{redrulePL} exhaustively, until there is no more path to shorten. Bounds on the running time of the procedure and size of the reduced instance can be obtained similarly to the case of \BNSLneq. In particular, every long path is replaced with a set of $5$ vertices, resulting in at most $4k+4k\cdot5=24k$ vertices. 
\end{proof}
\fi

\smallskip
\textbf{Theorem~\ref{thm:lfes}: Fixed-parameter tractability.}\quad
Analagously to \BNSLneq\, a data reduction procedure as the one provided in Theorem~\ref{thm:kernelPL} does not exist for \PLneq\ parametrized by \lfes\ unless $\NP\subseteq \coNP/\cc{poly}$, since the lower-bound result provided in Theorem~\ref{thm:nokernel} can be straightforwardly adapted to \PLneq. But similarly as for \BNSL\, we can provide an \FPT\ algorithm using the same ideas as in the proof of Theorem~\ref{thm:lfes}. The algorithm proceeds by dynamic programming on the spanning tree $T$ of $G$ with $\lfes(G,T)=\lfes(G)=k$. The records will, however, need to be modified: for each vertex $v$, instead of the path-connectivity relation on $\delta(v)$, we store connected components of the \emph{inner boundary} $\delta(v)\cap V_v$ and incoming arcs to $T_v$.\ifshort~This yields:\fi
\iflong
We provide a full description of the algorithm below.

\begin{theorem}
\fi
\ifshort
\begin{theorem}[]
\fi
\label{thm:lfesPL}
\PLneq\ is fixed-parameter tractable when parameterized by the local feedback edge number of the superstructure.
\end{theorem}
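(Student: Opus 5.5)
We will mirror the proof of Theorem~\ref{thm:lfes}: first invoke Theorem~\ref{thm:complfes} to obtain a rooted spanning tree $T$ of the superstructure $G$ with $\lfes(G,T)=k$, and then run a leaf-to-root dynamic program along $T$. We keep the definitions of $V_v$, $\bar V_v$, the boundary $\delta(v)$, and open and closed children, and we rely on Observation~\ref{obs:basiclfesproperties}. The only change is what a record remembers. A polytree is acyclic in its skeleton, so reachability information is no longer the relevant quantity; what matters is undirected connectivity.

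\textbf{Records.} For a vertex $v$, a record is a triple $(\pi, I, s)$ with the following components.
\begin{itemize}
\item $I$ is a set of arcs $uw$ with $w\in\delta(v)\cap V_v$ and $u\in\delta(v)\setminus V_v$. These are the arcs entering the subtree from outside.
\item $\pi$ is a partition of the inner boundary $\delta(v)\cap V_v$.
\end{itemize}
The triple is a record if there is a polytree (a forest skeleton) $D$ on $\bar V_v$ satisfying:
\begin{itemize}
\item every arc of $D$ ends in $V_v$;
\item the arcs of $D$ leaving $V_v$'s complement are exactly $I$;
\item $\pi$ is the partition induced by the connected components of $D[V_v]$ on $\delta(v)\cap V_v$;
\item the score of $D$ on $V_v$ is $s$.
\end{itemize}
Arcs from $V_v$ to outside vertices are decided higher up, by the parent sets of the outside vertices, and are not recorded here. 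A valid record maximizes $s$ for fixed $(\pi,I)$. Since $|\delta(v)|\le 2k+2$, the number of valid records is $2^{\bigoh(k^2)}$. Closed children again have only two records (arc $vv_i$ present or not, plus the trivial partition). At the root the unique record gives the optimum.

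\textbf{Combination step.} This is the analogue of Lemma~\ref{lem:combRecordsnew} and the key step. We branch over $P\in\Gamma_f(v)$ and over one valid record for each of the at most $2k$ open children. Closed children are handled greedily, exactly as before: if $v_i\in P$ we take the $\emptyset$-record, otherwise we take the better of the two. We then glue. Form the auxiliary multigraph $H$ whose vertices are the boundary vertices of the open children together with $N_G(v)\cup\{v\}$. Each block of each child's partition $\pi_i$ is contracted into a tree on its members; closed children hang off $v$ and never create cycles. We add the edges $pv$ for $p\in P$, and the edges of each $I_i$ whose tail lies inside $V_v$.

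The branch is accepted only if the following hold.
\begin{itemize}
\item $H$ is a forest. By an analogue of Lemma~\ref{lem:con2dags} for undirected unions glued along shared vertices, this is equivalent to the glued digraph being a polytree.
\item The arcs demanded by the children's $I_i$ with tails inside $V_v$ are consistent with the parent sets already chosen. In particular, $v_iv$-type arcs agree with $P$, and arcs between two sibling subtrees are counted once, from the head side.
\end{itemize}
The new $\pi$ is read off from the components of $H$ restricted to $\delta(v)\cap V_v$. The new $I$ consists of the arcs of $P$ and of the $I_i$ whose tails lie outside $V_v$. The score is summed exactly as in $s_{\emph{new}}$. Correctness is then proved by the same two-direction claim as Claim~\ref{cl:combRecords}: split a witness polytree along the child subtrees for one direction, and glue child witnesses for the other.

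\textbf{Expected obstacle.} I expect the main difficulty to be justifying that the partition of the inner boundary together with $I$ suffices to detect undirected cycles. A cycle may pass through outside vertices, because an arc from $V_v$ to an outside vertex $u$ can be added later by $u$'s parent set. The plan is to argue that any such cycle becomes visible at the lowest tree vertex whose subtree contains all of its inside portions. At that point, the outside endpoints involved are boundary vertices of some child, and the entering arcs from them are recorded in $I$. Arcs leaving toward outside vertices are only created when those outside vertices are processed, and at that time their heads' records expose the needed components.

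Running time: the records number $2^{\bigoh(k^2)}$ per child, and there are $\bigoh(k)$ open children, so the total is $2^{\bigoh(k^3)}\cdot |\III|^{\bigoh(1)}$.
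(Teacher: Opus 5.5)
Your proposal follows the paper's proof essentially step for step. It uses the same records: the partition of the inner boundary $\delta(v)\cap V_v$ by components of $D[V_v]$, the set of arcs entering it from $\delta(v)\setminus V_v$, and the score. It uses the same branching over $P\in\Gamma_f(v)$ and the records of the at most $2k$ open children, with closed children chosen greedily, and your forest test on $H$ plays the role of the paper's component-counting identity.

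One small point needs fixing. The paper's count also accounts for tails lying outside $V_v$ (its set $Y$), but your $H$ omits the $I_i$-arcs with outside tails. As stated, your equivalence with the witness on $\bar V_v$ being a polytree therefore fails: for example, an outside $y$ may send arcs into $V_{v_1}$ and $V_{v_2}$ whose heads are joined through $v$, creating a cycle that $H$ does not see. Either add those arcs to $H$, or weaken the record definition to require only that $D[V_v]$ be a forest and rely on the LCA argument from your last paragraph, which does catch such cycles further up.
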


\iflong
\begin{proof}
As before, given an instance $\III$ with a superstructure graph $G=G_{\III}$ such that $\lfes(G)=k$, we start from computing the spanning tree $T$ of $G$ with $\lfes(G,T)=\lfes(G)=k$; pick a root $r$ in $T$. We keep all the notations $T_v$, $V_v$, $\bar V_v$, $\delta(v)$ for $v\in V(T)$ from the subsection \ref{sub:lfes}. In addition, we define the \emph{inner boundary} of $v\in V(T)$ to be $\delta_{in}(v):=\delta(v)\cap V_v$ i.e. part of boundary that belongs to subtree of $T$ rooted in $v$. The remaining part we call the \emph{outer boundary} of $v$
and denote by $\delta_{out}(v):=\delta(v)\setminus \delta_{in}(v)$. For any set $A$ of arcs, we define $\widetilde A=\{uv|uv\in A$ or $vu\in A\}$. Obviously, the claims of Observation \ref {obs:basiclfesproperties} still hold. Moreover, for every closed $v$, $\delta_{in}(v)$ contains only $v$ itself and $\delta_{out}(v)$ is either the parent of $v$ in $T$ or $\emptyset$ (for $v=r$).

Let $R_v$ be binary relation on $\delta_{in}(v)$, $A_v\:\subseteq \delta_{out}(v)\times\delta_{in}(v)$, $s_v$ is integer. Then  
$(R_v, A_v, s_v)$ is a \emph{record} for $v$ if and only if there exist a polytree $D$ on $\bar V_v$ with all arcs oriented inside $V_v$ such that:
\begin{itemize}
\item $A_v=\{xy\in A(D)|\: x\in \delta_{out}(v), y \in \delta_{in}(v)\}$
\item $R_v=\{xy|\: x,y\in \delta_{in}(v)$ are in the same connected component of $D[V_v]\}$
\item $s_v=\sum_{u\in V_v} f_u(P_D(u))$
\end{itemize}
Note that $R_v$ is an equivalence relation on $\delta_{in}(v)$, number of its equivalence classes is equal to number of connected components of $D[V_v]$ that intersect $\delta(v)$.
 
Record $(R_v, A_v, s_v)$ is called \emph{valid} if and only if $s_v$ is maximal for fixed $R_v, A_v$ among all the records for $v$. Denote by 
$\mathcal R(v)$ the set of all valid records for $v$, then $|\mathcal R(v)|\le 2^{(2k+2)^2}$. Indeed, $R_v$ and $A_v$ can be uniquely determined by the choice of some relation on $\delta(v)\times \delta(v)$. As $|\delta(v)|\le 2k+2$, there are at most $2^{(2k+2)^2}$ possible relations. 

The root $r$ of $T$ has a single valid record $(\emptyset,\emptyset,s_{\III})$, where $s_\III$ is the maximum score that can be achieved by a solution to $\III$. For any closed $v\neq r$, $\mathcal R(v)$ consists of precisely two valid records: one for $A_v=\emptyset, R_v=\{vv\}$ and another for $A_v=\{wv\}, R_v=\{vv\}$, where $w$ is a parent of $v$ in $T$. 

We proceed by computing our records in a leaf-to-root fashion along $T$.

Let $v$ be a leaf. Start by innitiating $\mathcal R^*(v):=\emptyset$, then for each $P\in \Gamma_f(v)$ add to $\mathcal R^*(v)$ the triple $(\{vv\},\{uv|u\in P\}, f_v(P))$. Note that $\mathcal R^*(v)$ is by definition precisely the set of all records for $v$, so we can correctly set  $\mathcal R(v)=\{(R_v,A_v, s_v)\in \mathcal R^*(v)| s_v$ is maximal for fixed $R_v,A_v\}$. 

Assume that $v$ has $m$ children $\{v_i:i\in[m]\}$ in $T$, where $v_i$, $i\in[t]$, are open and $v_i$, $i\in[m]\setminus [t]$, are closed. The following claim shows how (and under which conditions) the records of children of $v$ can be composed into a record of $v$.
\begin{claim} 
 \label{lem:combRecordsPL}
Let $P\in \Gamma_f(v)$, $D_0$ is a polytree on $V_0=v\cup P$ with arc set $A_0=\{uv|u\in P\}$, $(R_i,A_i,s_i)$ are records for $v_i$ witnessed by $D_i$, $i\in[m]$. Let $A_{loc}^{in}$ be the set of arcs in $\bigcup_{i\in[t]_0}A_i$ which have both endpoints in $V_v$, $R=\trcl(\widetilde A_{loc}^{in} \cup\bigcup_{i\in[t]_0} R_i)$. Then $D=\cup_{i=0}^{m}D_i$ is a polytree if and only if the following two conditions hold:
\begin{enumerate}
         \item $A_i=\emptyset$  for each closed child $v_i \in P$.
    \item $\sum_{i=0}^t N_i - |A_{loc}^{in}| - \sum_{y \in Y} (n_y-1) = N$, where 
    \begin{itemize}
        \item $N$ is the number of equivalence classes in $\trcl(\bigcup_{i\in[t]_0} (\widetilde A_i \cup R_i))$
        \item $N_i$ is the number of equivalence classes in $R_i$, $i \in [t]$
        \item $Y$ is the set of endpoints of arcs in $\bigcup_{i\in[t]_0} A_i$ which don't belong to any $V_i$, $i \in [m]$. 
        \item For every $y\in Y$, $n_y$ is the number of arcs in  $A_0\cup...\cup A_t$ having endpoint $y$.
         
    \end{itemize}
\end{enumerate}
In this case $D$ witnesses the record $(R_v,A_v,s_v)$, where: \\$R_v=R|_{\delta_{in}(v) \times \delta_{in}(v)}$,   $A_v=(\bigcup_{i \in [t]_0}A_i)|_{\delta_{out}(v)\times\delta_{in}(v)}$,   $s_v=\sum_{i=0}^m s_i + f_v(P)$.

If $(R_v,A_v,s_v) \in \mathcal R(v)$, then $(R_i, A_i, s_i)\in \mathcal R(v_i)$, $i \in [m]$. Moreover, for any closed child $v_i\not \in P$, there is no $(R_i',A_i', s_i')\in \mathcal R(v_i)$ with $s_i'>s_i$.
\end{claim}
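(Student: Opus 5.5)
The plan is to treat $D=\bigcup_{i=0}^m D_i$ as a forest-counting problem. A digraph is a polytree exactly when its skeleton is acyclic, i.e.\ when its number of connected components equals $|V(D)|-|A(D)|$. So the two conditions of the claim should be read as (1) a local check that the closed children do not create a 2-cycle or parallel edge with $A_0$, and (2) a global Euler-type count ensuring no undirected cycle passes through the glued boundary vertices. I first handle the closed children separately, exactly as in the proof of Claim~\ref{cl:combRecords}. By Observation~\ref{obs:basiclfesproperties}.1, each closed child $v_i$ meets the rest of $D$ only in $v$, via at most one arc between $v$ and $v_i$. The union of $D_i$ with the rest is therefore a forest iff we do not add two arcs between $v$ and $v_i$. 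This happens iff $A_i=\emptyset$ whenever $v_i\in P$, which is condition 1. After this step it suffices to decide when $D^*=\bigcup_{i=0}^t D_i$ is a polytree.

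For the open part I count components. Each $D_i$ with $i\in[t]$ is a polytree on $\bar V_i$ whose arcs all end in $V_i$. Hence $D_i$ consists of $D_i[V_i]$, a forest, together with the arcs $A_i$ from $\delta_{out}(v_i)$ into $V_i$. Let $H$ be the skeleton of $D^*$ restricted to $\bigcup_i V_i\cup\{v\}\cup Y$. I will contract each tree component of $D_i[V_i]$ to a point; it is represented in the record by an equivalence class of $R_i$, since components not meeting $\delta_{in}(v_i)$ are isolated and irrelevant to gluing. This yields a multigraph $H'$ with these properties:
\begin{itemize}
\item its vertex set consists of $\sum_{i=0}^t N_i$ class-vertices (taking $N_0=1$ for $v$) together with the external vertices $Y$;
\item its edges are the arcs of $\bigcup_{i\in[t]_0}A_i$.
\end{itemize}
Then $D^*$ is a polytree iff $H'$ is a forest, because each contracted piece is already a tree. $H'$ is a forest iff $\#\text{vertices}-\#\text{edges}=\#\text{components}$. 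An arc with both endpoints in $V_v$ joins two class-vertices. An arc from $y\in Y$ attaches $y$, and a vertex $y$ incident to $n_y$ arcs contributes one vertex and $n_y$ edges, so its net effect is $-(n_y-1)$ relative to treating its first arc as a free attachment. Writing this out, with $N$ the number of classes of $\trcl(\bigcup(\widetilde A_i\cup R_i))$ being exactly the component count of $H'$ restricted to class-vertices with the $Y$-vertices absorbed, yields condition 2. Here $|A_{loc}^{in}|$ counts edges among class-vertices and the sum over $Y$ counts the surplus edges through outside vertices. I expect this bookkeeping to be the main obstacle. One must check that arcs appearing in several $A_i$ are counted once, which is why the union is used. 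One must also check that vertices of $Y$ are shared only via these arcs, and that $v$ itself lies in $V_0$ and may be an endpoint of some $A_i$ arcs; these arcs fall into $A_{loc}^{in}$. I would first verify the formula on small configurations (one $y$ with two arcs from different children) to pin down off-by-one issues.

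The record description follows directly. Connectivity within $D[V_v]$ between inner boundary vertices is generated by the component relations $R_i$ together with the arcs among $V_v$, which gives $R_v=R|_{\delta_{in}(v)\times\delta_{in}(v)}$. $A_v$ is the set of arcs entering $V_v$ from outside, restricted to the boundary. Since every arc of $D_i$ ends in $V_i$, parent sets are preserved, and so $s_v=\sum s_i+f_v(P)$ exactly as in Claim~\ref{cl:combRecords}. Using Observation~\ref{obs:basiclfesproperties}.3, $\delta(v)$ is covered by the children's boundaries and $V_0$, so nothing is lost in restricting. For the final statement I use an exchange argument as in part (b) of Claim~\ref{cl:combRecords}. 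If some $(R_i,A_i,s_i)$ were not valid, replacing $D_i$ by a witness with the same $R_i,A_i$ and larger score keeps conditions 1 and 2 unchanged, since they depend only on $(R_i,A_i)$. This gives a larger score for $(R_v,A_v)$, a contradiction. For a closed child $v_i\notin P$, toggling the single arc $vv_i$ preserves the polytree property and leaves $R_v,A_v$ unchanged, so $s_i$ must be the maximum over both records of $v_i$.
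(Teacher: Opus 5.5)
Your route is the paper's. You dispose of the closed children with condition 1, then read condition 2 as the forest identity $\#\text{vertices}-\#\text{edges}=\#\text{components}$ after contracting the components of each $D_i[V_i]$; the paper does exactly this count, incrementally in three steps. Your record and validity arguments also match the paper's.

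The bookkeeping you deferred is short, because every arc of $\bigcup_{i\in[t]_0}A_i$ has its head in $V_v$. Read $Y$ as the set of endpoints outside $V_v$ (taken literally, the definition would also put $v$ in $Y$). Then $|V(H')|=\sum_i N_i+|Y|$ and $|E(H')|=|A_{loc}^{in}|+\sum_{y\in Y}n_y$.

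However, the count breaks at one point, and the statement and the paper's proof share this error. Your reduction to $D^*=\bigcup_{i=0}^tD_i$ does not actually remove the closed children: $D_0$ still contains the arc $v_jv$ for every closed child $v_j\in P$. That arc lies in $A_{loc}^{in}$ and in $\widetilde A_0$, yet $v_j$ is neither a class-vertex nor in $Y$. So your sentence that an arc with both endpoints in $V_v$ joins two class-vertices is false, and the count is off by one for each such child.

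Concretely, let $G$ be the single edge $vw$, with $T$ rooted at $v$, $P=\{w\}$ and $A_1=\emptyset$. Then $D$ is the polytree $w\to v$. With $N_0=1$ as you take it, $\sum_iN_i-|A_{loc}^{in}|-\sum_y(n_y-1)=1-1=0$, while $N=1$. So condition 2 as written is not necessary, and the small test you planned would have caught this.

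The repair fits your framework. Once condition 1 holds, each closed child hangs off $v$ by a single pendant edge. So remove the arcs $v_jv$ with $v_j$ closed from $A_0$ (hence from $A_{loc}^{in}$ and from the transitive closure) before counting. Equivalently, give each closed child in $P$ its own vertex in $H'$. With that convention your contraction argument proves the corrected equivalence, and the rest of your proposal goes through unchanged.
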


We will prove the claim at the end, let us show how it can be exploited to compute valid records of $v$.
We start from initial setting $\mathcal R^*(v):=\emptyset$, then branch over all parent sets $P\in \Gamma_f(v)$ and triples $(R_i,A_i, s_i)\in \mathcal R(v_i)$ for open children $v_i$. For each closed child $v_i\not \in P$ take $(R_i,A_i, s_i)\in \mathcal R(v_i)$ with maximal $s_i$, for each closed child $v_i\in P$ take $(R_i, A_i, s_i)\in \mathcal R(v_i)$ with $A_i=\emptyset$. Now the first condition of Claim\ref{lem:combRecordsPL} holds, if the second one holds as well, we add to $\mathcal R^*(v)$ the triple $(R_v, A_v,s_v)$. 

According to Claim \ref{lem:combRecordsPL}, $\mathcal R^*(v)$ computed in such a way consists only of records for $v$ and, in particular, contains all the valid records. Therefore we can correctly set $\mathcal R(v)=\{(R_v,A_v, s_v)\in \mathcal R^*(v)| s_v$ is maximal for fixed $R_v,A_v\}$.

To construct $\mathcal R^* (v)$ for node $v$ with children $v_i$, $i\in[m]$, we branch over at most $n$ possible parent sets of $v$ and at most  $2^{(2k+2)^2}$ valid records for every open child of $v$. Number of open children is bounded by $2k$, so we have at most $\bigoh((2^{(2k+2)^2})^{2k}\cdot n)  \le 2^{\bigoh(k^3)}\cdot n$ branches. In a fixed branch we compute scores for closed children in $\bigoh(n)$, application of Claim \ref{lem:combRecordsPL} requires time polynomial in $k$. So $\mathcal R^* (v)$ is computed in time  $2^{\bigoh(k^3)}\cdot n^2$ that majorizes running time for leaves. 
As the number of vertices in $T$ is at most $n$, total running time of the algorithm is $2^{\bigoh(k^3)}\cdot n^3$ assuming that $T$ is given as a part of the input.

\emph{proof of Claim \ref{lem:combRecordsPL}}
($\Leftarrow$).
We start from checking whether $D=\cup_{i=0}^m D_i$ is a polytree. As the first condition implies that a polytree of every closed child $v_i$ is connected to the rest of $D$ by at most one arc $v_iv$ or $vv_i$, it is sufficient to check whether $D^t=\cup_{i=0}^t D_i$ is polytree. Number of connected components of $D^t$ is $N'+N$, where $N'$ is the total number of connected components of $D_i$ that don't intersect $\delta(v_i)$, $i\in[t]$. Note that $D^t$ can be constructed as follows:
\begin{enumerate}
    \item Take a disjoint union of polytrees $D_i'=D_i[V_i]$, $i \in [t]_0$, then the resulting polytree has $N'+\sum_{i=0}^t N_i$ connected components.
    \item Add arcs between $D_i'$ and $D_j'$ that occur in $D$ for every $i,j \in [t]_0$, i.e. the arcs specified by $A_{loc}^{in}$. Resulting digraph is a polytree if and only if every added arc decreases the number of connected components by 1, i.e. the number of connected components after this step is $N'+\sum_{i=0}^t N_i-|A_{loc}^{in}|$.   
    \item Add all remaining vertices $y$ of $D$ together with their adjacent arcs in $D$. Note that such $y$ precisely form the set $Y$, so $D^t$ is a polytree if and only if we obtained a polytree after the previous step and every $y\in Y$ decreased it's number of connected components by $(n_y-1)$, i.e. the number $N'+N$ of connected components in $D^t$ is equal to $N'+\sum_{i=0}^t N_i-|A_{loc}^{in}|-\sum_{y \in Y} (n_y-1)$. But this is precisely the condition 2 of the claim.
\end{enumerate}
Now, assuming that $D$ is a polytree, we will show that it witnesses $(R_v, A_v,s_v)$.
Parent sets of vertices from each $V_i$ in $D$ are the same as in $D_i$, parent set of $v$ in $D$ is $P$. So $s_v=\sum_{i=0}^m s_i + f_v(P)$ is indeed the sum of scores over $V_v$ in $D$. 

There are two kinds of arcs in $D$ starting outside of $V_v$: incoming arcs to $v$ and incoming arcs to the subtrees of open children. Thus $A(D)|_{\delta_{out}(v)\times \delta_{in}(v)}=(\bigcup_{i \in [t]_0}A_i)|_{\delta_{out}(v)\times\delta_{in}(v)}=A_v$. 

Take any $u,w \in \delta_{in}(v), u\ne w$, note that $u$ and $w$ can not belong to subtrees of closed children. So $u$ and $w$ are in the same connected component of $D[V_v]$ if and only if they are connected by some undirected path $\pi$ in the skeleton of $D$ using only vertices from $D^t\cap V_v$. In this case $R_i$ captures the segmens of $\pi$ which are completely contained in $D_i[V_i], i\in [t]$. Rest of edges in $\pi$ either connect $v$ to some $V_i$, $i\in[t]$, or have enpoints in different $V_i$ and $V_j$ for some $i,j \in [t]$. Edges of this kind precisely form the set $\widetilde A_{loc}^{in}$, so $uw$ belongs to $R=\trcl(\bigcup_{i\in[t]}R_i \cup \widetilde A_{loc}^{in})$. Therefore  $R_v=R|_{\delta_{in}(v) \times \delta_{in}(v)}$ indeed represents connected components of $\delta_{in}(v)$ in $D[V_v]$.

($\Rightarrow$)
Condition 1 obviously holds, otherwise $D$ would contain a pair of arcs with the same endpoints and different directions.
In ($\Leftarrow$) we actually showed the necessity of condition 2 when 1 holds. 

For the last statement, assume that $(R_v,A_v,s_v)\in \mathcal R(v)$ but $(R_i,A_i,s_i)\not \in \mathcal R(v_i)$ for some $i$. Then there is $(R_i,A_i,s_i+\Delta) \in \mathcal R(v_i)$ for some $\delta>0$. Let $D_i'$ be a witness of $(R_i,A_i,s_i+\Delta)$, then $D'=\bigcup_{j\in[m]\setminus \{i\}}D_j \cup D_i'$ is a polytree witnessing $(R_v,A_v,s_v+\Delta)$. But this contradicts to validity of $(R_v,A_v,s_v)$. By the same arguments records for closed children $v_i\not \in P$ are the ones with maximal $s_i$ among two $(R_i,A_i,s_i)\in \mathcal R(v_i)$.  \hfill $\blacksquare$
 \end{proof}
\fi

As for treecut width, we remark that a recent reduction for \PLneq~\cite[Theorem 4.2]{GKM21} immediately implies that the problem is \W$[1]$-hard when parameterized by the treecut width\iflong (the superstructure graphs obtained in that reduction have a vertex cover of size bounded in the parameter, and the vertices outside of the vertex cover have degree at most $2$)\fi.

\smallskip
\noindent \textbf{Theorem~\ref{thm:additive}: Additive Representation.} \quad
We first show that, unlike \BNSLadd\ and \BNSLaddq, \PLaddq\ can be shown to be polynomial-time solvable for every constant value of $q$ by employing matroid-based techniques.

 \begin{theorem}
\label{thm:PLfixedq}
 Let $c$ be an arbitrary but fixed constant. Then every instance of \PLaddq\ such that $q=c$ can be solved in polynomial time.
\end{theorem}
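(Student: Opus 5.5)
The plan is to recast an instance $(V,\FFF,\ell,q)$ of \PLaddq\ with $q=c$ as a maximum-weight common independent set problem for two matroids over the same ground set. The ground set is the set $A^+$ of all arcs $xy\in A_V$ with $f_y(x)>0$. Arcs of score zero can be dropped from any solution without lowering its score or breaking the polytree and in-degree constraints, so we lose nothing by restricting to $A^+$. In the additive representation, the score of a digraph $D=(V,A)$ with $A\subseteq A^+$ is $\sum_{xy\in A} f_y(x)$, plus the constant $\sum_{v\in V} f_v(\emptyset)$. So the score is a modular weight function $w(xy)=f_y(x)$ on arcs. The question then becomes whether some $A\subseteq A^+$ with $w(A)\geq \ell-\sum_v f_v(\emptyset)$ satisfies two conditions: (i) the skeleton of $(V,A)$ is a forest, and (ii) every vertex has in-degree at most $c$ in $(V,A)$.

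\textbf{Two matroids.} For condition (i), I will use the graphic matroid $M_1$ of the undirected multigraph obtained by mapping each arc $xy\in A^+$ to an edge $\{x,y\}$. If both $xy$ and $yx$ lie in $A^+$, they become two parallel edges and hence form a circuit. A set $A$ is independent in $M_1$ exactly when its skeleton, counted with multiplicity, is a forest. This correctly excludes using both $xy$ and $yx$, which a polytree cannot do anyway. For condition (ii), I will use the partition matroid $M_2$ whose blocks are the sets $\{xy\in A^+\}$ for each fixed head $y$, each with capacity $c$. Its independent sets are exactly the arc sets with every in-degree at most $c$. A set independent in both matroids is automatically acyclic, because its skeleton is a forest. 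Therefore the common independent sets are exactly the arc sets of polytrees with in-degree at most $c$.

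\textbf{Solving.} Next I will apply a polynomial-time algorithm for weighted matroid intersection, such as Edmonds' algorithm or Frank's weight-splitting algorithm. Independence oracles for both matroids are trivial: a union-find forest check for $M_1$, and in-degree counting for $M_2$. This yields a maximum-weight common independent set $A^*$ in time polynomial in $|V|$ and $|A^+|\leq |V|^2$. We answer \YES\ if and only if $w(A^*)+\sum_v f_v(\emptyset)\geq \ell$, and output $(V,A^*)$ as the witness.

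\textbf{Main obstacle.} The step that needs care is the exact correspondence in the first matroid. I have to make sure antiparallel arcs are handled, by treating them as parallel edges, and that no hidden dependence between scores breaks modularity. The additive representation guarantees modularity. Nothing in this argument actually uses the fact that $c$ is constant; the runtime is polynomial in $|V|$ uniformly in $c$. So the statement for fixed $c$ follows in particular. If that uniformity seemed suspicious, the fallback for fixed $c$ would be to note that the partition-matroid capacities are at most $c$, so each augmentation step in the intersection algorithm involves only polynomially many exchange candidates.
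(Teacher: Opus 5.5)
Your proposal is correct and is essentially the paper's argument: the paper likewise intersects the graphic matroid on the arcs (acyclic with forest skeleton, which matches your multigraph treatment of antiparallel arcs) with the in-degree-$c$ partition matroid and then invokes polynomial-time weighted matroid intersection. Asking for a maximum-weight common \emph{independent set} is the more accurate formulation than the paper's ``common basis'', which need not exist when the two matroids have different ranks, and you are right that the argument does not actually need $c$ to be constant.
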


\begin{proof}
The case of $q=1$ corresponds to computing an optimal-weight \emph{arborescence}, i.e., a spanning polytree of maximal in-degree 1, which was formulated in terms of matroid intersection and solved in polynomial time by Edmonds~\cite{EDMONDS1979}, Frank~\cite{FRANK1981} and Lawler~\cite{Lawler1976}

In fact, the same approach can be adopted for any constant bound $c$ on the in-degree. Let $G=(V,A)$ be a digraph, and $\mathcal F$ be some non-empty family of subsets of its arcs. We say that $M=(A,\mathcal F)$ is a \emph{matroid} with the \emph{ground set} $A$ and the \emph {set of independent sets} $\mathcal F$ if the following conditions hold:
\begin{itemize}
 \item for each $A'\in \mathcal F$ and $A'' \subseteq A'$ it holds that $A''\in \mathcal F$, i.e., subset of an independent set is also an independent set;
 \item if $A',B' \in \mathcal F$ and $A'$ has more elements than $B'$, then there is $a\in A' \setminus B'$ such that 
 $B' \cup \{a\} \in \mathcal F$ (this condition is often called \emph{independent set exchange property}).
\end{itemize}

A \emph {basis} of a matroid is a maximal independent set of the matroid -- that is, an independent set that is not contained in any other independent set.
A matroid $M=(A,\mathcal F)$ is called \emph{weighted} if there is a function $\omega: A \to \mathbb{R}_{\geq 0}$, called a \emph {weight function}, associating a non-negative weight to each element of the ground set. The weight function can then be naturally extended to $\mathcal F$ by setting $\omega (A') = \sum_{a\in A'} \omega (a)$. In our case, the weight function is a part of an input of \PLaddq\ .

A classical example of a matroid is so-called \emph{graphic matroid} $M_G$, where the independent set $\mathcal F$ consists of all arc sets $A'$ such that the restriction of $G$ to $A'$ is acyclic and its skeleton is a forest. Another matroid which we will use is a special case of so-called \emph{partition matroids}, i.e., matroids that impose restrictions on the in-degreees of vertices. Specifically, we consider $M_c=(A, \mathcal F_c)$, where $\mathcal F_c$ consists of all arc sets $A'$ such that the restriction of $G$ to $A'$ has maximum in-degree at most $c$. 

The crucial observation is that \PLaddq\ can be formulated as a problem of finding a common basis of these two matroids, $M_c$ and $M_G$, with the maximal weight. This problem, called \textsc{Weighted Matroid Intersection}, can be solved in polynomial time~\cite{BrezovecCG86}.
\end{proof}

\smallskip
\noindent \emph{Remark.}\quad
\emph{
The initial version of this article that appeared in the proceedings of NeurIPS 2021 erroneously claimed that \PLaddq\ is \NP-hard even when restricted to the case of $q=1$. The authors apologize for the mistake and provide the new version of Theorem~\ref{thm:PLfixedq} as a corrigendum.
}

Moreover, the dynamic programming algorithm for \BNSLaddq\ parameterized by treewidth and $q$ can be adapted to also solve \PLaddq. \iflong For completeness, we provide a full proof below; however one should keep in mind that the ideas are very similar to the proof of Theorem~\ref{thm:additive}.\fi
\ifshort
The algorithm runs in time at most $2^{\bigoh(k^2)}\cdot q^{\bigoh(k)} \cdot |\III|$.
 \fi

\iflong
\begin{theorem}
\fi
\ifshort
\begin{theorem}[]
\fi
\label{thm:additivePL}
\PLaddq\ is \FPT when parameterized by $q$ plus the treewidth of the superstructure.
\end{theorem}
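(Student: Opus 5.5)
We adapt the dynamic program from the proof of Theorem~\ref{thm:additive}. The only thing that changes is the acyclicity bookkeeping. A polytree is a digraph whose skeleton is a forest, so instead of directed reachability we must track undirected connectivity. As there, we compute a nice tree-decomposition $(\mathcal{T},\chi)$ of $G_\III$ of width $k$ using Bodlaender's algorithm. For a node $t$ we consider partial solutions $D_t$ on $\chi_t^\downarrow$: digraphs whose skeleton is a forest and whose in-degrees are at most $q$.

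The snapshot of $D_t$ is a triple $(\loc,\con,\inn)$. Here $\loc=A(D_t)\cap A_{\chi(t)}$ and $\inn$ records the in-degrees of bag vertices, exactly as before. The component $\con$ is now an equivalence relation on $\chi(t)$: two bag vertices are related iff they lie in the same connected component of the skeleton of $D_t$. The record $\RRR_t$ stores, for each snapshot, the maximum score of such a $D_t$, or $\bot$. There are at most $2^{k^2}\cdot B_{k+1}\cdot (q+1)^{k+1}$ snapshots, where $B_{k+1}$ is a Bell number. The answer is \YES\ iff $\RRR_r(\emptyset,\emptyset,\emptyset)\geq \ell$.

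Leaf and forget nodes are handled as in Theorem~\ref{thm:additive}: at a forget node we restrict $\loc$, $\con$ and $\inn$ to $\chi(t)$. Forgetting a vertex never creates an undirected cycle, and every witness of the child node remains a witness of the parent node.

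At an introduce node for $v$, we branch over the arc sets $Q$ between $v$ and $\chi(t')$, using only arcs of the superstructure. The neighbours of $v$ in $Q$ must lie in pairwise distinct classes of $\con'$. Any violation would close an undirected cycle. Two arcs $vu$ and $uv$ are also forbidden, since that is a 2-cycle in the skeleton. If the test passes, the new $\con$ merges these classes together with $v$. We update $\inn$, discard the branch if some in-degree exceeds $q$, and add $\sum_{ab\in Q}f_b(a)$ to the score. Correctness uses the standard fact that adding a vertex with edges to distinct components keeps a forest a forest.

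The join node is the main obstacle. The issue is that two forests $D_1$ and $D_2$ sharing the bag $\chi(t)$ and the arcs $\loc$ may have a union with a cycle, even though each of them is acyclic. To handle this, we branch over $\loc$, $\con_1$, $\con_2$, $\inn_1$ and $\inn_2$ as before, with $\inn(b)=\inn_1(b)+\inn_2(b)-|\{a\mid ab\in\loc\}|$.

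We then decide whether $D_1\cup D_2$ is a forest using only these snapshots. Let $c_i$ be the number of classes of $\con_i$ and $c_\loc$ the number of components of the skeleton of $(\chi(t),\loc)$. Let $c$ be the number of classes of $\trcl(\con_1\cup\con_2)$. Counting vertices minus edges per component, the union is a forest iff
\[
c_1+c_2-c_\loc=c .
\]
Equivalently, we form the auxiliary multigraph on $\chi(t)$ obtained by taking spanning forests of the classes of $\con_1$ and of $\con_2$ and identifying the shared $\loc$-edges; the union is a forest iff this multigraph is acyclic.

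To justify this criterion we will use $\chi_{t_1}^\downarrow\cap\chi_{t_2}^\downarrow=\chi(t)$. A cycle in the union decomposes into segments inside $D_1$ or $D_2$ whose endpoints are in $\chi(t)$. Segments consisting of a shared $\loc$ arc must not be counted twice, which is why $c_\loc$ appears. We check this correspondence carefully in both directions, in the spirit of Lemma~\ref{lem:con2dags} and the component-counting argument of Claim~\ref{lem:combRecordsPL}. After that, the score is combined via $\texttt{doublecount}$ exactly as in Theorem~\ref{thm:additive}.

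Each node is processed in time $2^{\bigoh(k^2)}\cdot k^{\bigoh(k)}\cdot q^{\bigoh(k)}$. This gives an overall running time of $2^{\bigoh(k^2)}\cdot q^{\bigoh(k)}\cdot|\III|$, which is fixed-parameter tractable in $q+k$.
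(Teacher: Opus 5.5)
Your proposal is correct and follows the paper's overall route. It uses the same DP over a nice tree-decomposition, with snapshots $(\loc,\con,\inn)$ in which $\con$ now records undirected connectivity among bag vertices. Your introduce-node test (neighbours of $v$ in pairwise distinct classes of $\con'$, no antiparallel pair) is exactly the paper's requirement that the number of classes becomes $N'+1-|Q|$.

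The one genuine difference is the join-node test. The paper keeps a branch only if two conditions hold: $\con_1\cap\con_2=\trcl(\widetilde\loc)$, and the drop $N_1-N$ in the number of classes equals $\frac12|\con_2\setminus\trcl(\widetilde\loc)|$. You instead use the cyclomatic number $\mu=|E|-|V|+\kappa$ of the skeleton. Three facts give the identity:
\begin{itemize}
\item $\mu(D_1)=\mu(D_2)=0$;
\item inactive components of $D_1$ and $D_2$ survive unchanged in the union;
\item $|\chi(t)|-|\loc|=c_\loc$, because $\loc$ spans a forest.
\end{itemize}
From these, $\mu(D_1\cup D_2)=c-c_1-c_2+c_\loc$, so the union is a forest iff $c=c_1+c_2-c_\loc$.

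Your criterion is the sound one. The paper's right-hand side counts connected pairs of bag vertices rather than merged classes. For a counterexample, take $\loc=\emptyset$, let $D_1$ leave three bag vertices isolated, and let $D_2$ join them through a single forgotten vertex. The drop $N_1-N$ is $2$, while half the number of connected ordered pairs is $3$. So the paper's test discards this branch even though the union is a tree. Your single equation is both necessary and sufficient, and it makes the extra condition on $\con_1\cap\con_2$ redundant.
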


\iflong
\begin{proof}
We begin by proving the latter statement, and will then explain how that result can be straightforwardly adapted to obtain the former. As our initial step, we apply Bodlaender's algorithm~\cite{Bodlaender96,Kloks94} to compute a nice tree-decomposition $(\mathcal{T},\chi)$ of $G_\III$ of width $k=\tw(G_\III)$.
 We keep the notations $T$, $r$, and $\chi_t^\downarrow$ $G^\downarrow_t$ from the proof of Theorem \ref{thm:additive}. For any arc set $A$ we denote $\widetilde A=\{uw, wu|uw\in A\}$. 

We will design a leaf-to-root dynamic programming algorithm which will compute and store a set of records at each node of $T$, whereas once we ascertain the records for $r$ we will have the information required to output a correct answer. The set of snapshots and structure of records will be the same as in the proof of Theorem \ref{thm:additive}. However, semantics wil slightly differ: in contrast to information about directed paths via forgotten nodes, $\con$ will now specify whether vertices of the bag belong to the same connected component of the partial polytree. Formally, let $\PT_t$ be the set of all polytrees over the vertex set $\chi_t^\downarrow$ with maximal in-degree at most $q$, and let $D_t=(\chi_t^\downarrow,A)$ be a polytree in $\PT_t$. We say that the \emph{snapshot of} $D_t$ in $t$ is the tuple $(\alpha,\beta,p)$ where 
$\alpha=A_{\chi(t)} \cap A $, $\beta= A_{\chi(t)} \cap \{uw|u \textup{ and }w \textup{ belong to the same connected component of }D_t\}$ and $p$ specifies numbers of parents of vertices from $\chi(t)$ in $D$, i.e. $p(v)=|\{w\in  \chi_t^\downarrow| wv \in A\}|$, $v\in \chi(t)$. We will call a connected component of $D_t$ \emph{active} if it intersects $\chi(t)$. Note that the number of equivalence classes of $\con$ is equal to the number of active connected components of $D_t$.
We are now ready to define the record $\RRR_t$. For each snapshot $(\loc,\con,\inn)\in S(t)$:
\begin{itemize}
\item  $\RRR_t(\loc,\con,\inn)=\bot$ if and only if there exists no polytree in $\PT_t$ whose snapshot is $(\loc, \con,\inn)$, and
\item $\RRR_t(\loc,\con,\inn)=\tau$ if $\exists D_t\in \PT_t$ such that 
\begin{itemize}
\item the snapshot of $D_t$ is $(\loc, \con, \inn)$, 
\item $\score(D_t)=\tau$, and 
\item $\forall D'_t\in \PT_t$ such that the snapshot of $D'_t$ is $(\loc, \con, \inn)$: $\score(D_t)\geq \score(D'_t)$.
\end{itemize}
\end{itemize}
Recall that for the root $r\in T$, we assume $\chi(r)=\emptyset$. Hence $\RRR_r$ is a mapping from the one-element set $\{(\emptyset,\emptyset,\emptyset)\}$ to an integer $\tau$ such that $\tau$ is the maximum score that can be achieved by any polytree $D=(V,A)$ with all in-degrees of vertices upper bounded by $q$. In other words, $\III$ is a YES-instance if and only if $\RRR_r(\emptyset,\emptyset,\emptyset)\geq \ell$. To prove the theorem, it now suffices to show that the records can be computed in a leaf-to-root fashion by proceeding along the nodes of $T$. We distinguish four cases:

\textbf{$t$ is a leaf node.} Let $\chi(t)=\{v\}$. By definition, $S(t)=\{(\emptyset,\emptyset,\emptyset)\}$ and $\RRR_t(\emptyset,\emptyset,\emptyset)=f_v(\emptyset)$.

\noindent \textbf{$t$ is a forget node.} 
Let $t'$ be the child of $t$ in $\mathcal{T}$ and let $\chi(t)=\chi(t')\setminus \{v\}$. We initiate by setting $\RRR^{0}_t(\loc,\con,\inn)=\bot$ for each $(\loc,\con,\inn)\in S(t)$. 

For each $(\loc',\con',\inn')\in S(t')$, let $\loc_v$, $\con_v$ be the restrictions of $\loc'$, $\con'$ to tuples containing $v$. We now define $\loc=\loc'\setminus \loc_v$, $\con=\con' \setminus \con_v$, $\inn=\inn'|_{\chi(t)}$ and set $\RRR^0_t(\loc,\con,\inn):=\max(\RRR^0_t(\loc,\con,\inn),\RRR_{t'}(\loc',\con',\inn'))$, where $\bot$ is assumed to be a minimal element. Finally we set $\RRR_t=\RRR_t^0$, correctness can be argued analogously to the case of \BNSLaddq\ .

\noindent \textbf{$t$ is an introduce node.} 
Let $t'$ be the child of $t$ in $\mathcal{T}$ and let $\chi(t)=\chi(t')\cup \{v\}$. We initiate by setting $\RRR^{0}_t(\loc,\con,\inn)=\bot$ for each $(\loc,\con,\inn)\in S(t)$. 

For each $(\loc',\con',\inn')\in S(t')$ and each $Q\subseteq \{ab\in A_{\chi(t)}~|~ \{a,b\}\cap \{v\}\neq \emptyset\}$, we define:
\begin{itemize}
\item $\loc:=\loc'\cup Q$ 
\item $\con:=\trcl(con' \cup \widetilde Q)$
\item $\inn(x):=\inn'(x)+|\{y\in \chi(t)| yx \in Q\}|$  for every $x\in \chi(t) \setminus \{v\}$\\ $\inn(v):=|\{y\in \chi(t)| yv \in Q\}|$
\end{itemize}

 Let $N$ and $N'$ be the numbers of equivalence classes in $con$ and $con'$ correspondingly. If $N\neq N'+1-|Q|$ or $\inn(x)>q$ for some $x\in \chi(t)$, discard this branch. Otherwise, let $\RRR^{0}_t(\loc,\con,\inn):=\max(\RRR^{0}_t(\loc,\con,\inn),\texttt{new})$ where $\texttt{new}=\RRR_{t'}(\loc',\con',\inn')+\sum_{ab\in Q}f_b(a)$. As before, $\bot$ is assumed to be a minimal element here.

Consider our final computed value of $\RRR^0_t(\loc,\con,\inn)$ for some $(\loc,\con,\inn)\in S(t)$.

For correctness, assume that $\RRR^0_t(\loc,\con,\inn)=\tau$ for some $\tau\neq \bot$ and is obtained from $(\loc',\con',\inn'), Q$ defined as above. Then $\RRR_{t'}(\loc',\con',\inn')=\tau-\sum_{ab\in Q} f_b(a)$. Construct a directed graph $D$ from the witness $D'$ of $\RRR_{t'}(\loc',\con',\inn')$ by adding $v$ and the arcs specified in $Q$. The equality $N=N'+1-|Q|$ garantees that every such arc decreases the number of active connected components by one, so $D$ is a polytree. Moreover, $\inn(x)\leq q$ for every $x\in \chi(t)$ and the rest of vertices have in $D$ the same parents as in $D'$, so $D\in \PT_t$. In particular, $(\loc,\con,\inn)$ is a snapshot of $D$ in $t$ and $D$ witnesses $\RRR_t(\loc,\con,\inn)\ge \RRR_{t'}(\loc',\con',\inn') + \sum_{ab\in Q} f_b(a) = \tau$.

On the other hand, if $\RRR_t(\loc,\con,\inn)=\tau$ for some $\tau\neq \bot$, then there must exist a polytree $D=(\chi_t^\downarrow,A)$ in $\PT_t$ that achieves a score of $\tau$. Let $Q$ be the restriction of $A$ to arcs containing $v$, and let $D'=(\chi_t^\downarrow \setminus {v},A\setminus Q)$, clearly $D'\in \PT_{t'}$. Let $(\loc',\con',\inn')$ be the snapshot of $D'$ at $t'$. Observe that $\loc=\loc' \cup Q$, $\con=\trcl(\con' \cup \widetilde Q)$, $\inn$ differs from $\inn'$ by the numbers of incoming arcs in $Q$ and the score of $D'$ is precisely equal to the score $\tau$ of $D$ minus $\sum_{(a,b)\in Q} f_b(a)$. Therefore $\RRR_{t'}(\loc',\con',\inn')\ge \tau - \sum_{(a,b)\in Q} f_b(a)$ and in the algorithm $\RRR^0_t(\loc,\con,\inn)\geq \RRR_{t'}(\loc',\con',\inn')+\sum_{(a,b)\in Q} f_b(a)\ge \tau$. Equality then follows from the previous direction of the correctness argument.

Hence, at the end of our procedure we can correctly set $\RRR_t=\RRR_t^0$.

\noindent \textbf{$t$ is a join node.} 
Let $t_1,t_2$ be the two children of $t$ in $\mathcal{T}$, recall that $\chi(t_1)=\chi(t_2)=\chi(t)$. 
We initiate by setting $\RRR^{0}_t(\loc,\con,\inn):=\bot$ for each $(\loc,\con,\inn)\in S(t)$. 

Let us branch over each $\loc,\con_1,\con_2\subseteq A_{\chi(t)}$ and $\inn_1, \inn_2:\chi(t)\to [q]_0$. For every $b \in \chi(t)$ set $\inn(b)=\inn_1(b)+\inn_2(b)-|\{a|ab\in\loc\}|$. Let $N_1$ and $N$ be the numbers of equivalence classes in $\con_1$ and $\trcl(\con_1\cup \con_2)$ correspondingly.  If:
\begin{itemize}
\item $\con_1\cap \con_2 \neq \trcl(\widetilde \loc)$,  and/or
\item $N-N_1\neq \frac{1}{2}|\con_2 \setminus \trcl(\widetilde \loc)|$,  and/or
\item $\RRR_{t_1}(\loc,\con_1,\inn_1)=\bot$, and/or
\item $\RRR_{t_2}(\loc,\con_2,\inn_2)=\bot$, and/or
\item $\inn(b)>q$ for some $b\in \chi(t)$
\end{itemize}
then discard this branch. Otherwise, set $\con=\trcl(\con_1\cup \con_2)$, $\texttt{doublecount}=\sum_{ab\in \loc} f_b(a)$ and $\texttt{new}=\RRR_{t_1}(\loc,\con_1)+\RRR_{t_2}(\loc,\con_2)-\texttt{doublecount}$. We then set $\RRR^0_t(\loc,\con,\inn):=\max(\RRR^0_t(\loc,\con,\inn),\texttt{new})$ where $\bot$ is once again assumed to be a minimal element.

At the end of this procedure, we set $\RRR_t=\RRR_t^0$.

For correctness, assume that $\RRR^0_t(\loc,\con,\inn)=\tau \neq \bot$ is obtained from $\loc,\con_1, \con_2, \inn_1, \inn_2$ as above. Let $D_1=(\chi_{t_1}^\downarrow, A_1)$ and $D_2=(\chi_{t_2}^\downarrow, A_2)$ be polytrees witnessing $\RRR_{t_1}(\loc,\con_1, \inn_1)$ and $\RRR_{t_2}(\loc,\con_2, \inn_2)$ correspondingly. 
Recall from the proof of Theorem \ref{thm:additive} that common vertices of $D_1$ and $D_2$ are precisely $\chi(t)$, $loc=A_1\cap A_2$ and $\inn$ specifies the number of parents of every $b\in\chi(T)$ in $D=D_1\cup D_2$. Numbers of active connected components of $D$ and $D_1$ are $N$ and $N_1$ correspondingly. Observe that $D$ can be constructed from $D_1$ by adding vertices and arcs of $D_2$. As $\con_1\cap \con_2 = \trcl(\widetilde \loc)$, we can only add a path between vertices in $\chi(t)$ if it didn't exist in $D_1$. Hence $\frac{1}{2}|\con_2 \setminus \trcl(\widetilde \loc)|$ specifies the number of paths between vertices in $\chi(t)$ via forgotten vertices of $\chi_{t_2}^\downarrow$. The equality $N_1-N=\frac{1}{2}|\con_2 \setminus \trcl(\widetilde \loc)|$ means that adding every such path decreases the number of active connected components of $D_1$  by one. As $D_1$ is a polytree, $D$ is a polytree as well, so $D\in \PT_t$. The snapshot of $D$ in $t$ is $(\loc, \con,\inn)$ and $\score(D)=\sum_{ab\in A(D)} f_b(a) =  \sum_{ab\in A_1} f_b(a) + \sum_{ab\in A_2} f_b(a) -  
 \sum_{ab\in loc} f_b(a) = \score(D_1)+\score(D_2)-\texttt{doublecount}=\RRR_{t_1}(\loc,\con_1,\inn_1)+\RRR_{t_2}(\loc,\con_2, \inn_2)-\texttt{doublecount}=\tau$. So $D$ witnesses that $\RRR_t(\loc,\con,\inn)\ge \tau$.

For the converse, assume that $\RRR_t(\loc,\con,\inn)= \tau \ne \bot $ and $D$ is a polytree witnessing this. Let $D_1$ and $D_2$ be restrictions of $D$ to $\chi_{t_1}^\downarrow$ and $\chi_{t_2}^\downarrow$ correspondingly, then $A(D_1)\cap A(D_2)=\loc$, in particular $D=D_1\cup D_2$. Let $(\loc,\con_i,\inn_i)$ be the snapshot of $D_i$ in $t_i$, $i=1,2$. $D=D_1\cup D_2$ is a polytree, so any pair of vertices in $\chi(t)$ can not be connected by different paths in $D_1$ and $D_2$, i.e. $\con_1\cap \con_2 = \trcl(\widetilde \loc)$. By the procedure of our algorithm,
$\RRR^0_t(\loc,\con,\inn)\ge \RRR_{t_1}(\loc,\con_1,\inn_1)+\RRR_{t_2}(\loc,\con_2,\inn_2)-\texttt{doublecount} \ge \score (D_1) + \score (D_2) - \sum_{ab\in \loc} f_b(a) = \score (D) = \tau.$

Hence the resulting record $\RRR_t$ is correct, which concludes the correctness proof of the algorithm. 

Since the nice tree-decomposition $\mathcal{T}$ has $\bigoh(n)$ nodes, the runtime of the algorithm is upper-bounded by $\bigoh(n)$ times the maximum time required to process each node. This is dominated by the time required to process join nodes, for which there are at most $(2^{k^2})^3((q+1)^k)^2= 8^{k^2}\cdot (q+1)^{2k}$ branches corresponding to different choices of $\loc, \con_1, \con_2, \inn_1, \inn_2$. Constructing $\trcl(\con_1\cup \con_2)$ and computing numbers of active connected components can be done in time $\bigoh(k^3)$. Computing $\texttt{doublecount}$ and $\inn$ takes time at most $\bigoh(k^2)$. So the record for a join node can be computed in time $2^{\bigoh(k^2)}\cdot q^{\bigoh(k)}$. Hence, after we have computed a width-optimal tree-decomposition for instance by Bodlaender's algorithm~\cite{Bodlaender96}, the total runtime of the algorithm is upper-bounded by $2^{\bigoh(k^2)}\cdot q^{\bigoh(k)} \cdot n$.

\end{proof}
\fi

The situation is, however, completely different for \PLadd: unlike \BNSLadd, this problem is in fact polynomial-time tractable. Indeed, it admits a simple reduction to the classical minimum edge-weighted spanning tree problem.

\iflong
\begin{observation}
\fi
\ifshort
\begin{observation}[]
\fi
\label{obs:PLeasy}
\PLadd\ is polynomial-time tractable.
\end{observation}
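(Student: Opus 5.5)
The plan is to reduce \PLadd\ to a maximum-weight spanning forest computation on the superstructure graph $G_\III$, which can then be solved by Kruskal's or Prim's algorithm. The key point is that in the additive representation without a bound $q$, the score of a digraph $D$ decomposes over its arcs:
\[\score(D)=\sum_{v\in V}f_v(\emptyset)+\sum_{ab\in A(D)} f_b(a).\]
Here I am reading the additive representation as allowing a constant term $f_v(\emptyset)$ per vertex. If the definition forces $f_v(\emptyset)=0$, that term simply vanishes.

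The constant $\sum_v f_v(\emptyset)$ does not depend on $D$, so we must maximise $\sum_{ab\in A(D)} f_b(a)$ over all polytrees $D$. A polytree has no in-degree restriction; the only requirement is that its skeleton is a forest. So for every edge $\{a,b\}$ of the skeleton we may freely choose the orientation. We take $ab$ if $f_b(a)\ge f_a(b)$ and $ba$ otherwise. Doing this independently for each edge never creates a directed cycle, because the skeleton is acyclic, and it never creates a 2-cycle, because each skeleton edge receives exactly one orientation.

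The problem therefore becomes the following. Give each undirected edge $\{a,b\}$ of $G_\III$ the weight $w(\{a,b\})=\max(f_b(a),f_a(b))\geq 0$, and find a forest in $G_\III$ of maximum total weight. Pairs outside $G_\III$ can be ignored, since both scores are zero for them and adding such edges never helps. All weights are nonnegative, so a maximum-weight spanning forest is optimal. The greedy algorithm computes one in polynomial time; its correctness follows from the graphic matroid, as in the proof of Theorem~\ref{thm:PLfixedq}.

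Finally we orient the chosen edges as described and compare the total with $\ell$. Correctness needs two directions:
\begin{itemize}
\item Every polytree $D$ has $\score(D)-\sum_v f_v(\emptyset)\le w(\text{skeleton of }D)$, and the right-hand side is at most the maximum forest weight.
\item The oriented maximum forest is a polytree that attains this value.
\end{itemize}
The only step needing some care is the decomposition of the score together with the independence of orientations. Both follow directly from additivity and the absence of any in-degree bound, so I do not expect a real obstacle.
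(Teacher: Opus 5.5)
Your proposal is correct and follows essentially the same route as the paper: weight each superstructure edge by $\max(f_a(b),f_b(a))$, compute an optimal spanning tree (forest), orient each chosen edge in its better direction, and for the converse bound each arc's score by its edge weight. Your write-up is somewhat more careful than the paper's sketch: you ask for a \emph{maximum}-weight forest where the paper writes ``minimum'', you explain why nonnegative weights make a spanning forest optimal, and you check that independent orientations of a forest create no directed cycles or 2-cycles.
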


\iflong
\begin{proof}
Consider an the superstructure graph $G$ of an instance $\III=(V,\FFF,\ell)$ of \PLadd\ where we assign to each edge $ab\in E(G)$ a weight $w(ab)=\max{f_a(b), f_b(a)}$, and recall that we can assume w.l.o.g. that $G$ is connected. Each spanning tree $T$ of $G$ with weight $p$ can be transformed to a DAG $D$ over $V$ with a score of $p$ and whose skeleton is a tree by simply replacing each edge $ab$ with the arc $ab$ or $ba$, depending on which achieves a higher score. On the other hand, each solution to $\III$ can be transformed into a spanning tree $T$ of the same score by reversing this process. The claim then follows from the fact that a minimum-weight spanning tree of a graph can be computed in time $\bigoh(|V|\cdot \log |V|)$.
\end{proof}
\fi

This coincides with the intuitive expectation that learning simple, more restricted networks could be easier than learning general networks. We conclude our exposition with an example showcasing that this is not true in general when comparing \PL\ to \BNSL. Grüttemeier et al.~\cite{GKM21} recently showed that \PLneq\ is \W$[1]$-hard when parameterized by the number of \emph{dependent vertices}, which are vertices with non-empty sets of candidate parents in the non-zero representation. For \BNSLneq\ we can show:

\iflong
\begin{theorem}
\fi
\ifshort
\begin{theorem}[]
\fi
\label{thm:depfpt}
\BNSLneq\ is fixed-parameter tractable when parameterized by the number of dependent vertices.
\end{theorem}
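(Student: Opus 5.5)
Let $d$ denote the number of dependent vertices, i.e., vertices $v$ with $\Gamma_f(v)\neq\emptyset$, and let $X$ be the set of these vertices. Every vertex $u\in V\setminus X$ has only the empty parent set available with nonzero score, so it contributes $f_u(\emptyset)=0$ whatever its parents are. We may therefore assume w.l.o.g.\ that every vertex outside $X$ is a source in the solution: deleting all incoming arcs of such vertices never decreases the score and never creates a cycle. Similarly, for each $v\in X$ we may assume that $P_D(v)\in\Gamma_f(v)\cup\{\emptyset\}$, since any other parent set scores $0$ and can be replaced by $\emptyset$. Hence a solution is fully determined by choosing, for each $v\in X$, one candidate set $P_v\in\Gamma_f(v)\cup\{\emptyset\}$. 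The question is whether the resulting digraph, with arc set $\{uv \mid v\in X, u\in P_v\}$, is acyclic.

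Every vertex outside $X$ is a source, so a directed cycle can only pass through vertices of $X$. Thus acyclicity depends only on the sets $P_v\cap X$. My plan is to guess a topological ordering $\sigma$ of $X$, which gives at most $d!$ branches. For a fixed $\sigma$, a choice of parent sets is consistent exactly when $P_v\cap X$ consists only of vertices preceding $v$ in $\sigma$. Under this constraint the choices for different $v$ are independent. So for each $v\in X$ we pick the candidate $P\in\Gamma_f(v)\cup\{\emptyset\}$ maximizing $f_v(P)$ subject to $P\cap X\subseteq\{x\mid x<_\sigma v\}$, and sum these maxima.

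For correctness, any DAG induces a topological order on $X$ and is feasible for that branch. Conversely, the digraph built from a branch is acyclic, because every arc enters $X$ either from a source outside $X$ or from an earlier vertex of $X$ in $\sigma$. We answer \YES\ iff some branch reaches at least $\ell$. The running time is $d!\cdot|\III|^{\bigoh(1)}$. This can be improved to $2^d\cdot|\III|^{\bigoh(1)}$ with the standard subset dynamic program: define $\mathrm{best}(S)$ for $S\subseteq X$ as the optimum over orders of $S$, computed by taking the maximum over a last vertex $v\in S$ of $\mathrm{best}(S\setminus\{v\})$ plus the best candidate of $v$ whose intersection with $X$ lies in $S\setminus\{v\}$.

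I expect no step to be a real obstacle. The only points needing care are the justification that non-dependent vertices can be made sources and that zero-scoring parent sets can be dropped; both follow directly from the non-negativity of scores and the non-zero representation.
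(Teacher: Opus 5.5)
Your proof is correct, but it enumerates the structure on the dependent vertices differently from the paper. The paper branches over the arc set $A\subseteq A_X$ that the solution induces on $X$, giving at most $3^{k^2}$ branches. For each $x\in X$ it takes the best candidate in $\Gamma_f(x)$ whose intersection with $X$ is \emph{exactly} the set of $A$-in-neighbours of $x$, and the solution is $A$ together with arcs from non-dependent vertices into $X$. You instead branch over a topological order $\sigma$ of $X$, and only require $P\cap X$ to be \emph{contained} in the $\sigma$-predecessors of $v$.

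Your route buys two things. First, a better running time: $d!$ branches, or $2^d\cdot|\III|^{\bigoh(1)}$ with your subset DP, versus $2^{\bigoh(d^2)}$. Second, acyclicity comes for free. Branching over arbitrary $A\subseteq A_X$ is only sound once cyclic $A$ are discarded, a check the paper's write-up leaves implicit. For example, with $f_x(\{z\})=f_y(\{x\})=f_z(\{y\})=1$, the cyclic choice $A=\{zx,xy,yz\}$ would report a score of $3$, while the true optimum is $2$.

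The paper's branching is cruder but more literal: it guesses the sub-DAG on $X$ itself, so no argument about topological orders is needed. You also make explicit the two normalizations the paper uses tacitly: non-dependent vertices become sources, and non-candidate parent sets are replaced by $\emptyset$.

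One small wording point: your sentence about $u\notin X$ mixes two readings of ``dependent''. If $\Gamma_f(u)=\emptyset$, then no parent set of $u$, not even $\emptyset$, has nonzero score. If instead dependent means having a non-empty candidate set, then $f_u(\emptyset)$ may be positive. In that case you simply add the constant $\sum_{u\notin X}f_u(\emptyset)$ to every branch, and nothing else in the argument changes.
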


\iflong
\begin{proof}
Consider an algorithm $\mathbb{B}$ for \BNSLneq\ which proceeds as follows. First, it identifies the set $X$ of dependent vertices in the input instance $\III=(V,\FFF,\ell)$, and then it branches over all possible choices of arcs with both endpoints in $X$, i.e., it branches over each arc set $A\subseteq A_X$. This results in at most $3^{k^2}$ branches, where $k=|X|$. In each branch and for each vertex $x\in X$, it now finds the highest-scoring parent set among those which precisely match $A$ on $X$, i.e., it first computes $\parentsets^A(x)=\SB P\in parentsets(x)\SM \forall w\in X\setminus \{x\}: w\in P \iff wp\in A \SE$ and then computes $\score^A(x)=\max_{P\in \parentsets^A(x)}(f_x(P))$. It then compares $\sum_{x\in X}\score^A(x)$ to $\ell$; if the former is at least as large as the latter in at least one branch then $\mathbb{B}$ outputs ``\YES'', and otherwise it outputs no.

The runtime of this algorithm is upper-bounded by $\bigoh(3^{k^2}\cdot k\cdot |\III|)$. As for correctness, if $\III$ admits a solution $D$ then we can construct a branch such that $\mathbb{B}$ will output ``\YES'': in particular, this must occur when $A$ is equal to the arcs of the subgraph of $D$ induced on $X$. On the other hand, if $\mathbb{B}$ outputs ``\YES'' for some choice of $A$, we can construct a DAG $D$ with a score of at least $\ell$ by extending $A$ as follows: for each $x\in X$ we choose a parent set $P\in \parentsets^A(x)$ which maximizes $f_x(P)$ and we add arcs from each vertex in $P\setminus X$ to $x$. The score of this DAG will be precisely $\sum_{x\in X}\score^A(x)$, which concludes the proof.
\end{proof}
\fi	

\section{Concluding Remarks}

Our results provide a new set of tractability results that counterbalance the previously established algorithmic lower bounds for \textsc{Bayesian Network Structure Learning} and \textsc{Polytree Learning} on ``simple'' superstructures. In particular, even though the problems remain \W$[1]$-hard when parameterized by the vertex cover number of the superstructure~\cite{OrdyniakS13,GKM21}, we obtained fixed-parameter tractability and a data reduction procedure using the feedback edge number and its localized version. Together with our lower-bound result for treecut width, this completes the complexity map for \BNSL\ w.r.t.\ virtually all commonly considered graph parameters of the superstructure. Moreover, we showed that if the input is provided with an additive representation instead of the non-zero representation considered in previous theoretical works, the problems admit a dynamic programming algorithm which guarantees fixed-parameter tractability w.r.t.\ the treewidth of the superstructure. We remark that all of our results assume that the score functions are provided explicitly; future work could also consider the behavior of the problem when these functions are supplied by a suitably defined oracle.

This theoretical work follows up on previous complexity studies of the considered problems, and as such we do not claim any immediate practical applications of the results. That being said, it would be interesting to see if the polynomial-time data reduction procedure introduced in Theorem~\ref{thm:kernel} could be adapted and streamlined 
\iflong (and perhaps combined with other reduction rules which do not provide a theoretical benefit, but perform well heuristically)  \fi
to allow for a speedup of previously introduced heuristics for the problem~\cite{ScanagattaCCZ15,ScanagattaCCZ16}, at least for some sets of instances.
\ifshort
Finally, we believe that the \emph{local feedback edge number} can be used to push the boundaries of tractability for other problems of interest as well.
\fi

\iflong
Last but not least, we'd like to draw attention to the \emph{local feedback edge number} parameter introduced in this manuscript specifically to tackle \BNSL. This generalization of the feedback edge set has not yet been considered in graph-theoretic works; while it is similar in spirit to the recent push towards measuring the so-called \emph{elimination distance} of a graph to a target class, it is not captured by that notion. Crucially, we believe that the applications of this parameter go beyond \BNSL; all indications suggest that it may be used to achieve tractability also for purely graph-theoretic problems where previously only tractability w.r.t. \fes\ was known.
\fi

\medskip
\noindent
\textbf{Acknowledgments.}\quad The authors acknowledge support by the Austrian Science Fund (FWF, projects P31336 and Y1329). 

Moreover, the authors wish to thank Juha Harviainen, Frank Sommer and Manuel Sorge for informing them of the mistake in the proof of Theorem~\ref{thm:PLfixedq} in the extended abstract.

\bibliographystyle{plain}
\bibliography{references}

\end{document}